\newtheorem{theorem}{Theorem}[section]
\newtheorem{corollary}[theorem]{Corollary}
\newtheorem{lemma}[theorem]{Lemma}
\newtheorem{proposition}[theorem]{Proposition}
\newtheorem{claim}[theorem]{Claim}
\theoremstyle{definition}
\newtheorem{definition}[theorem]{Definition}
\Crefname{claim}{Claim}{Claims}
\crefname{theorem}{Theorem}{Theorems}
\crefname{claim}{Claim}{Claims}
\crefname{proposition}{Proposition}{Propositions}
\crefname{definition}{Definition}{Definition}
\crefname{lemma}{Lemma}{Lemma}
\crefname{corollary}{Corollary}{Corollaries}
\crefname{ineq}{inequality}{inequalities}
\Crefname{equation}{Equation}{Equations}
\newcommand{\N}{\mathbb{N}}
\newcommand{\Z}{\mathbb{Z}}
\newcommand{\ra}{\rightarrow}
\newcommand{\la}{\leftarrow}
\DeclareMathOperator*{\E}{\mathbb{E}}
\newcommand{\poly}{\operatorname{poly}}
\newcommand{\polylog}{\operatorname{polylog}}
\renewcommand{\L}{\bm{\mathsf{L}}}
\newcommand{\RL}{\bm{\mathsf{RL}}}
\newcommand{\NL}{\bm{\mathsf{NL}}}
\newcommand{\BPL}{\bm{\mathsf{BPL}}}
\renewcommand{\P}{\bm{\mathsf{P}}}
\newcommand{\BPP}{\bm{\mathsf{BPP}}}
\newcommand{\ZPsL}{\bm{\mathsf{ZP^*L}}}
\newcommand{\SPACE}{\bm{\mathsf{SPACE}}}
\newcommand{\prBPL}{\bm{\mathsf{prBPL}}}
\newcommand{\prRL}{\bm{\mathsf{prRL}}}
\newcommand{\promBPL}{\bm{\mathsf{promise}\textbf{-}\mathsf{BPL}}}
\newcommand{\va}{v_{\text{acc}}}
\newcommand{\vs}{v_{st}}
\newcommand{\eps}{\varepsilon}
\newcommand{\zo}{\{0,1\}}
\newcommand{\cF}{\mathcal{F}}
\newcommand{\SAMP}{\mathtt{SAMP}}
\newcommand{\BIAS}{\mathtt{BIAS}}
\newcommand{\DEC}{\mathtt{DEC}}
\newcommand{\MAJ}{\mathtt{MAJ}}
\newcommand{\SUC}{\mathtt{SUC}}
\newcommand{\ADV}{\mathtt{ADV}}
\newcommand{\tp}{\tilde{p}}
\DeclareMathOperator{\Cov}{Cov}
\DeclareMathOperator{\Var}{Var}
\newcommand{\circC}{\mathcal{C}}
\newcommand{\circB}{\mathcal{B}}
\newcommand{\circF}{\mathcal{F}}
\newif\ifdraft
\newif\ifnames
\title{Certified Hardness vs. Randomness for Log-Space}
\author{Edward Pyne\thanks{Supported by an Akamai Presidential Fellowship. Part of this work was done while visiting the Simons Institute Program on Meta-Complexity.}\\MIT\\\texttt{epyne@mit.edu} \and Ran Raz\thanks{Supported by a Simons Investigator Award and by the National Science Foundation grant No. CCF-2007462.}\\ Princeton University \\ \texttt{ranr@cs.princeton.edu} \and Wei Zhan\thanks{Supported by a Simons Investigator Award and by the National Science Foundation grant No. CCF-2007462.}\\ Princeton University \\ \texttt{weizhan@cs.princeton.edu}}\fi
\begin{document}
\begin{titlepage}
\maketitle
\begin{abstract}
Let $\mathcal{L}$ be a language that can be decided in linear space and let $\epsilon >0$ be any constant. Let $\cal{A}$ be the exponential hardness assumption that for every $n$, membership in $\cal{L}$ for inputs of length~$n$ cannot be decided by circuits of size smaller than $2^{\epsilon n}$. 
We prove that for every function $f :\{0,1\}^* \rightarrow \{0,1\}$, computable by a randomized logspace algorithm $R$, there exists a deterministic logspace algorithm $D$ (attempting to compute $f$), such that on every input $x$ of length $n$, the algorithm $D$ outputs one of the following:
\begin{enumerate}
\item 
The correct value $f(x)$.
\item
The string: ``I am unable to compute $f(x)$ because the hardness assumption $\cal{A}$ is false'', followed by a (provenly correct) circuit of size smaller than $2^{\epsilon n'}$ for membership in $\cal{L}$ for inputs of length~$n'$, for some $n' = \Theta (\log n)$; that is, a circuit that refutes $\cal{A}$.
\end{enumerate}
Moreover, $D$ is explicitly constructed, given $R$.

We note that previous works on the hardness-versus-randomness paradigm give derandomized algorithms that rely blindly on the hardness assumption. If the hardness assumption is false, the algorithms may output incorrect values, and thus a user cannot trust that an output given by the algorithm is correct. Instead, our algorithm $D$ verifies the computation so that it never outputs an incorrect value. Thus, if $D$ outputs a value for $f(x)$, that value is certified to be correct. 
Moreover, if $D$ does not output a value for $f(x)$, it alerts that the hardness assumption was found to be false, and refutes the assumption.

Our next result is a universal derandomizer for $\BPL$ (the class of problems solvable by bounded-error randomized logspace algorithms)\footnote{Our result is stated and proved for $\promBPL$, but we ignore this difference in the abstract.}: 
We give a deterministic algorithm $U$ that takes as an input a randomized logspace algorithm $R$ and an input $x$ and simulates the computation of $R$ on $x$, deteriministically. Under the widely believed assumption $\BPL=\L$, the space used by $U$ is at most $C_R \cdot \log n$ (where $C_R$ is a constant depending on~$R$). Moreover, for every constant $c \geq 1$, if $\BPL\subseteq \SPACE[(\log(n))^{c}]$ then the space used by $U$ is at most $C_R  \cdot (\log(n))^{c}$.

Finally, we prove that if optimal hitting sets for ordered branching programs exist then there is a deterministic logspace algorithm that, given a black-box access to an ordered branching program $B$ of size $n$, estimates the probability that $B$ accepts on a uniformly random input.
This extends the result of (Cheng and Hoza CCC 2020), who proved that an optimal hitting set implies a white-box two-sided derandomization. 
\end{abstract}

\vfill
\textbf{Keywords:} pseudorandomness, space-bounded computation
\thispagestyle{empty}
\end{titlepage}
\newpage

\section{Introduction}

In a recent work, Girish, Raz and Zhan studied the power of untrusted randomness~\cite{GRZ23}. One of their main observations was that randomized logspace computations are verifiable using only $O(\log n)$ random bits.
More precisely, 
every problem in $\BPL$ has a streaming proof between a
randomized logspace prover and a randomized logspace verifier, where the verifier uses only $O(\log n)$ random bits
and has a read-once one-way access to the proof that is streamed by the prover.
In other words, the prover provides a polynomial-length proof that is streamed to the verifier and the verifier
can check whether the computation was performed correctly using only $O(\log n)$ random bits.

This raises the following intriguing possibility. Try to replace the random string of the prover by, say, the digits of $\pi$. In most cases, that should work and the computation should be performed correctly, as the digits of $\pi$ seem unrelated to most computations. In the rare cases that the computation is not performed correctly, the verifier will figure that out, as the verification will fail with high probability, so no harm is done. Moreover, since the digits of $\pi$ can be generated deterministically in small space, the prover is now deterministic so the verifier can fully simulate the prover. Since the verifier uses only $O(\log n)$ random bits, the verifier can just try all possibilities for these random bits so that the verifier is also deterministic\footnote{Derandomizing the verifier by trying all possibilities for its random bits is not possible when the prover is randomized, or when the prover cannot be simulated by the verifier, since the verifier needs multi-access to the output of the prover in order to do that.}, and thus the entire interaction is now simulated by a deterministic logspace algorithm.

This approach won't derandomize all randomized logspace computations,
since the digits of $\pi$ can be generated by a small space algorithm. The digits of $\pi$ were not designed to fool randomized computations. The next logical step is to try to use sequences that were designed to fool randomized computations, namely, candidate constructions of pseudorandom generators, such as pseudorandom generators that are based on the hardness-versus-randomness paradigm~\cite{Sha,Yao,BM84,NW94,IW97,STV,KM02}. Such pseudorandom generators fool randomized computations, within a certain complexity class, assuming that certain widely-believed hardness assumptions hold.

Let $G$ be a candidate construction for a pseudorandom generator, designed to fool randomized logspace computations, and assume that $G$ uses logarithmic space and $O(\log n)$ random bits.
We can try to replace the random string of the prover by pseudorandom sequences that are generated by $G$. 
Since, if we do so, both the prover and the verifier use a logarithmic number of random bits, 
the verifier can simulate the entire interaction by a deterministic logspace algorithm. If one of the $\poly(n)$ possibilities for the $O(\log n)$ random bits of the generator results in a valid proof that the computation was performed correctly, the verifier will figure that out and accept that computation. If all $\poly(n)$ possibilities fail, the verifier will alert that the generator failed. Thus, the algorithm never outputs an incorrect value.

If the generator $G$ is based on the hardness-versus-randomness paradigm, a failure of the generator implies that the hardness assumption that the generator is based on is false. 
Moreover, proofs that are based on the hardness-versus-randomness paradigm are typically constructive, in the sense that they show that if the generator fails then one can 
construct a circuit that refutes the hardness assumption.
If we can prove that constructing that circuit can be done in deterministic logspace then the verifier can obtain a circuit that refutes the hardness assumption that $G$ is based on.

We use a variant of the hardness-versus-randomness pseudorandom generator of Klivans and van Melkebeek~\cite{KM02} that builds on~\cite{NW94,IW97,STV} to derandomize $\BPL$ (assuming an exponential hardness assumption). Based on this generator, we obtain the following result. 

\begin{theorem}\label{thm:hard}
Let $\cal{L}$ be a language that can be decided in linear space and let $\epsilon >0$ be a constant. Let $\cal{A}$ be the exponential hardness assumption that for every $n$, membership in $\cal{L}$ for inputs of length~$n$ cannot be decided by circuits of size smaller than $2^{\eps n}$. 
Let $f :\{0,1\}^* \rightarrow \{0,1\}$ be a function computable by a randomized logspace algorithm $R$.  Then, there exists a deterministic logspace algorithm $D$ (explicitly given from $R$), such that on every input $x$ of length $n$, the algorithm $D$ outputs one of the following:
\begin{enumerate}
\item 
The correct value $f(x)$.
\item
The string: ``Unable to compute $f(x)$ because the hardness assumption $\cal{A}$ is false'', followed by a (provenly correct) circuit of size smaller than $2^{\eps n'}$ for membership in $\cal{L}$ for inputs of length~$n'$, for some $n' = \Theta (\log n)$; that is, a circuit that refutes $\cal{A}$.
\end{enumerate}
\end{theorem}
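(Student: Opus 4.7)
The plan is to combine three ingredients. First, by the streaming-proof result of~\cite{GRZ23}, the $\BPL$ computation of $f$ by $R$ admits a protocol between a randomized-logspace prover $P$ and a randomized-logspace verifier $V$ in which $V$ uses only $s_V = O(\log n)$ private random bits, reads a polynomial-length proof from $P$ once, and outputs a value in $\{0,1,\bot\}$; completeness and soundness hold with constant error, which may be reduced to $1/\poly(n)$ by standard amplification inside $\BPL$. Second, from $\mathcal{L}$ on input length $n' = \Theta(\log n)$ and the hardness assumption $\mathcal{A}$, one builds a hardness-versus-randomness PRG $G : \{0,1\}^{O(\log n)} \to \{0,1\}^{m(n)}$ along the Nisan--Wigderson / Impagliazzo--Wigderson / Sudan--Trevisan--Vadhan / Klivans--van Melkebeek line~\cite{NW94,IW97,STV,KM02} that fools every logspace distinguisher with error $1/\poly(n)$. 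Because $\mathcal{L} \in \SPACE[O(n)]$ and $n' = O(\log n)$, every bit of the truth table of $\mathcal{L}$ at length $n'$ is computable in deterministic logspace, and hence so is $G$. Third, since both the PRG seed and the verifier's random tape have length $O(\log n)$, the nested enumeration over all pairs $(s,r)$ is affordable in logspace.

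The algorithm $D$ on input $x$ then proceeds as follows. For each seed $s$, simulate $P(x; G(s))$ to produce a proof stream $\pi_s$, and for each verifier tape $r$ simulate $V(x, \pi_s; r)$. If some seed $s$ makes $V$ output a specific bit $b \in \{0,1\}$ on more than a $2/3$-fraction of tapes $r$, then $D$ outputs $b$; otherwise $D$ enters \emph{refutation mode}. Since the proof is read once, only the current proof symbol and the verifier's state need be stored, so $D$ is deterministic logspace. Correctness of the accepting branch is by soundness of~\cite{GRZ23}: whatever prover strategy (honest or not) produced $\pi_s$, any value the verifier emits with probability $> 2/3$ over $r$ must equal $f(x)$. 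Under $\mathcal{A}$, completeness also holds: the PRG $1/\poly(n)$-fools the verifier's acceptance predicate, so some seed $s$ achieves acceptance probability within $1/\poly(n)$ of the honest value, comfortably past the $2/3$ threshold.

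If no seed passes the majority test, consider the deterministic-logspace statistic $T(y) := \Pr_r[V(x, P(x; y); r) \in \{0,1\}]$. By completeness, $\E_{y \sim U_m}[T(y)]$ is close to $1$, while the failure of every seed gives $T(G(s)) \le 2/3$ for all $s$. Hence $T$ is a deterministic-logspace distinguisher between $U_m$ and $G(U_{O(\log n)})$ with advantage $\ge 1/\poly(n)$. By the \emph{constructive} reconstruction underlying~\cite{KM02}, such a distinguisher, together with $\poly(n)$-length advice (local-list-decoding advice for the STV step and seed advice for the Nisan--Wigderson layer), yields a circuit $C$ of size $< 2^{\epsilon n'}$ computing $\mathcal{L}$ at length $n'$.

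The main obstacle is to perform this reconstruction deterministically in logspace and to certify $C$, so that $D$'s second output is \emph{provenly} correct as required. I plan to handle this by brute-force enumeration. Since $n' = O(\log n)$, both $C$ and all reconstruction advice have description length $\poly(n)$, so one can cycle through all candidate advice strings in logspace. For each candidate, write down the implied circuit $C$ and verify it by computing $\mathcal{L}(z)$ for every $z \in \{0,1\}^{n'}$ via the linear-space decider (which on length $n' = O(\log n)$ is a logspace computation) and checking $C(z) = \mathcal{L}(z)$. The reconstruction analysis guarantees that at least one candidate yields a correct $C$ once the distinguisher $T$ is plugged in, so $D$ outputs the first verified $C$ together with the explanatory string. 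Derandomizing the STV list-decoding step---classically randomized---becomes part of this enumeration, and this is where I expect the bulk of the technical work to go.
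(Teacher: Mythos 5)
Your high-level architecture matches the paper's: enumerate the $O(\log n)$-bit PRG seeds, verify each run, and if all fail, extract a distinguisher, reconstruct a circuit for $\mathcal{L}$ at length $n'=\Theta(\log n)$, and certify it by evaluating $\mathcal{L}$ on all $2^{n'}=\poly(n)$ inputs. Your use of the \cite{GRZ23} streaming protocol for the verification step is a legitimate alternative that the paper explicitly acknowledges (it instead tests, for each state $v$ of the branching program, that the next PRG bit conditioned on reaching $v$ is nearly unbiased, which directly yields a next-bit-predictor OBP).

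The genuine gap is in your refutation mode: you propose to ``cycle through all candidate advice strings in logspace'' where the advice has length $\poly(n)$. A logspace machine cannot even \emph{store} a single $\poly(n)$-bit advice candidate, let alone iterate over all $2^{\poly(n)}$ of them; brute-force enumeration is only feasible over objects of description length $O(\log n)$. This is precisely the obstacle the paper identifies as its main technical contribution: each probabilistic-method step in the standard reconstruction (choosing random lines for the Reed--Muller/Hadamard decoder, the amplification in the derandomized XOR lemma, the random $r_i$'s in Goldreich--Levin) is replaced by an averaging sampler or small-bias generator with seed length $O(\log n)$, so that the existential argument survives but the enumeration is over only $\poly(n)$ seeds, each storable in logspace, with each resulting candidate circuit verified against $f$ by brute force. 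Without this redesign of the reconstruction itself, your plan does not go through. A second, related issue you elide: converting your distinguisher $T$ into a next-bit predictor via the hybrid argument requires fixing a random suffix of length up to $\poly(n)$, which again cannot be enumerated in logspace; the paper sidesteps this by exploiting the ordered-branching-program structure so that the predictor reads only the prefix (\Cref{lem:tester1}), and you would need an analogous argument for the prover--verifier composition.
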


In other words, while the algorithms given by all previous derandomization results based on the hardness-versus-randomness paradigm rely blindly on the hardness assumption, and may output incorrect values if the hardness assumption is false, our algorithm $D$ never outputs an incorrect value: If the hardness assumption is true, $D$ always outputs the correct value $f(x)$. If the hardness assumption is false $D$ still outputs the correct value $f(x)$, or alerts that the hardness assumption is false, and refutes the assumption. 

In particular, if the hardness assumption used in Theorem~\ref{thm:hard} is true (and there are several such assumptions that are widely believed to be true), Theorem~\ref{thm:hard} gives a deterministic logspace algorithm that always outputs the correct value of $f(x)$ and that value is certified to be correct.
In that sense, if the hardness assumption is true, the algorithm given by Theorem~\ref{thm:hard} effectively functions as a full derandomizer for the class $\BPL$.

We note that in previous works, the, so called, reconstruction step, in which a circuit that refutes the hardness assumption is constructed (when the generator fails), required the use of randomness in multiple places and was not known to be computable in logspace.
Our main technical contribution in the proof of Theorem~\ref{thm:hard} is carefully designing the pseudorandom generator and 
proving that for that generator, all parts of the reconstruction step can be done in deterministic logspace. We view this result, that the reconstruction can be done in deterministic logspace, as a separate contribution of our work.

Let us go back to the observation that
every problem in $\BPL$ has a streaming proof between a
randomized logspace prover and a randomized logspace verifier, where the verifier uses only $O(\log n)$ random bits
and has a read-once one-way access to the proof that is streamed by the prover~\cite{GRZ23}. The proof is based on a protocol where
the prover computes and streams the probability to reach each state of the branching program, underlying a randomized algorithm, and the verifier
checks that these probabilities are consistent between each two consecutive time steps.

While we can use this approach to prove Theorem~\ref{thm:hard}, we give here a slightly different and more direct proof, where the verification is done by verifying that the distribution of each bit that the pseudorandom generator outputs, conditioned on reaching each state of the underlying branching program, is close to uniform. These conditional probabilities are computed directly by checking all possible outputs of the pseudorandom generator. This is possible because the generator uses only a logarithmic number of random bits and hence the number of possibilities is polynomial in~$n$. This approach is related to the work of Nisan~\cite{Nisan93}, who used a similar approach to check if a given polynomial-size set of strings is sufficiently random to simulate a randomized computation with high accuracy, in his proof that $\BPL\subset \ZPsL$ (where $\ZPsL$ is zero-error randomized logspace, where the machine has \textit{two-way} access to the random tape).

The discussion above implies that the output of a candidate pseudorandom generator $G$ (that uses logarithmic space and $O(\log n)$ random bits) can be verified as being sufficiently random for a given randomized logspace computation. With this in mind, it is natural to try to find a pseudorandom generator that will be sufficiently good for a given randomized logspace computation, by an exhaustive search over all possible generators (using the fact that the generator is described by a constant size Turing machine). The final goal is to obtain a universal derandomizer, that will do at least as good as the best pseudorandom generator.

We explore this idea and discover that an even stronger result can be proved. We explicitly construct a universal derandomizer $U$ for $\prBPL$ ($\promBPL$, the class of promise problems solvable by bounded-error randomized logspace algorithms) that runs in the best possible deterministic space bound on $\prBPL$. 

More precisely,
we give a deterministic algorithm $U$ that takes as an input a randomized logspace algorithm $R$ and an input $x$ and simulates the computation of $R$ on $x$. 
Under the widely believed assumption $\prBPL=\L$, the space used by $U$ is at most $C_R \cdot \log n$ (where $C_R$ is a constant depending on~$R$). More generally, for every constant $c \geq 1$, if $\prBPL\subseteq \SPACE[(\log(n))^{c}]$ then the space used by $U$ is at most $C_R  \cdot (\log(n))^{c}$.
We emphasize that the point here is that $U$ is deterministic and is explicitly given, rather than an existential result.
We remark that a similar result is not known in the time bounded case, and seems hard to obtain. We also remark that
the best currently known space bound on $\BPL$ is $\prBPL\subseteq \SPACE[(\log(n))^{1.5-o(1)}]$~\cite{SZ99,Hoza21}.

\begin{theorem} \label{universal-derandomizer}
Let $U$ be the deterministic algorithm that is explicitly given in Section~\ref{section:universal}, that takes as an input a randomized logspace algorithm $R$ and an input $x$. Assume that the probability that $R$ accepts on $x$ is either  $\leq 1/4$ or $ \geq 3/4$.  Then, if the probability that $R$ accepts on $x$ is $\leq 1/4$, the output of $U$ on input $R,x$ is 0 and if the probability that $R$ accepts on $x$ is $\geq 3/4$, the output of $U$on input $R,x$ is 1.
Moreover, for every constant $c \geq 1$, if $\prBPL\subseteq \SPACE[(\log(n))^{c}]$ then the space used by $U$ is at most $C_R  \cdot (\log(n))^{c}$, (where $C_R$ is a constant depending on~$R$).
\end{theorem}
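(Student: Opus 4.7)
The universal derandomizer $U$ will combine a Levin-style enumeration over deterministic algorithms with a derandomized version of the streaming proof protocol from \cite{GRZ23}. On input $(R, x)$, $U$ will first reduce in logspace to the problem of estimating the acceptance probability of the ordered branching program $B = B_{R,x}$ of polynomial size. The goal is then to find any deterministic algorithm that produces a trustworthy estimate of $\Pr[B \text{ accepts}]$, while using small space and being discoverable by a short canonical enumeration.

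The verification subroutine is built as follows. By \cite{GRZ23}, estimating the acceptance probability of $B$ admits a streaming proof in which the verifier is randomized logspace, uses only $O(\log n)$ random bits, and has a constant completeness-soundness gap. Because the verifier uses only $O(\log n)$ random bits, $U$ will exhaustively enumerate the $\poly(n)$ random strings and compute the exact acceptance fraction, thereby derandomizing the verifier. The honest prover's job is to stream, layer by layer, $O(\log n)$-bit approximations of the probabilities of reaching each state of $B$; each such approximation is itself a $\prBPL$-computable quantity, so any deterministic simulator of $\prBPL$ running in space $s(n)$ can play the prover in space $s(n)$.

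With the verification subroutine in hand, $U$ will enumerate Turing machines $M_1, M_2, \ldots$ in canonical order. For increasing space bounds $s = 1, 2, 4, \ldots$, $U$ simulates each $M_j$ with $|M_j| \leq s$ as the prover for up to $2^{O(s)}$ steps using space at most $s$, and feeds the streamed proof together with its claimed answer into the derandomized verifier; it outputs the first answer that passes. Soundness, preserved under exhaustive derandomization, guarantees that any output of $U$ is correct. Under the hypothesis $\prBPL \subseteq \SPACE[(\log n)^c]$, there exists a fixed Turing machine $M^*$ of constant description that derandomizes $\prBPL$ in space $C_1(\log n)^c$; once the enumeration reaches $M^*$ with $s = \Theta((\log n)^c)$, the honest proof produced by $M^*$ is accepted and $U$ halts. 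At most $O_R(1)$ machines are tried before $M^*$, so the total space is $C_R (\log n)^c$.

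The hard part will be setting up the streaming protocol so that, simultaneously, (i) the verifier uses only $O(\log n)$ random bits and therefore can be derandomized by brute-force enumeration, and (ii) the prover's task lies in $\prBPL$ and can therefore be simulated by any universal derandomizer. Both requirements have to be extracted from (and may require a light adaptation of) the protocol of \cite{GRZ23}; in particular, the prover's streamed probabilities must be chosen at a granularity coarse enough that they are $\prBPL$-computable, yet fine enough for the verifier to infer the $3/4$ vs.\ $1/4$ distinction. A secondary technicality is controlling the overhead of the Levin enumeration, which is handled by the fact that only $O_R(1)$ machines are examined before the hypothesised $M^*$ is found, so each simulation's space budget dominates the total.
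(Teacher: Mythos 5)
Your overall architecture matches the paper's: enumerate Turing machines and space bounds Levin-style, use the candidate machine to produce estimates of the probability of reaching each state of the branching program $B$, and accept an answer only if the estimates pass a deterministic local-consistency check whose soundness makes the output trustworthy. (The paper plugs the estimates into the Cheng--Hoza test directly rather than derandomizing the \cite{GRZ23} streaming verifier, but these are essentially interchangeable here, since the prover in your scheme is a deterministic machine that the verifier can re-simulate.)

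There is, however, a genuine gap at exactly the step you flag as ``the hard part'' and then leave unresolved: the claim that each streamed approximation of a reaching probability ``is itself a $\prBPL$-computable quantity.'' $\prBPL$ is a class of promise \emph{decision} problems, whereas your prover must output, for each state $v$, a \emph{specific canonical} number that a deterministic $\SPACE[(\log n)^c]$ machine would reproduce. The natural randomized estimator (sample random walks and round to a grid of width $n^{-O(1)}$) is not pseudo-deterministic when the true value $\E[B_{\ra v}]$ lies within the sampling error of a grid boundary; for such states it does not solve any $\prBPL$ promise problem, so the hypothesis $\prBPL\subseteq\SPACE[(\log n)^c]$ buys you nothing, and the completeness direction (that the honest machine $M^*$ produces a proof the verifier accepts) fails. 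The paper closes this gap with a Saks--Zhou-style rounding offset: it defines the estimation task $f_c$ with an extra input $r$ under the promise that $\E[B]+r$ is polynomially far from every multiple of the grid width, proves this promise problem is $\prBPL$-complete, and has the universal algorithm additionally enumerate over $2n^2$ candidate offsets $r$, with a counting argument showing that some single $r$ satisfies the promise simultaneously for all $n^2$ states $B_{\ra v}$. Your proposal needs this (or an equivalent device for making the prover's outputs canonical) before the termination and space bound can be established.
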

We note that one can bound the space used by $U$, in Theorem~\ref{universal-derandomizer}, also by $C  \cdot (\log(N))^{c}$, where~$C$ is a universal constant and $N$ is an upper bound on both the length and width of the branching program underlying the computation of~$R$ on $x$
(under the assumption $\prBPL\subseteq \SPACE[(\log(n))^{c}]$).
(See Theorem~\ref{thm:universal}).

Another prior work that is related to our work, as well as to~\cite{Nisan93, GRZ23}, is the work of Cheng and Hoza~\cite{CH20}. Cheng and Hoza proved that an optimal hitting set generator (the one-sided analogue of a pseudorandom generator) for logspace would imply $\BPL=\L$ (whereas the direct conclusion of such a hitting set generator would only be $\RL=\L$)~\cite{CH20}. To prove this result, they show how to use the hitting set generator to guess (approximations of) the probability to reach each state of a branching program, and 
they then check that these probabilities are consistent between each two consecutive time steps (similarly to and prior to~\cite{GRZ23}).

The proof given by Cheng and Hoza uses the explicit description of the underlying branching program. Our final result is an extension of their result to the case where the branching program is not given explicitly, but rather one only has oracle access to it, that is, access as a black box. 

\begin{theorem}[Informal: formally stated and proved in Section~\ref{sec:BBT}]\label{thm:equiv:int}
    Assume that optimal explicit hitting set generators for width $n$, length $n$ ordered branching programs exist. Then optimal deterministic samplers for width $n$, length $n$ ordered branching programs (with oracle access to the branching program) exist.
\end{theorem}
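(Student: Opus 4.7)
The plan is to follow the Cheng--Hoza white-box template for turning an optimal hitting set generator into a derandomizer, and to show that each step can be carried out with only oracle access to the branching program $B$. Cheng--Hoza approximate $\Pr_x[B(x)=1]$ by iteratively guessing the layer-by-layer reachability distribution of $B$ and verifying each guess by invoking the HSG on a small derived branching program; the only part of their argument that truly uses the explicit description of $B$ is the construction of these derived programs. My plan is to push all uses of the description into ``evaluation-only'' uses: since the HSG has seed length $O(\log n)$, the verification step amounts to evaluating the derived program on $\poly(n)$ HSG outputs, and each such evaluation can be performed by a fixed deterministic logspace procedure that queries $B$ at a bounded number of chosen inputs.

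To carry this out, I would first re-express the Cheng--Hoza guesses in input-only language. In the white-box argument the algorithm guesses the probability $p_{i,v}$ of reaching BP state $v$ at layer $i$, which is meaningless in the black-box model. Instead, the sampler would guess input-defined quantities --- for instance the conditional acceptance probabilities $f_i(x_{1:i}) = \Pr_{x_{i+1:n}}[B(x)=1]$ over input prefixes, or equivalently the acceptance probabilities of the length-$(n{-}i)$ sub-OBPs obtained from $B$ by fixing the first $i$ input bits --- and have the HSG verify consistency across adjacent layers by checking relations of the form $f_i(x_{1:i}) = \tfrac12 f_{i+1}(x_{1:i}0) + \tfrac12 f_{i+1}(x_{1:i}1)$. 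Each consistency check is a question about an OBP that is the composition of a constant-size checker sitting on top of oracle queries to $B$, which can be evaluated on an HSG seed with $O(1)$ queries to $B$, even though the check is never produced symbolically.

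The main obstacle is ensuring that the derived OBPs used in verification stay within the width-$n$, length-$n$ class for which the HSG is guaranteed to work, and that the iteration's error and space budgets compose correctly. A naive composition of a checker with oracle queries to $B$ could blow up the effective width or length, so I would carefully bound the overhead of the checker and, if necessary, adjust the HSG's parameters by constant factors so that the derived programs stay within the hittable class. Once these bookkeeping points are settled, the Cheng--Hoza error analysis transfers essentially unchanged: the iteration produces $\Pr_x[B(x)=1]$ to within any target accuracy, the workspace stays $O(\log n)$, and the total number of oracle queries to $B$ remains $\poly(n)$, matching the optimal parameters claimed by the theorem.
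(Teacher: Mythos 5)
Your high-level instinct is right and matches the paper's starting point: replace state-indexed probabilities by input-indexed ones (acceptance probabilities of the sub-programs obtained by fixing a prefix), and exploit the $O(\log n)$ seed length to enumerate. But the proposal stops exactly where the real difficulties begin, and the claim that ``the Cheng--Hoza error analysis transfers essentially unchanged'' is where it fails. First, there are exponentially many prefixes, so you can only maintain estimates for the $\poly(n)$ prefixes of the form $H(x)_{1..i}$. To chain a consistency check $f_i(w)\approx\tfrac12 f_{i+1}(w0)+\tfrac12 f_{i+1}(w1)$ into a check at the next layer, you must identify the state reached by $w b$ with the state reached by some $H(x')_{1..i+1}$ --- and with only oracle access you cannot test state equality, only \emph{indistinguishability} (agreement of $B$ on all HSG continuations). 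The paper's test is built entirely around this relation, and its completeness needs Cheng--Hoza's lemma that indistinguishable states have acceptance probabilities within $\eps$ of each other; your proposal never confronts the fact that the checks you can actually run conflate genuinely different states. Second, and more seriously for soundness: some states $B[v,b]$ are reached by \emph{no} HSG output, so no consistency check constrains their estimates at all, and the estimate at $v$ is then unconstrained from below. The paper handles this with a separate argument (\Cref{lem:reachsmall}--\Cref{lem:qgood}): any such ``unverified'' state has reaching probability at most $2\eps$ (else the HSG would hit the program that accepts upon reaching its successor), so all unverified states together perturb $\E[B]$ by only $O(\eps)$. This is a new argument, not present in the white-box analysis, and without it the test as you describe it is simply not sound.

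A third gap is that you never say where the estimates come from. ``The sampler would guess input-defined quantities'' is not implementable in deterministic logspace: the space of candidate estimate tables is enormous. The paper (following Cheng--Hoza) derives each candidate table from a single seed of a second hitting set, by using blocks of $H_2(z)$ as sample strings for empirical averaging, and proves a good seed exists because the event ``all $nw$ empirical averages are $\eps$-accurate'' is computed by an ordered branching program of width and length $\poly(nw/\eps)$ with acceptance probability above $1/2$, which the HSG must therefore hit. This existential-OBP argument (the program $\EST$ in \Cref{lem:goodest}) is an essential component you would need to supply, and it is also the reason the hypothesis must cover programs of length $\poly(nw/\eps)$ rather than literally length $n$.
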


We remark that Cheng and Hoza~\cite{CH20} prove a version of this result for \textit{constant} width branching programs (in addition to their non-black-box result on length $n$, width $n$ programs that capture $\BPL$). They state a black-box equivalence in the $\BPL$ vs $\L$ regime as an open question, which we resolve. Our result complements equivalent results in the $\BPP$ vs $\P$ regime; several prior results~\cite{ACR96,BF99,ACRT99,GVW11,CH20} show that a hitting set for general circuits implies a deterministic sampler for general circuits. Thus, we close the gap in understanding between time-bounded and space-bounded derandomization with regards to this question.

\[\begin{tikzcd}
    & {\text{One-Sided}} & {\text{Two-Sided}} \\
    {\text{Black-Box}} & \bullet & \bullet \\
    {\text{White-Box}} && \bullet
    \arrow["{\text{       [Theorem~\ref{thm:equiv:int}]}}", from=2-2, to=2-3]
    \arrow["{\text{[CH20]}}"', from=2-2, to=3-3]
\end{tikzcd}\]
We hope that our progress can eventually be used to get an equivalence in the \textit{white-box} regime, that is, that $\prRL=\L \implies \prBPL=\L$. Such a result was established in the time-bounded regime by~\cite{BF99}. 

A common theme in all of our results is that our proofs exploit, and further demonstrate, the intriguing idea that in some settings randomized logspace computations can be verified.

\subsection{Related Work}

There have been four decades of work attempting to derandomize randomized logspace, that is, prove $\BPL=\L$. This work has taken (at least) two major forms: constructions of pseudorandom generators (PRGs) and their generalizations~\cite{Nisan92,INW94,NZ96, GR14,FK18,MRT19,HZ20} and white-box derandomizations~\cite{SZ99,RR99,Reingold08, RTV06,AKMPSV20,Hoza21}. This has resulted in a varied landscape, with explicit constructions of PRGs that obtain highly nontrivial but (presumably) suboptimal seed lengths, white-box derandomizations, and candidate constructions. We emphasize that these candidate constructions consist of both generators whose security follows from a certain hardness assumption~\cite{KM02}, and candidates that are not known to follow from a hardness assumption (for instance, the XOR of two small-bias distributions has been proposed as a candidate by Reingold and Vadhan~\cite{LV17}). 

As mentioned above, besides~\cite{KM02}, the works most relevant to ours are~\cite{Nisan93,CH20,GRZ23}. All these works have an element of verification that a randomized computation was performed correctly (in various forms and for various purposes), an idea that is also central in our work.

\section{Preliminaries}
We first define notation related to pseudorandom generators and branching programs.
\begin{definition}
    Given a distribution $D$ over a space $[S]$, let $x\la D$ represent drawing $x\in [S]$ from $D$. We let $U_n$ denote the uniform distribution over $\zo^n$.
\end{definition}
\begin{definition}
    Given a pseudorandom generator (PRG) $G:\zo^s\ra\zo^n$
    and a function $f:\zo^n\rightarrow\mathbb{R}$, we use $\E[f(U_n)]$ and $\E[f(G(U_s)]$ to denote the expectation of $f$ under uniformly distributed inputs and pseudorandom inputs generated by $G$ respectively, that is,
    \[\E[f]=\E_{x\la U_n}[f(x)],\qquad \E_G[f]=\E_{y\la U_s}[f(G(y))].\]
    And we say that $G$ \textbf{$\eps$-fools} $f$ if $\left|\E[f] -\E_G[f]\right|\leq\eps$.
\end{definition}

\newcommand{\Bv}{B_{\ra v}}
\begin{definition}
    An ordered branching program (OBP) $B$ of length $n$ and width $w$ is a directed acyclic graph whose vertices (or states) are partitioned into $n+1$ layers $V_0,\ldots,V_n$ where $|V_i|\leq w$. For each $i<n$ and $v\in V_i$, there are two outgoing edges, labeled with $0$ and $1$ respectively, that leads into $V_{i+1}$. $V_0$ constains a single state $v_0$ which is the starting state, and each state in $V_n$ is labeled with a real number as the output of the branching program. Unless otherwise specified, we assume that the labels are either $0$ or $1$.

    For each $v\in V_i$, $\sigma \in \zo^k$ and $u\in V_{i+k}$, we say $B[v,\sigma]=u$ if $B$ transitions from state $v$ to state $u$ following the edges labeled by the bits in $\sigma$.
    We can think of $B$ as a function on $\zo^n$ such that for every $x\in\zo^n$, $B(x)$ is the label on the output state $B[v_0,x]$. For each $v\in V_i$, let $\Bv$ be an OBP of length $i$ and width $w$ such that $\Bv(x)=1$ if and only if $B[v_0,x_{1..i}]=v$.

    For each $v\in V_i$, let 
    \[p_{\ra v}=\Pr[B[\vs,U_i]=v], \quad\quad p_{v\ra}=\Pr[B[v,U_{n-i}]=\va].\]
\end{definition}

\section{Effective Hardness to Randomness}
We prove \Cref{thm:hard} in several stages. In the first stage, we show a testing procedure that, given a candidate PRG and an ordered branching program, either certifies that the PRG fools the branching program, or outputs a branching program that acts as a next-bit predictor for $G$. We then show how to go from such a next-bit predictor to a counterexample to the hardness assumption.

\subsection{Verifiable PRGs for Logspace}\label{subsec:whiteBoxLC}
We first show that there is a logspace verifier for PRGs (with logarithmic seed) against logspace OBPs, which detects when a PRG fails and outputs an example OBP that the PRG fails to fool. 
To formalize this, we recall the notion of a next-bit-predictor.
\begin{definition}\label{def:NBP}
    Given a function $G:\zo^s\ra\zo^n$, a branching program $T:\zo^i\ra\zo$ for $i<n$ is an $\eps$-\textbf{next-bit-predictor} for $G$ if $\Pr_{x\la U_s}[T(G(x)_{1..i})=G(x)_{i+1}]>1/2+\eps$.
\end{definition}
Note that the uniform distribution is $0$-next-bit-predictable, even for a computationally unbounded distinguisher.

We prove in this section the following lemma:
\begin{lemma}\label{lem:tester1}
    For every error function $\eps(n)$ computable in space $O(\log n)$, there is a deterministic algorithm that, given as input an OBP $B$ of length $n$ and width $w$, and the black-box oracle access to a PRG $G:\zo^s\ra\zo^n$, runs in space $O(s+\log(nw))$, and either
    \begin{enumerate}
        \item Confirms that $G$ $\eps\cdot n$-fools $B$; Or
        \item Outputs an OBP $T$ of length at most $n$ and width $w$ that is an $\eps/2$-next-bit predictor for $G$.
    \end{enumerate}
\end{lemma}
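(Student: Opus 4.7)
The plan is a Yao-style hybrid argument adapted to the OBP setting. For each $i \in \{0, \ldots, n\}$, let $H_i$ be the hybrid distribution on $\zo^n$ that takes the first $i$ bits from $G(x)$ for a seed $x \la U_s$ and fills the remaining $n-i$ bits with an independent uniform string, so $H_0 = U_n$ and $H_n = G(U_s)$. If $G$ fails to $\eps n$-fool $B$, there must be an index $i$ with $|\E[B(H_i)] - \E[B(H_{i+1})]| > \eps$. Writing $v_{x,i}$ for the state of $B$ reached after reading $G(x)_{1..i}$ and $\delta(v) = p_{B[v,1]\ra} - p_{B[v,0]\ra}$, a direct calculation gives
\[
\E[B(H_{i+1})] - \E[B(H_i)] = \tfrac{1}{2}\,\E_{x\la U_s}\bigl[(2G(x)_{i+1}-1)\,\delta(v_{x,i})\bigr].
\]

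For each layer $i$, I would build a candidate next-bit predictor $T_i$ that, on input $y \in \zo^i$, simulates $B$ to reach state $v \in V_i$ and outputs $\beta_i(v) = 1$ iff $\Pr_{x}[G(x)_{i+1} = 1 \mid v_{x,i} = v] \geq 1/2$. Since its transitions are simply the first $i$ layers of $B$ and its sink labels are the bits $\beta_i(v)$, $T_i$ is automatically an OBP of length $i < n$ and width $w$. Setting $\mu(v) = 2\Pr_{x}[G(x)_{i+1} = 1 \mid v_{x,i} = v] - 1$, an elementary conditioning argument shows that the advantage of $T_i$ equals $\tfrac{1}{2}\E_x[|\mu(v_{x,i})|]$; combined with the previous display, the bound $|\delta|\leq 1$ yields
\[
|\E[B(H_{i+1})] - \E[B(H_i)]| = \tfrac{1}{2}\bigl|\E_x[\mu(v_{x,i})\,\delta(v_{x,i})]\bigr| \leq \tfrac{1}{2}\E_x[|\mu(v_{x,i})|],
\]
so the advantage of $T_i$ upper-bounds the $i$-th hybrid gap.

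The algorithm iterates $i$ from $0$ to $n-1$, computes the advantage of $T_i$, and halts by outputting $T_i$ the first time this advantage exceeds $\eps/2$. If no $T_i$ succeeds, then every hybrid gap is at most $\eps/2$, and summing over the $n$ steps yields total distinguishing gap at most $n\eps/2 \leq \eps n$, so the algorithm certifies $\eps n$-fooling. Every quantity the algorithm must compute --- the bit $\beta_i(v)$, the advantage of $T_i$, and intermediate states $v_{x,i}$ --- is a count of seeds $x \in \zo^s$ satisfying a joint event on $B$'s trajectory under $G(x)$, and each such count fits in $O(s)$ bits. A count is maintained by an outer pass over the seed space (space $s$), a streamed query to the oracle for $G(x)$, and a step-by-step simulation of $B$ (space $O(\log w)$ for the current state and $O(\log n)$ for the layer index). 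The output $T_i$ is emitted by copying the first $i$ layers of $B$ and then writing out $\beta_i(v)$ one state at a time by re-invoking this count routine.

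The main obstacle is respecting the tight budget $O(s+\log(nw))$: standard layered dynamic programming would require $\Omega(nw)$ working space, and exact acceptance probabilities have denominators as large as $2^n$, so they cannot be stored explicitly. The fix is to never store a probability at all; instead every comparison is reduced to comparing two integer seed-counts, each recomputed from scratch via a nested pass over $\zo^s$. A secondary subtlety is that $T_i$ must itself be an OBP of width $w$, but the natural sign-based predictor has exactly this form because its only nontrivial state is $B$'s own current state, which fits in $\log w$ bits by hypothesis.
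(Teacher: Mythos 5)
Your proposal is correct, and the algorithm it describes is exactly the paper's: for each layer $i$ you enumerate seeds, count for each state $v\in V_i$ how often the next PRG bit is $1$ versus $0$ conditioned on reaching $v$, compare the summed absolute bias to $\eps$ (equivalently, the predictor's advantage to $\eps/2$), and on failure output the first $i$ layers of $B$ relabeled by the majority bit --- this is precisely the paper's predictor $T$, and your quantity $\E_x[|\mu(v_{x,i})|]$ is identically the paper's $\sum_{v\in V_i}|\E_G[N_v]|$. The only difference is in the soundness direction (all checks pass $\Rightarrow$ $G$ $\eps n$-fools $B$): you run the classical Yao hybrid argument through the distributions $H_0,\ldots,H_n$ and the continuation probabilities $p_{v\ra}$, bounding each hybrid gap by the predictor's advantage via $|\delta(v)|\le 1$, whereas the paper never invokes continuation probabilities and instead telescopes the layerwise $\ell_1$ distance $\sum_{v\in V_i}\left|\E[\Bv]-\E_G[\Bv]\right|$, showing it grows by at most $\sum_{v}|\E_G[N_v]|$ per layer. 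The two bookkeeping schemes yield the same $\eps n$ bound and impose the same space budget; yours is the more standard textbook route, the paper's is marginally more self-contained since it reasons only about the prefix distribution of $G$.
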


The main idea behind this proof has appeared before for different purposes~\cite{Nisan93,CH20,GRZ23}, and in fact (modifications of) all these results can be used to prove \Cref{lem:tester1}. However, we give a self-contained proof. 

To prove \Cref{lem:tester1}, we first define a series of potential distinguishers, with the property that each can be evaluated in logspace. Each distinguisher measures the bias of the next bit in the PRG upon reaching a particular state.

\begin{definition}
    Given an OBP $B$ of length $n$, for every $i<n$ and $v\in V_i$, let $N_v:\zo^{i+1}\ra \{-1,0,1\}$ be the function defined as:
    \[N_v(x) = \begin{cases}
    1 & \text{if }\Bv(x)=1 \text{ and } x_{i+1}=1\\
    -1 & \text{if }\Bv(x)=1 \text{ and } x_{i+1}=0\\
    0 & \text{otherwise.}
    \end{cases}\]
    Furthermore, $N_v$ is computable in logspace given $B$ and $v$.
\end{definition}

When $x$ is uniformly random, $\Bv(x)$ and $x_{i+1}$ are independent, and therefore $\E[N_v]=0$ for all $v$. Consequentially, our verifier checks that $|\E_G[N_v]|$ is small for all $v$, where we feed the first $i+1$ bits of the PRG output to $N_v$. We first show its soundness:

\begin{lemma}\label{lem:fool}
    Given an OBP $B$ of length $n$, suppose that for every $i$, $\sum_{v\in V_i}|\E_G[N_v]|\leq \eps$. Then $G$ $\eps\cdot n$-fools $B$.
\end{lemma}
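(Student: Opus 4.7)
The plan is to apply a standard hybrid argument. For $0 \leq i \leq n$, define the hybrid distribution $H_i$ on $\zo^n$ whose first $i$ coordinates are drawn as $G(y)_{1..i}$ for $y \la U_s$ and whose remaining $n-i$ coordinates are independently uniform. Then $H_0 = U_n$ and $H_n = G(U_s)$, so by the triangle inequality
\[
\left|\E[B(U_n)] - \E_G[B]\right| \leq \sum_{i=0}^{n-1} \left|\E[B(H_{i+1})] - \E[B(H_i)]\right|.
\]
Thus it suffices to prove that each summand is at most $\tfrac{1}{2}\sum_{v \in V_i}|\E_G[N_v]| \leq \eps/2$, which gives the stated bound with a factor of two to spare.

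Next I would analyze one hybrid step by conditioning on the state $v \in V_i$ reached after the first $i$ bits. Let $q_v = \Pr_{y \la U_s}[B[\vs, G(y)_{1..i}] = v]$ and $\alpha_v = \Pr_{y \la U_s}[G(y)_{i+1} = 1 \mid B[\vs, G(y)_{1..i}] = v]$, and denote the two children of $v$ by $v_0 = B[v,0]$ and $v_1 = B[v,1]$. Because both $H_i$ and $H_{i+1}$ place independent uniform bits after position $i+1$, the conditional acceptance probability of $B$ given arriving at $v$ equals $\tfrac{1}{2}(p_{v_0\ra} + p_{v_1\ra})$ under $H_i$ (the $(i{+}1)$-th bit being uniform) and $(1-\alpha_v)p_{v_0\ra} + \alpha_v p_{v_1\ra}$ under $H_{i+1}$ (the $(i{+}1)$-th bit being the PRG's output). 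The per-state difference is therefore $(\alpha_v - \tfrac{1}{2})(p_{v_1\ra} - p_{v_0\ra})$, whose absolute value is at most $|\alpha_v - \tfrac{1}{2}|$ since $p_{v_0\ra}, p_{v_1\ra} \in [0,1]$.

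Finally, I would identify $\E_G[N_v]$ in terms of $q_v$ and $\alpha_v$. By the definition of $N_v$,
\[
\E_G[N_v] = \Pr_G\!\left[B[\vs, G_{1..i}] = v \text{ and } G_{i+1} = 1\right] - \Pr_G\!\left[B[\vs, G_{1..i}] = v \text{ and } G_{i+1} = 0\right] = q_v(2\alpha_v - 1),
\]
so $|\E_G[N_v]| = 2 q_v |\alpha_v - \tfrac{1}{2}|$. Weighting the per-state bound by $q_v$ and summing over $v \in V_i$ gives
\[
\left|\E[B(H_{i+1})] - \E[B(H_i)]\right| \leq \sum_{v \in V_i} q_v \left|\alpha_v - \tfrac{1}{2}\right| = \tfrac{1}{2}\sum_{v \in V_i} |\E_G[N_v]| \leq \eps/2,
\]
and summing over the $n$ hybrid steps finishes the proof.

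There is no serious obstacle here: the argument is pure bookkeeping. The one place that needs care is ensuring the conditioning on the intermediate state $v$ respects the product structure of the hybrids (so that the tail continuation is genuinely uniform in both $H_i$ and $H_{i+1}$), which is exactly what makes only the single bit at position $i+1$ contribute to the hybrid-step gap and lets $\E_G[N_v]$ capture that gap cleanly.
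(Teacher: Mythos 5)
Your proof is correct, but it follows a genuinely different route from the paper's. You run a hybrid argument over the bit positions: the hybrids $H_0,\dots,H_n$ interpolate between $U_n$ and $G(U_s)$ one coordinate at a time, and each step's gap is controlled by the bias of bit $i+1$ conditioned on the state reached, weighted by the reaching probability $q_v$ under $G$ --- which is exactly $\tfrac{1}{2}\sum_{v\in V_i}|\E_G[N_v]|$. The conditioning is legitimate because $H_i$ and $H_{i+1}$ induce the same distribution on the first $i$ bits and both have an independent uniform suffix after position $i+1$, so only bit $i+1$ contributes; your bookkeeping identity $\E_G[N_v]=q_v(2\alpha_v-1)$ is right. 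The paper instead works forward through the program without hybrids: it tracks the quantity $\sum_{v\in V_i}\left|\E[\Bv]-\E_G[\Bv]\right|$ (the $\ell_1$ distance between the layer-$i$ state distributions under $U_n$ and under $G$) and shows it grows by at most $\sum_{v\in V_i}|\E_G[N_v]|\leq\eps$ per layer, concluding at layer $n$ since the outputs are binary. The two arguments are dual in flavor: the paper uses prefix reaching probabilities and never touches the suffix acceptance probabilities $p_{v\ra}$, whereas you use the suffix probabilities and exploit $p_{v_0\ra},p_{v_1\ra}\in[0,1]$. Your version gains a factor of $2$ (yielding $\eps n/2$), while the paper's version additionally yields the stronger conclusion that the entire final-layer state distribution is $\eps n$-close in $\ell_1$, not just that the specific acceptance probability is preserved. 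Both are valid; no gap.
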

\begin{proof}
    As every edge from layer $V_i$ goes into layer $V_{i+1}$, for every $i<n$ we have
    \begin{align*}
        &\ \sum_{v\in V_{i+1}}\left|\E[\Bv]-\E_G[\Bv]\right| \\
        \leq &\ \sum_{v\in V_i}\sum_{b\in\zo}
        \left|\Pr_{x\la U_n}[\Bv(x)=1\wedge x_{i+1}=b]-\Pr_{x\la G(U_s)}[\Bv(x)=1\wedge x_{i+1}=b]\right|.
    \end{align*}
    Notice that by the definition of $N_v$, we have
    \begin{align*}
        \E[N_v] & =\Pr_{x\la U_n}[\Bv(x)=1\wedge x_{i+1}=1]-\Pr_{x\la U_n}[\Bv(x)=1\wedge x_{i+1}=0] \\
        & = 2\Pr_{x\la U_n}[\Bv(x)=1\wedge x_{i+1}=1]-\E[\Bv] \\
        & = \E[\Bv] - 2\Pr_{x\la U_n}[\Bv(x)=1\wedge x_{i+1}=0],
    \end{align*}
    and the above holds similarly under pseudorandomness generated by $G$. Therefore we further have
    \begin{align*}
        \sum_{v\in V_{i+1}}\left|\E[\Bv]-\E_G[\Bv]\right|
        & \leq \sum_{v\in V_i}\left|\E[\Bv]-\E_G[\Bv]\right|+ \sum_{v\in V_i}\left|\E[N_v]-\E_G[N_v]\right| \\
        & = \sum_{v\in V_i}\left|\E[\Bv]-\E_G[\Bv]\right|+\sum_{v\in V_i}\left|\E_G[N_v]\right|.
    \end{align*}
    With the assumption that $\sum_{v\in V_i}|\E_G[N_v]|\leq \eps$ and the fact that $\E[B_{\ra v_0}]=\E_G[B_{\ra v_0}]=1$, we conclude that $\sum_{v\in V_n}\left|\E[\Bv]-\E_G[\Bv]\right|\leq \eps\cdot n$. As the output labels are binary, this means that $\left|\E[B]-\E_G[B]\right|\leq \eps\cdot n$, i.e. $G$ $\eps\cdot n$-fools $B$.
\end{proof}
\begin{proof}[Proof of \Cref{lem:tester1}]
    For every $i<n$, the algorithm iterates through every $v\in V_i$ and all the possible seeds for $G$, computes $\sum_{v\in V_i}|\E_G[N_v]|$ and checks if it is at most $\eps$. This can be done in space $O(s+\log(nw))$. If all such checks pass, we have by \Cref{lem:fool} that $G$ $\eps\cdot n$-fools $B$.

    Otherwise, we find some $i<n$ such that $\sum_{v\in V_i}|\E_G[N_v]|> \eps$. Let $T$ be an OBP of length $i$ that is the same as $B$ from layer $V_0$ to $V_i$, such that the output label on each $v\in V_i$ is $1$ if $\E_G[N_v]\geq 0$, and $0$ if $\E_G[N_v]<0$. Such an OBP is of size at most that of $B$, and can be constructed in space $O(s+\log nw)$. We have
    \begin{align*}
        \MoveEqLeft{\Pr_{x\la G(U_s)}[T(x_{1..i})=x_{i+1}]}\\ 
        &=
        \sum_{\substack{v\in V_i \\ \E_G[N_v]\geq 0}}\Pr_{x\la G(U_s)}[\Bv(x)=1\wedge x_{i+1}=1]
        + \sum_{\substack{v\in V_i \\ \E_G[N_v]< 0}}\Pr_{x\la G(U_s)}[\Bv(x)=1\wedge x_{i+1}=0] \\
        &= \sum_{v\in V_i}\frac{1}{2}\left(\E_G[\Bv]+\left|\E_G[N_v]\right|\right) \\
        &> \frac{1}{2}(1+\eps). \qedhere
    \end{align*}
\end{proof}

\subsection{Refutable Hardness Assumptions in Logspace}

\Cref{lem:tester1} shows that, given an alleged PRG for logspace, we can use it to either successfully derandomize a logspace comptation, or explicitly output a counterexample to the PRG. The results of the hardness-versus-randomness paradigm claim that PRGs exist under certain hardness assumptions. Combining these results with \Cref{lem:tester1}, we can derandomize logspace computations given any alleged hard function, or determine that the hardness assumption does not hold. However, \Cref{thm:hard} requires a stronger guarantee from the algorithm - if the hardness assumption does not hold, the algorithm needs to output a small circuit that falsifies this assumption. Obtaining this result is the primary contribution of this subsection.

We first recall the result of Klivans and van Melkebeek \cite{KM02}.
\begin{theorem}[\cite{KM02}]
    If there is a family of boolean functions $f\in\SPACE[n]$ that is not computable by circuits of size $2^{\varepsilon n}$ for some $\varepsilon>0$, then $\BPL=\L$.
\end{theorem}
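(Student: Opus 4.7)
The plan is to apply the Nisan–Wigderson hardness-versus-randomness paradigm, adapted to the space-bounded setting. First, I would rescale the hardness assumption: fix a constant $c$ (to be chosen below in terms of $\varepsilon$) and set $k = c\log n$. Let $f_k$ denote the restriction of $f$ to inputs of length $k$. Since $f \in \SPACE[n]$, the function $f_k$ is computable in space $O(\log n)$, and by the hypothesis $f_k$ requires circuits of size $2^{\varepsilon k} = n^{\varepsilon c}$, which can be made any desired polynomial in $n$ by taking $c$ large.

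Next, I would instantiate the NW generator $G:\zo^{O(\log n)} \to \zo^n$ from $f_k$ together with a combinatorial design, making sure that both the design and $f_k$ are evaluable in logspace (standard finite-field-based designs suffice). The classical NW reconstruction shows that if $G$ fails to $\eps$-fool some test $C$ of size $s$, then for some hybrid index $i$ one can extract from $C$ a next-bit predictor for $G$, and hard-coding appropriate bits from the design yields a circuit of size $s \cdot \poly(n)$ computing $f_k$. Choosing parameters so that $s \cdot \poly(n) < n^{\varepsilon c}$ contradicts the hardness of $f_k$, and hence $G$ $\eps$-fools every such $C$. Note that the reconstruction step only needs to exist as a mathematical argument here; its efficiency is not required for this theorem.

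To derandomize a $\BPL$ computation $M$ on input $x$ of length $n$: the computation of $M$ using $m = \poly(n)$ random bits is described by an ordered branching program of width and length $\poly(n)$, which in turn is simulated by a circuit of size $\poly(n)$ whose input is the random string. By the previous step, $G$ fools this circuit. The deterministic algorithm that enumerates all $2^{O(\log n)} = \poly(n)$ seeds $y$, evaluates $M$ with randomness $G(y)$, and outputs the majority of the outcomes therefore decides the language correctly. Enumerating seeds, computing $G(y)$, and simulating $M$ are each logspace tasks, so by the standard composition of logspace procedures the whole derandomizer runs in $\L$, yielding $\BPL = \L$.

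The main obstacle I anticipate is parameter bookkeeping: the seed length of the NW generator is $O(k^2/\log s)$, and we simultaneously need this to be $O(\log n)$ while $s$ is a large enough polynomial in $n$ to capture the size of the branching-program-induced circuits and absorb the $\poly(n)$ overhead in reconstruction. This forces a careful choice of $c$ relative to $\varepsilon$ and the desired fooling size. A secondary care point is verifying space-efficient implementability of every ingredient — the design, $f_k$, and $M$ — together with ensuring that composition never blows up space beyond $O(\log n)$; this is routine given that each component is individually logspace.
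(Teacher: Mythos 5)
There is a genuine gap: your argument conflates worst-case and average-case hardness. The hypothesis only says that no circuit of size $2^{\varepsilon n}$ computes $f$ \emph{exactly} on all inputs. But the Nisan--Wigderson security reduction, applied directly to $f_k$, does not produce from a distinguisher a circuit that \emph{computes} $f_k$; it produces a circuit that \emph{approximates} $f_k$, i.e., agrees with it on a $1/2+\eps/n$ fraction of inputs (the next-bit predictor, after hard-wiring the design bits outside $S_{i+1}$, is exactly such an approximator). A small circuit agreeing with $f_k$ on $51\%$ of inputs is perfectly consistent with the assumption that no small circuit computes $f_k$ everywhere, so your claimed contradiction never materializes. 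To instantiate NW you need a function that is hard to approximate to within $1/2+1/\poly(n)$, and getting there from a worst-case hard function is precisely the content of the Impagliazzo--Wigderson/Sudan--Trevisan--Vadhan machinery: the paper's own pipeline for the effective version of this theorem runs through (a) a low-degree extension to get $0.99$-average-case hardness, (b) a derandomized XOR lemma to push the hardness to a $2^{-\Omega(m)}$ fraction of inputs, and (c) Goldreich--Levin to convert multi-bit output hardness into $(1/2+2^{-\Omega(m)})$-unpredictability, before (d) feeding the result into NW. Your proof skips (a)--(c) entirely, and without them the reconstruction step cannot refute the stated hypothesis.

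A secondary, fixable point: the generic NW seed length $O(k^2/\log s)$ that you cite would not be $O(\log n)$ for the parameters you need; one must use the exponential-size, linear-space-constructible combinatorial design of Klivans and van Melkebeek (a family of $2^{\beta s}$ sets of size $\alpha s$ in $[s]$ with pairwise intersections $\leq 2\alpha^2 s$), which gives seed length $O(k)=O(\log n)$ directly. You are right that for this (non-constructive) theorem the reconstruction only needs to exist, not to be computable in logspace; the efficiency of reconstruction is what the paper adds on top of [KM02], not what [KM02] itself requires.
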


Their proof is based on the worst-case hardness vs. randomness results by Imagliazzo and Wigderson \cite{IW97}, and shows how to execute every step in the construction of the Imagliazzo-Wigderson PRG can be executed in deterministic logspace. However, their proof (and all other proofs of the hardness vs. randomness paradigm) does not show that given a branching program (or circuit) that distinguishes the PRG from random (i.e. contradicts the original hardness assumption), there is an efficient deterministic logspace algorithm to produce a circuit for the supposedly hard function. This is for two reasons. First, the conversion from a distinguisher to a next bit predictor (which we address in \Cref{lem:tester1}). Even once we obtain such a predictor, prior approaches used space- and randomness-inefficient probabilistic method arguments to go from a predictor to a worst-case correct circuit for the original function. Our primary contribution in this subsection is to carefully design the PRG and develop an efficient \textit{reconstruction} procedure, given a distinguisher for the constructed PRG. 

This leads to the following theorem:
\begin{theorem}\label{thm:hard_IW}
    For every family of boolean functions $f\in\SPACE[n]$ and $\eps>0$, there is a deterministic algorithm that, given as the input an OBP $B$ of length $n$ and width $w=n$, runs in space $O(\log n)$, and either
    \begin{enumerate}
        \item Outputs $\E[B]$ with $1/4$ error; Or
        \item Outputs a circuit $\circC$ of size $2^{\eps m}$ that computes $f$ on $\zo^m$ where $m = \Theta(\log n)$.
    \end{enumerate}
\end{theorem}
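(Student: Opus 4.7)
The plan is to instantiate the Klivans--van Melkebeek/Impagliazzo--Wigderson hardness-to-randomness pipeline around the tester of \Cref{lem:tester1}, while forcing every step to run in deterministic logspace. From $f$ we first build, in space $O(\log n)$, a PRG $G\colon\zo^s\to\zo^n$ with $s=O(\log n)$ in two stages: (i) a hardness-amplification step that encodes $f$ into an average-case hard function $\hat f$ via a Reed--Muller based locally list-decodable code whose encoder and local list-decoder both admit explicit $O(\log n)$-space descriptions; (ii) a Nisan--Wigderson composition $G(y)_i=\hat f(y|_{S_i})$ for an explicit combinatorial design $\{S_i\}_{i\leq n}$ on $s$ coordinates. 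The parameters are tuned so that if $f$ requires circuits of size $2^{\eps m}$ on inputs of length $m=\Theta(\log n)$, then $G$ fools every width-$n$ length-$n$ OBP.

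Next we call the tester of \Cref{lem:tester1} on $B$ and $G$ with error parameter $1/(4n)$. In case~(1) the tester certifies $|\E[B]-\E_G[B]|\leq 1/4$, and we output $\E_G[B]$ by enumerating all $2^s=\poly(n)$ seeds and simulating $B(G(y))$ in logspace. In case~(2) we receive an OBP $T$ of polynomial size that is a $1/(8n)$-next-bit predictor for $G$ at some position $i+1$, which we hand off to the reconstruction step below.

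The main technical content is the reconstruction. By a standard NW hybrid argument, for some fixing $\alpha$ of the bits $y|_{\overline{S_{i+1}}}$, the OBP $T$ hardwired with $\alpha$ (viewed as a polynomial-size circuit) computes $\hat f$ on a $1/2+\Omega(1/n^2)$ fraction of inputs of length $|S_{i+1}|=\Theta(\log n)$. We find the best $i$ and $\alpha$ by enumerating the $\poly(n)$ choices of $(i,\alpha)$ in logspace, since each candidate's advantage is an expectation over seeds of length $O(\log n)$ composed with an OBP of width $n$, and so is logspace computable. This yields a small circuit $\widetilde C_{\hat f}$ that $\hat f$-approximates. The output circuit is then the logspace-uniform local list-decoder for the code used in stage (i), with oracle gates inlined by copies of $\widetilde C_{\hat f}$ and hardwired with a single short advice string selecting $f$ from the decoded list; this index is itself identified in logspace by testing each of the $\poly(n)$ list candidates against $f$ on all $2^m=\poly(n)$ inputs, using $f\in\SPACE[m]$ with $m=O(\log n)$.

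The hard step, and the one the paper advertises as the main technical contribution, is keeping reconstruction in logspace: earlier hardness-to-randomness proofs invoke randomized amplification, probabilistic list-decoding, or intermediate-table constructions that inflate the workspace far beyond $\log n$. Our resolution is to choose (a) a code whose local list-decoder is a fully deterministic, logspace-uniform circuit family needing only a short hardwired advice string, and (b) an NW design and alphabet over which the advice-fixing reduces to brute force over a polynomial-sized domain. Combined with \Cref{lem:tester1}, these choices give the stated dichotomy in $O(\log n)$ space, producing either a $1/4$-accurate estimate of $\E[B]$ or an explicit circuit of size $2^{\eps m}$ computing $f$ on length-$m$ inputs.
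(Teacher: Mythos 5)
Your top-level structure matches the paper's: instantiate a hardness-based PRG $G$ with seed length $O(\log n)$, run the tester of \Cref{lem:tester1} with error parameter $1/(4n)$, output $\E_G[B]$ by seed enumeration in the good case, and feed the next-bit predictor into a reconstruction procedure in the bad case. The divergence is entirely in how the reconstruction is organized, and that is where the proposal has a genuine gap.

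You compress all of hardness amplification into a single step: a Reed--Muller based \emph{locally list-decodable} code whose decoder handles agreement $1/2+\Omega(1/\poly(n))=1/2+2^{-\Theta(m)}$, and you assert that this decoder is ``a fully deterministic, logspace-uniform circuit family needing only a short hardwired advice string.'' No known construction has this property off the shelf, and this assertion is exactly the technical problem the theorem is supposed to solve, not a tool you may invoke. The standard STV-style list-decoder in this agreement regime uses $\poly(m)$ random bits (random curves through the query point, plus randomness to estimate agreements), and the usual way to remove them is a probabilistic-method fixing over all $2^m$ inputs --- which is nonuniform advice of length $\poly(m)=\polylog(n)$ that you cannot find by brute force in space $O(\log n)$ (enumerating $2^{\poly(m)}$ seeds is superpolynomial). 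Saying the advice is ``short'' and ``identified in logspace by testing each of the $\poly(n)$ list candidates'' conflates two different things: selecting the right element of the list is indeed cheap, but derandomizing the decoder's \emph{internal} coins is not, and you give no mechanism for it. The paper avoids this precisely by refusing to list-decode: it decomposes amplification into (i) unique decoding of Reed--Muller/Hadamard to success $0.99$ (so Berlekamp--Welch suffices and only the choice of lines must be derandomized, done with an $O(m)$-seed averaging sampler), (ii) the derandomized XOR lemma with its probabilistic-method step replaced by a sampler, and (iii) Goldreich--Levin derandomized via small-bias spaces. Each stage has a reconstruction whose residual randomness has seed length $O(m)=O(\log n)$ and can therefore be exhausted in logspace. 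Unless you can exhibit a list-decoder for the $1/2+2^{-\Theta(m)}$ regime whose total randomness is $O(m)$ bits (which would itself be a contribution on par with the paper's), your single-code route does not close in space $O(\log n)$.

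Two smaller points: the NW hybrid step gives a fixing $\alpha$ with advantage $\Omega(1/n)$ (the predictor's advantage $1/(8n)$ directly, since \Cref{lem:tester1} already produced a predictor for a fixed position), not $\Omega(1/n^2)$; and when you hardwire $\alpha$, each other output bit $G(y)_j$ with $j\le i$ still depends on the $|S_{i+1}\cap S_j|$ live coordinates, so you must also argue that the truth tables for these restricted functions (size $2^{O(\rho m)}$ each) are constructible in logspace --- the paper handles this explicitly in \Cref{lemma:NW}.
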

\begin{proof}
    Let $G:\zo^s\ra\zo^n$ be the generator of \Cref{thm:IWrecon} with $\eps=\eps$ and $f=f$ and let $m=m_0$ be the instance size of $f$ used to construct $G$.

    We then apply \Cref{lem:tester1} on $B$ and $G$ with $\eps=1/(4n)$. Of the two possible outcomes:
    \begin{enumerate}
        \item If it is certified that $G$ $\eps\cdot n$-fools $B$. In this case the algorithm computes and outputs $\E_G[B]$ which approximates $\E[B]$ within additive error $1/4$.
        \item Otherwise we get for some $i<n$ an explicit OBP $T$ of length $i$ and width $w$, such that $\Pr_{x\la U_s}[T(G(x)_{1..i})=G(x)_{i+1}]>\frac{1}{2}(1+\eps)$. In other words, $T$ is an $\eps/2=1/8n$ next-bit predictor against $G$ of size at most $n^2$, and $T$ can be evaluated in space $O(\log n)$. Then by \Cref{thm:IWrecon}, we can construct in space $O(\log n)$ a circuit $\circC$ for $f$ on inputs of size $m=\Theta(\log n)$ of size at most $2^{\eps m}$. \qedhere
    \end{enumerate}
\end{proof}
Now \Cref{thm:hard} follows:
\begin{proof}[Proof of \Cref{thm:hard}]
    Given a randomized logspace algorithm $R$ with error probability at most $1/10$ and input $x\in \zo^n$, let $B$ be the branching program representing how $R$ uses its random bits on input $x$, which can be constructed in logspace. By assumption $R$ uses $s=O(\log n)$ bits of space and hence at most $2^{s+O(1)}=\poly(n)$ random coins, and hence $B$ has length and width $\poly(n)$. Pad $B$ to have length and width $n^c$ and apply \Cref{thm:hard_IW} with $f=\mathcal{L}$ and $\eps:=\eps$. Then we either obtain an estimate of $\E[B]$ up to $\pm 1/4$ (which suffices to decide the language by correctness of $R$) or a counterexample (in the form of a circuit of size at most $2^{\eps n'}$ to the hardness of $\mathcal{L}$ on inputs of size $n'=\Theta(\log n^c) = \Theta(\log n)$ bits, as desired.
\end{proof}
We prove \Cref{thm:IWrecon} in the following section.

\section{Efficient Reconstructive Derandomization}
We first state our main theorem of this section.
\begin{theorem}\label{thm:IWrecon}
    Given $\eps>0$, and a family of functions $f_m:\zo^m\ra\zo\in \SPACE[m]$, there is a family of explicit generators $G:\zo^s\ra\zo^n$ with $s=O(\log n)$ computable in space $O(\log n)$, and a deterministic logspace algorithm that, given $n\in\N$ and a $1/(8n)$-next-bit predictor $\circB$ for $G$ of size at most $n^2$ which is evaluable in space $O(\log n)$, outputs a circuit $\circC$ of size $2^{\eps \cdot m_0}$ for $f_{m_0}$ with $m_0=\Theta(\log n)$. 
\end{theorem}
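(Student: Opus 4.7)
The plan is to follow the Impagliazzo--Wigderson/Klivans--van Melkebeek hardness-to-randomness template but replace every step of the reconstruction by a deterministic logspace procedure. The key enabler is that $s = O(\log n)$: every ``there exists a good fixing'' claim from the probabilistic method can be replaced by an explicit enumeration over $\poly(n)$ candidates verifiable in space $O(\log n)$ by exact summation, and every ``estimate with high probability'' becomes an exact sum over a $\poly(n)$-sized domain.

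\textbf{The generator.} Fix $m_0 = \Theta(\log n)$ with the constant depending on $\eps$. Starting from $f_{m_0} \in \SPACE[m_0]$, apply a short pipeline of hardness-amplifying encodings --- low-degree (Reed--Muller) extension over a field of size $\poly(n)$, a direct-product/XOR step, and a Goldreich--Levin hardcore bit --- to produce $\tilde f : \zo^{m'} \ra \zo$ with $m' = O(m_0)$; each encoding is computable in space $O(\log n)$ as in \cite{KM02}. Then plug $\tilde f$ into a Nisan--Wigderson generator
\[
G(y) \;=\; \bigl(\tilde f(y|_{S_1}),\, \ldots,\, \tilde f(y|_{S_n})\bigr),
\]
using a logspace-constructible combinatorial design $S_1, \ldots, S_n \subseteq [s]$ with $|S_j| = m'$, $|S_j \cap S_k| \le \log n$ for $j \ne k$, and $s = O(m') = O(\log n)$.

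\textbf{The reconstruction.} Given a next-bit predictor $\circB$ of size $n^2$ evaluable in logspace with advantage $1/(8n)$, first enumerate $i \in [n]$ and compute the advantage \emph{exactly} by summing over all $2^s = \poly(n)$ seeds, locating some $i$ for which the advantage is $\ge 1/(8n^2)$. Then apply the standard NW hybrid: each fixing $y^\star$ of the bits of $y$ outside $S_{i+1}$ yields a candidate $C_{y^\star}(z) = \circB\bigl(\tilde f(\sigma_1(z, y^\star)), \ldots, \tilde f(\sigma_i(z, y^\star))\bigr)$ attempting to predict $\tilde f(z)$, where each $\sigma_j(\cdot, y^\star)$ depends on at most $|S_j \cap S_{i+1}| \le \log n$ bits of $z$. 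Hence every $\tilde f(\sigma_j(\cdot,y^\star))$ reduces to a $\poly(n)$-size lookup table and $C_{y^\star}$ is a logspace-evaluable circuit of polynomial size. Enumerate all $2^{s-m'} = \poly(n)$ fixings; by averaging at least one achieves correlation $\ge 1/2 + 1/\poly(n)$ with $\tilde f$, which is detected by evaluating both sides on every $z \in \zo^{m'}$ ($\poly(n)$ values). Now unwind the amplification in reverse: apply a Goldreich--Levin local list decoder, then a local list decoder for the direct-product/XOR step, then the Sudan--Trevisan--Vadhan local list decoder for Reed--Muller codes; each is a logspace computation that, given $O(\log n)$ advice bits, produces oracle access to a candidate circuit for the preceding function. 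At the bottom this yields $\poly(n)$ candidate circuits $\circC$ of size $2^{\eps m_0}$ for $f_{m_0}$; verify each against $f_{m_0}$ on all $2^{m_0} = \poly(n)$ inputs and output the first exact match.

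\textbf{Main obstacle.} The principal difficulty is that prior reconstruction proofs use both existential probabilistic arguments and random sampling to estimate empirical correlation, neither of which is directly available in logspace. My escape is the seed budget $s = O(\log n)$: every existential claim becomes a $\poly(n)$ enumeration and every sampled estimator becomes an exact sum over a $\poly(n)$-sized domain. A secondary difficulty is representation --- the intermediate circuits produced by composing local list decoders cannot be materialized in $O(\log n)$ space. I avoid materialization by carrying only the $O(\log n)$-bit advice strings through the pipeline and evaluating each candidate on demand via oracle calls down the decoder stack, ultimately bottoming out in $\circB$. Brute-force correctness checking on all $2^{m_0} = \poly(n)$ inputs of $f_{m_0}$ discards spurious candidates and guarantees the final output is a correct size-$2^{\eps m_0}$ circuit.
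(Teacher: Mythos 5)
Your high-level architecture matches the paper's: the same four-stage pipeline (low-degree extension, derandomized XOR/direct product, Goldreich--Levin, Nisan--Wigderson over a logspace-constructible design), exact computation of advantages by summing over the $\poly(n)$ seeds, and enumeration-plus-verification to select a correct final circuit. The NW portion of your reconstruction (enumerating the $\poly(n)$ fixings of the bits outside $S_{i+1}$ and checking correlation exactly) is exactly what the paper does.

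The gap is in your treatment of the three decoding/amplification steps. You assert that ``every existential claim becomes a $\poly(n)$ enumeration,'' but for these steps that is false: the objects being existentially quantified over are not $O(\log n)$-bit. In the Goldreich--Levin reconstruction the decoder needs $\ell = O(\log n)$ vectors $r_1,\ldots,r_\ell \in \zo^{m}$ with $m = \Theta(\log n)$, i.e.\ $\Theta(\log^2 n)$ random bits, so enumerating all choices is not a logspace computation. In the XOR-lemma reconstruction, amplifying advantage $2^{-\gamma m}$ to success $0.99$ takes a majority over $2^{\Theta(\gamma m)}$ copies of a randomized circuit, each consuming $\Theta(m)$ fresh random bits, so the naive fixing ranges over $2^{\poly(n)}$ possibilities; the same issue arises for the choice of lines in the Reed--Muller decoder. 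Saying each decoder works ``given $O(\log n)$ advice bits'' conflates the advice (which is indeed short) with the internal randomness of the reconstruction (which is not). The paper's actual technical contribution, which your proposal is missing, is to replace this randomness with explicit $O(\log n)$-seed pseudorandom objects whose guarantees survive a union bound over all $2^{m}=\poly(n)$ inputs: an averaging sampler with exponentially small failure probability (\Cref{prop:samp}) for line selection in Reed--Muller decoding and for the majority amplification in the XOR lemma, and a small-bias generator (\Cref{prop:smallbias}) supplying nearly pairwise-independent combinations $r^J$ for Goldreich--Levin, with a Chebyshev argument replacing full independence. Only after these substitutions does ``enumerate all seeds and verify'' become a $\poly(n)$ enumeration. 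A secondary point: the paper deliberately arranges the pipeline so that the Reed--Muller step only needs unique (Berlekamp--Welch) decoding from agreement $0.99$ rather than Sudan--Trevisan--Vadhan list decoding; your invocation of the STV list decoder is both unnecessary in the agreement regime you are in and strictly harder to derandomize.
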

We prove this theorem in four stages. Following the framework of \cite{IW97}, we first assume that $f$ is a (worst-case) hard function, and construct a PRG via hardness amplifications and the Nisan-Wigderson PRGs \cite{NW94}. The detailed steps are slightly different from those in \cite{IW97}, and we adapt the following strategy:
\begin{enumerate}[label=(\alph*)]
    \item From $f$, construct (by low-degree extension) a function $f'$ that is hard-on-average on a 0.99 fraction of inputs.
    \item From $f'$, construct (by derandomized XOR Lemma) a function $f''$ (with multiple bits of output) that is hard-on-average on a $2^{-\Omega(m)}$ fraction of inputs.
    \item From $f''$, construct (by Goldreich-Levin) a function $f'''$ with single-bit output that is hard-on-average on a $1/2+2^{-\Omega(m)}$ fraction of inputs.
    \item Use $f'''$ to instantiate a Nisan-Wigderson pseudorandom generator $G:\zo^s\ra\zo^n$ for $s=O(m)$.
\end{enumerate}
We make sure that $f'$, $f''$, $f'''$ and $G$ are all computable within $O(\log n)$ space. 

Furthermore, we prove that every step can be made logspace reconstructive, in the sense that given a counterexample to the conclusion (i.e. a small circuit that obtains some advantage) we can produce a counterexample to the assumption in deterministic logspace. This requires modifying the standard reconstruction algorithms for the first three steps, all of which use randomness-inefficient applications of the probabilistic method. Over the next four subsections, we state and prove the necessary components of the reconstructive PRG, and in \Cref{subsec:combine}, combine these results to conclude \Cref{thm:IWrecon}.

\subsection{Preliminaries}
First, we recall some notation related to the advantage of circuits.
\begin{definition}
    Given $f:\zo^n\ra\zo^m$ and a circuit $\circC$, let $\SUC(\circC,f) = \Pr_{x\la U_n}[\circC(x)=f(x)]$. For $m=1$, let $\ADV(\circC,f)=2\SUC(\circC,f)-1$. Let $\ADV_s(f) = \max_{\circC:|\circC|\leq s}\ADV(\circC,f)$ and likewise for $\SUC_s(f)$. Let $\ADV^H,\SUC^H$ be the related notions when $x$ is sampled from the uniform distribution over $H\subset \zo^n$.
\end{definition}

We will repeatedly make use of an averaging sampler in order to make probabilistic method arguments randomness efficient. We first recall the definition of an averaging sampler, and then recall the classical result in \cite{RVW} that there exist highly efficient averaging samplers, even with exponentially small error.
\begin{definition}
    Given $m\in \N$ and $\eps,\delta>0$, we say that $\SAMP:\zo^l\ra (\zo^m)^t$ is a $t$-query $(m,\eps,\delta)$-\textbf{averaging sampler} with seed length $l$ if for every $g:\zo^m\ra [0,1]$ we have 
    \[
    \Pr_{q_1,\ldots,q_t\la \SAMP(U_l)}\left[\left|\E_{i\in [t]}[g(q_i)]-\E[g]\right|\leq \eps\right]\geq 1-\delta.
    \]
\end{definition}
\begin{proposition}[\cite{RVW}]\label{prop:samp}
    Given $m\in \N$ and $\eps>0$, there exists $t=\poly(m/\eps)$ and a $t$-query $(m,\eps ,2^{-2m})$-averaging sampler with seed length $4m$. Moreover, the sampler is evaluable in space $O(m)$.
\end{proposition}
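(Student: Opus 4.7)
The plan is to obtain $\SAMP$ via the standard extractor-to-sampler correspondence, instantiated with an explicit logspace-computable strong seeded extractor as supplied by the zig-zag product machinery of~\cite{RVW}. Concretely, I would fix a strong $(2m-1,\eps)$-extractor
\[
\mathrm{Ext}:\zo^{4m}\times\zo^{d}\to\zo^{m}\quad\text{with}\quad d=O(\log(m/\eps)),
\]
and define $\SAMP(w)$ to enumerate $(\mathrm{Ext}(w,y))_{y\in\zo^{d}}$. This produces $t=2^{d}=\poly(m/\eps)$ queries from a single $4m$-bit seed, matching the claimed query count and seed length.

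The averaging-sampler guarantee then follows from the standard counting argument. For any $g:\zo^{m}\to[0,1]$, let $B^{\pm}=\{w:\pm(\E_{y}[g(\mathrm{Ext}(w,y))]-\E[g])>\eps\}$. If $|B^{+}|>2^{2m-1}$, then the uniform distribution on $B^{+}$ is a source on $\zo^{4m}$ of min-entropy $>2m-1$ on which the extractor must output a distribution $\eps$-close to uniform; averaging $g$ against this output contradicts the definition of $B^{+}$. The same reasoning bounds $|B^{-}|$, yielding failure probability at most $2^{-2m}$ as required.

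For the explicit construction of $\mathrm{Ext}$, I would invoke an RVW-type extractor built from iterated zig-zag products. The rotation map of each underlying expander can be evaluated in space $O(m)$, and the extractor applies $\poly(m)$ such evaluations; combined with iterating $y$ over its $d=O(m)$-bit range, the full sampler runs in space $O(m)$. The main obstacle is simply matching constants: one must tune the block sizes and base expander degrees so that the extractor realizes the exact $(n,k,d,m)=(4m,\,2m-1,\,O(\log(m/\eps)),\,m)$ trade-off, which is within the flexibility of RVW-style constructions and introduces no fundamentally new ingredients. Everything else—the correspondence, the parameter verification, and the $O(m)$-space bookkeeping—is routine once the extractor is in hand.
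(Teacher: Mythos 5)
The paper does not prove this proposition at all --- it is imported as a black box from \cite{RVW} --- so there is nothing internal to compare against; your reconstruction via the standard extractor-to-sampler correspondence is precisely the intended provenance of the result, and the counting argument (bounding $|B^{+}|,|B^{-}|$ by $2^{2m-1}$ each to get failure probability $2^{-2m}$, with the $[0,1]$-valued case handled because statistical closeness bounds differences of expectations of bounded functions) is correct. The only part you leave as an assertion --- that an explicit extractor with source length $4m$, min-entropy $2m-1$, output length $m$, seed length $O(\log(m/\eps))$, and $O(m)$-space evaluability exists --- is true of known constructions and is no less detailed than what the paper itself provides.
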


Another tool that is repeatedly used in our proof is the combinatorial design, which is a family of subsets $S_1,\ldots,S_n\subseteq[s]$ such that $|S_i|=\alpha s$ for some constant $\alpha\in(0,1)$ and all $i\in[n]$, while $|S_i\cap S_j|\leq 2\alpha^2 s$ for all $i\neq j$.
The design will be used at two places: once in derandomized XOR Lemma (\Cref{sec:xor}) and once in the Nisan-Wigderson PRG (\Cref{sec:nw}). While the application in \Cref{sec:xor} only requires a linear-sized design, the application in \Cref{sec:nw} requires an exponential-sized design that is deterministically constructible in linear pace. The later was formally given in \cite{KM02}, so we concurrently use it for both applications.

\begin{proposition}[\cite{KM02}]\label{prop:design}
    For every $\alpha \in (0,1)$, there is $\beta \in (0,1)$ such that for $s\in \N$ one can deterministically generate in space $O(s)$ a combinatorial design of size $n=2^{\beta s}$ over $[s]$, that is, a family of subsets $S_1,\ldots,S_n\subseteq[s]$ such that $|S_i|=\alpha s$ and $|S_i\cap S_j|\leq 2\alpha^2 s$ for all $1\leq i<j\leq n$.
\end{proposition}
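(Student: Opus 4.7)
The plan is to use the classical Nisan--Wigderson design construction \cite{NW94} based on low-degree polynomials over a finite field, carefully adapted so that every step runs in deterministic space $O(s)$.

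Given $\alpha \in (0,1)$, I would first fix a prime power $q = \Theta(1/\alpha)$; existence follows from Bertrand's postulate, and since $q$ depends only on $\alpha$ it can be found in constant space. Set $\ell = \lfloor \alpha s \rfloor$ and $d = \lfloor 2\alpha^2 s \rfloor$, and identify $[s]$ (up to $O(1)$ padding) with $[\ell] \times \mathbb{F}_q$. The sets are parameterized by polynomials: for each polynomial $p$ in an appropriate class $\mathcal{P}$ over $\mathbb{F}_q$ (or an extension $\mathbb{F}_Q$), define
\[
S_p = \{(i,\pi(p(\iota(i)))) : i \in [\ell]\},
\]
where $\iota : [\ell] \hookrightarrow \mathbb{F}_Q$ is a fixed explicit injection and $\pi : \mathbb{F}_Q \to \mathbb{F}_q$ is a surjective $\mathbb{F}_q$-linear projection with balanced fibers (e.g., the field trace when $\mathbb{F}_Q$ is a proper extension; otherwise $\pi$ is the identity).

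Correctness has three parts to verify: (i) $|S_p| = \ell = \alpha s$, which is immediate; (ii) for distinct $p \neq p'$ of degree at most $d$, $|S_p \cap S_{p'}| \leq 2\alpha^2 s$, via Schwartz--Zippel on polynomial-root agreements combined with a counting bound on $\pi$-collisions; and (iii) $|\mathcal{P}| \geq 2^{\beta s}$ for a constant $\beta = \beta(\alpha) > 0$. The class $\mathcal{P}$ is chosen by regime: when $\ell \leq q$ (i.e., $s = O(1/\alpha^2)$), univariate polynomials over $\mathbb{F}_q$ of degree $\leq d$ already give $q^{d+1} = 2^{\Theta(\alpha^2 s \log(1/\alpha))}$ sets and one can take $\pi$ to be the identity. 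For larger $s$, I would use univariate polynomials over $\mathbb{F}_Q = \mathbb{F}_{q^t}$ with $t = \lceil \log_q \ell \rceil$, giving $Q^{d+1}$ polynomials, which exceeds $2^{\beta s}$ for some $\beta > 0$ depending only on $\alpha$.

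The hard part will be (ii) in the large-$s$ regime: one must establish a worst-case (not merely average-case) bound on the number of $i \in [\ell]$ with $(p - p')(\iota(i)) \in \ker(\pi)$, for every distinct pair $p \neq p'$ of degree $\leq d$. This is handled by exploiting the linearity of the trace map -- so that $\ker(\pi)$ is a fixed codimension-$1$ $\mathbb{F}_q$-subspace of $\mathbb{F}_Q$ -- together with a polynomial-identity/counting argument that splits the collisions between the at-most-$d$ true polynomial roots and the additional $\pi$-only collisions (bounded by roughly $\ell/q \leq \alpha^2 s$); the factor of $2$ in the $2\alpha^2 s$ bound provides the slack needed to absorb both contributions. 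Space complexity is otherwise straightforward: the algorithm iterates through polynomial coefficient vectors, which have total size $(d+1)\log Q = O(s)$ bits, finite-field arithmetic in $\mathbb{F}_Q$ runs in $O(\log Q) = O(\log s)$ space per operation using a standard polynomial-basis representation, and the maps $\iota$ and $\pi$ are explicit and computable in $O(\log Q)$ space.
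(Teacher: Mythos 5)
The paper does not prove this proposition; it is imported verbatim from [KM02], where the construction in this ``high-end'' regime ($|S_i|=\alpha s$ linear in the universe size, $2^{\Omega(s)}$ sets) is a greedy/exhaustive-search argument driven by the probabilistic method: a Chernoff bound shows a random $\alpha s$-subset has intersection at most $2\alpha^2 s$ with any fixed $\alpha s$-subset except with probability $2^{-\Omega(\alpha^2 s)}$, a union bound over the $2^{\beta s}$ previously chosen sets shows a valid next set always exists, and the search is implemented in workspace $O(s)$. Your proposal instead tries to port the algebraic (Reed--Solomon) design of [NW94] to this regime, and that route breaks down exactly at the step you flag as ``the hard part.''

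Concretely: with $q=\Theta(1/\alpha)$ and $\ell=\alpha s$, the sub-case $\ell\le q$ (where $\pi$ is the identity and the Schwartz--Zippel bound genuinely gives $|S_p\cap S_{p'}|\le d$) only covers $s=O(1/\alpha^2)$, i.e.\ finitely many $s$; essentially the entire statement lives in the regime where you must project $\mathbb{F}_Q$ down to $\mathbb{F}_q$ via $\pi$. There, the claimed worst-case bound of ``roughly $\ell/q$'' on $\pi$-only collisions is false. The set $\{y: \pi(y)=0\}=\ker\pi$ is a codimension-one $\mathbb{F}_q$-subspace of $\mathbb{F}_Q$, and there are many nonzero low-degree polynomials $h=p-p'$ whose image lies entirely inside it: any nonzero constant in $\ker\pi$, and (when $\pi$ is the trace) any $h=g^q-g$ with $\deg g\le d/q$, satisfy $\pi(h(x))=0$ for \emph{every} $x$, so $|S_p\cap S_{p'}|=\ell=\alpha s\gg 2\alpha^2 s$. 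The quantity $\ell/q$ is only the \emph{average} number of collisions over random $h$; nothing forces it in the worst case, and a linear map $\mathbb{F}_Q\to\mathbb{F}_q$ is too lossy to preserve the distance of the outer Reed--Solomon code (this is precisely why one would need a genuine code concatenation, which changes the set sizes, or why the literature falls back on the greedy construction or on weak designs in this parameter range). There is also a secondary parameter inconsistency: for large $s$ you have $Q^{d+1}=2^{\Theta(\alpha^2 s\log\ell)}$ polynomials but only $q^{\ell}=2^{\alpha s\log(1/\alpha)}$ possible sets, so the pigeonhole principle already forces exponentially many exact coincidences $S_p=S_{p'}$ among your candidates. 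The proof as proposed therefore does not establish the proposition; the intended fix would need either a different inner encoding with genuine distance, or abandonment of the algebraic route in favor of the greedy argument of [KM02].
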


\subsection{Derandomizing the Polynomial Decoder}
For step (a) in \Cref{thm:IWrecon}, we need to convert a worst-case hard function to one with constant average-case hardness.

\begin{lemma}\label{lemma:RM}
    Given $f:\zo^m\ra\zo$, there is $g:\zo^{m'}\ra\zo$ where $m'=\Theta(m)$ such that, for every circuit $\circB$ such that $\SUC(\circB,g)>0.99$, there is a circuit $\circC$ of size $m^{O(1)}\cdot |\circB|$ such that 
    \[\circC(x)=f(x),\ \forall x\in\{0,1\}^m.\]
    Moreover, when $f$ is computable in space $O(m)$, $g$ is also computable in space $O(m)$, and there is a deterministic $O(m)$-space algorithm that, given the circuit $\circB$ which is evaluable in space $O(m)$, prints $\circC$, and $\circC$ is also evaluable in space $O(m)$.
\end{lemma}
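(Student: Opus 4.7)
The plan is to take $g$ to be the Reed-Muller low-degree extension of $f$ composed with a Hadamard encoding to make its output a single bit, and to recover $f$ from any circuit $\circB$ with $\SUC(\circB,g) > 0.99$ via two successive decoding stages. Fix $h = m^\alpha$ for a small constant $\alpha > 0$, $d = m/\log h = \Theta(m/\log m)$, and a binary field $\mathbb{F} = \mathbb{F}_{2^k}$ with $k = \Theta(\log m)$ large enough that $|\mathbb{F}|$ is polynomially larger than the degree $d(h-1)$. View $f$ as $f : H^d \to \{0,1\}$ for $H \subset \mathbb{F}$ with $|H| = h$, and let $\hat{f} : \mathbb{F}^d \to \mathbb{F}$ be the unique polynomial of degree $<h$ in each variable agreeing with $f$ on $H^d$. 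Define
\[
g(x, r) := \langle \hat{f}(x),\, r \rangle_{\mathbb{F}_2}, \qquad x \in \mathbb{F}^d,\ r \in \mathbb{F}_2^k,
\]
so that $g : \{0,1\}^{m'} \to \{0,1\}$ with $m' = (d+1)k = \Theta(m)$. Computing $\hat{f}(x)$ via the Lagrange sum $\sum_{z \in H^d} f(z) \prod_j \chi_{z_j}(x_j)$ requires iterating through $2^m$ terms with an $m$-bit counter and evaluating $f$ once per term in the assumed $O(m)$ space; composing with an inner product against $r$ shows $g \in \SPACE[O(m)]$.

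For reconstruction, Stage~1 extracts a field-valued oracle via Hadamard decoding. The hypothesis gives $\E_x\bigl[\Pr_r[\circB(x,r) = \langle \hat{f}(x),r\rangle]\bigr] > 0.99$, so Markov's inequality implies that at least a $0.9$ fraction of $x \in \mathbb{F}^d$ satisfy $\Pr_r[\circB(x,r) = \langle \hat{f}(x),r\rangle] > 0.9$. For any such $x$, the element $a = \hat{f}(x)$ is the \textit{unique} $a \in \mathbb{F}$ with $\Pr_r[\circB(x,r) = \langle a,r\rangle] > 0.6$, since for any $b \neq a$ the codewords $\langle a,r\rangle$ and $\langle b,r\rangle$ agree on exactly half of $r \in \mathbb{F}_2^k$, forcing every competitor below $0.6$. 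Set $\tilde{\circB}(x)$ to this unique $a$ (arbitrary otherwise); exhaustive enumeration over $a \in \mathbb{F}$ and $r \in \mathbb{F}_2^k$ computes $\tilde{\circB}(x)$ in space $O(m)$ and guarantees $\Pr_x[\tilde{\circB}(x) = \hat{f}(x)] \geq 0.9$. Stage~2 applies Reed-Muller self-correction: for each $y \in H^d$ pick a pseudo-random affine line $\ell(t) = y + tv$, query $\tilde{\circB}$ at all $|\mathbb{F}|$ points of $\ell$, Reed-Solomon decode to recover the univariate restriction $\hat{f}|_\ell$ of degree at most $d(h-1) \ll |\mathbb{F}|$, and evaluate at $t = 0$. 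On a random line, the expected agreement with $\hat{f}$ is $\geq 0.9$, comfortably above the unique-decoding threshold, so Berlekamp-Welch decodes for most directions $v$. Derandomize $v$ using an averaging sampler (Proposition~\ref{prop:samp}) with seed length $O(\log m)$ producing $\poly(m)$ directions, and take majority vote over the resulting candidates for $f(y)$; the exponentially small sampler error suffices to union-bound over all $2^m$ inputs $y$.

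The main obstacle is ensuring Reed-Solomon decoding runs in deterministic $O(m)$ space. Berlekamp-Welch reduces to a structured $\poly(m)$-size linear system over $\mathbb{F}$, which is solvable in $\polylog(m)$ space via standard parallel linear algebra ($\mathsf{NC}^2$ bounds on determinant and inversion), comfortably fitting the budget. A cleaner alternative that avoids error correction entirely is to first boost $\tilde{\circB}$ to point-wise success $1 - 1/\poly(m)$ by averaging many sampler-derandomized self-correction attempts (still over random lines, but now recovering the value by noise-free interpolation when all $d(h-1)+1$ queries happen to land on correct points), and then invoke the classical Beaver-Feigenbaum self-corrector, which requires only noise-free Lagrange interpolation along a single random line through $y$. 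Either route produces a circuit $\circC$ of size $m^{O(1)}|\circB|$ whose description is printable in $O(m)$ space and which itself is evaluable in $O(m)$ space, yielding the logspace-reconstructive claim.
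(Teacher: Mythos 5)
Your proposal is correct and follows essentially the same route as the paper: encode $f$ by a low-degree extension concatenated with an inner-product/Hadamard code, decode the Hadamard layer to get a field-valued approximator of the polynomial, then self-correct along lines via Berlekamp--Welch, derandomizing the choice of directions with the averaging sampler and finding a globally good seed by exhaustive search and verification against $f$. The only (immaterial) difference is in the Hadamard-decoding step, where you brute-force the unique codeword $a$ with agreement above $0.6$ while the paper recovers each output bit as $\MAJ_z(\circB(x,e_i+z)-\circB(x,z))$; both yield the same guarantees.
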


The proof for \Cref{lemma:RM} is inspired by \cite{STV}, where we encode $f$ through Reed-Muller codes and switch to boolean domain via Hadamard codes. However, since we only need the resulting function to be average-case hard on a constant fraction of inputs, the code can be directly decoded instead of list-decoded, and we derandomize the decoding procedure with samplers. 

We need the following two facts. The first is a folklore fact on constructing low-degree extension, whose proof can be found at \cite[Proposition 2.2]{GKR15}:
\begin{proposition}\label{prop:RM}
    Given a finite field $\mathbb{F}$ and a subset $H\subseteq\mathbb{F}$, and oracle access to a function $f:H^\ell\ra\zo$, one can compute in space $O(\log|\mathbb{F}|+\log\ell)$ an $\ell$-variable polynomial $p:\mathbb{F}^\ell\ra\mathbb{F}$ that coincides with $f$ on $H^\ell$, and the degree of $p$ in each variable smaller than $|H|$.
\end{proposition}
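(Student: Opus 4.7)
The plan is to take $p$ to be the standard multivariate low-degree extension of $f$ obtained via Lagrange interpolation:
$$p(x_1, \ldots, x_\ell) \;=\; \sum_{(h_1, \ldots, h_\ell) \in H^\ell} f(h_1, \ldots, h_\ell) \cdot \prod_{i=1}^\ell L_{h_i}(x_i),$$
where $L_{h_i}(x_i) = \prod_{h \in H,\, h \neq h_i} (x_i - h)/(h_i - h)$ is the univariate Lagrange basis at $h_i$ over $H$. The algebraic properties are immediate: each $L_{h_i}$ has degree $|H|-1$ in $x_i$, so $p$ has degree strictly less than $|H|$ in every variable; and since $L_{h_i}(x_i) = \mathbf{1}[x_i = h_i]$ whenever $x_i \in H$, the double product collapses on the grid $H^\ell$, yielding $p(x_1,\ldots,x_\ell) = f(x_1,\ldots,x_\ell)$ there.

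To evaluate $p$ at a queried point $(x_1, \ldots, x_\ell) \in \mathbb{F}^\ell$ in small space, I would run the obvious iterative procedure. Maintain an accumulator $A \in \mathbb{F}$ initialized to $0$; iterate over all $(h_1, \ldots, h_\ell) \in H^\ell$ in lexicographic order, writing each tuple on the oracle tape and reading back the bit $f(h_1, \ldots, h_\ell)$; then compute $P = \prod_{i=1}^\ell L_{h_i}(x_i)$ in a single field register by looping over $i \in [\ell]$ and, inside that, looping over $h \in H \setminus \{h_i\}$ to multiply in the successive factors $(x_i - h)/(h_i - h)$; finally update $A \leftarrow A + f(h_1,\ldots,h_\ell) \cdot P$. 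At any moment the work tape holds only the accumulator, a single in-progress field element for $P$, and two short loop indices (one for $i \in [\ell]$, one for the current $h \in H$), for a total of $O(\log |\mathbb{F}| + \log \ell)$ bits.

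The one step that I expect to need justification is the space accounting for the outer iteration over $H^\ell$: the current tuple $(h_1, \ldots, h_\ell)$ carries $\ell \log |H|$ bits of information, which exceeds the target work-space bound. The resolution, standard in space-bounded oracle computation and implicit in the convention used by \cite{GKR15}, is that the oracle query tape is separate from the work tape and is not charged against the space bound; the tuple lives on and is advanced on the oracle tape, while every other quantity the algorithm manipulates is either a single element of $\mathbb{F}$ or an index bounded by $\max(\ell, |H|)$, each of which fits within $O(\log |\mathbb{F}| + \log \ell)$. With this convention in place, the algorithm meets the claimed bound.
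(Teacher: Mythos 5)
Your construction is the standard tensor-product Lagrange interpolation, which is exactly the argument behind the cited source (the paper itself gives no proof of \Cref{prop:RM}; it defers to \cite[Proposition~2.2]{GKR15}). The algebraic part --- individual degree $|H|-1$ in each variable and agreement with $f$ on $H^\ell$ --- is correct and complete.

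The one place I would push back is your space accounting for the outer sum over $H^\ell$, and specifically the claim that the current tuple $(h_1,\ldots,h_\ell)$ can ``live on and be advanced on'' an uncharged oracle tape. Under the standard conventions for space-bounded oracle machines the query tape is write-only and is erased when the query is made; a query tape that is both freely readable and writable and exempt from the space bound would let you store arbitrary scratch data for free and trivialize the model. So under the usual conventions the enumeration index genuinely costs $\ell\log|H|$ bits of work space, and the honest bound for evaluating $p$ at a point is $O(\ell\log|H|+\log|\mathbb{F}|)$ rather than $O(\log|\mathbb{F}|+\log\ell)$. (What \cite{GKR15} actually establishes in the small space bound is the computation of a \emph{single} Lagrange coefficient $\prod_i L_{h_i}(x_i)$ for a given pair of points, which your inner loop reproduces correctly; the summation over $H^\ell$ is an additional cost.) This discrepancy is arguably already present in the proposition as stated, and it is harmless for the paper's application: in the proof of \Cref{lemma:RM} one has $\ell=m/\log m$ and $|H|=m$, so $\ell\log|H|=m$ and the function $g$ is still computable in space $O(m)$ as claimed there. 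But your writeup should either state the weaker bound $O(\ell\log|H|+\log|\mathbb{F}|)$ or make the oracle-tape convention explicit and defend it, rather than asserting it as standard.
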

The second fact concerns decoding Reed-Solomon codes:
\begin{proposition}\label{prop:BW}
    Given a finite field $\mathbb{F}$ with $|\mathbb{F}|=N$, whose elements can be canonically listed as $a_1,\ldots,a_N$ where $a_1=0$, there exists a circuit $\DEC:\mathbb{F}^N\ra\mathbb{F}^N$ that satisfies the following: If there exists a univariate polynomial $q:\mathbb{F}\ra\mathbb{F}$ of degree at most $d<N$, such that $q(a_i)=b_i$ for at least $(N+d)/2$ of $i\in[N]$, then
    \[\DEC(b_1,\ldots,b_N)=(q(a_1),\ldots,q(a_N)).\]
    Furthermore, $\DEC$ is of size $\poly(N)$ and depth $\polylog(N)$, and can be uniformly constructed in space $O(\log N)$ given the arithmetics in $\mathbb{F}$.
\end{proposition}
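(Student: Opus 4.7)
The plan is to realize the Berlekamp--Welch decoder for Reed--Solomon codes as a low-depth, logspace-uniform arithmetic circuit over $\mathbb{F}$. Given input $(b_1,\ldots,b_N)$, the decoder sets up and solves the linear system over $\mathbb{F}$ whose unknowns are the coefficients of an ``error locator'' $E(x)$ of degree at most $e := \lceil (N-d)/2 \rceil$ and a ``numerator'' $Q(x)$ of degree at most $e+d$, subject to the $N$ equations $Q(a_i) = b_i\,E(a_i)$ for $i \in [N]$. Under the hypothesis that some $q$ of degree $\leq d$ agrees with $(b_i)$ on at least $(N+d)/2$ positions, the pair $(E_0,\, q \cdot E_0)$ with $E_0(x) = \prod_{i :\, q(a_i) \neq b_i}(x - a_i)$ is a nonzero solution, and the standard Berlekamp--Welch argument forces any nonzero solution $(E, Q)$ to satisfy $Q(x) = q(x)\, E(x)$ identically.

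To extract the evaluations $(q(a_1),\ldots,q(a_N))$ without performing an explicit polynomial division, I would compute $E(a_i)$ and $Q(a_i)$ in parallel at every point; since $E \neq 0$ has at most $e$ roots, at least $N - e > d$ indices $i$ satisfy $E(a_i) \neq 0$, at which $q(a_i) = Q(a_i)/E(a_i)$. Taking the $d+1$ smallest such indices and Lagrange-interpolating then recovers the coefficients of $q$, which are subsequently evaluated at every $a_j$. Every subroutine used here---solving a linear system over $\mathbb{F}$ (for instance via the Berkowitz algorithm), polynomial evaluation, and Lagrange interpolation through a Vandermonde matrix---admits arithmetic circuits of size $\poly(N)$ and depth $\polylog(N)$, and these are logspace-uniformly constructible since every matrix entry is a term of the form $a_i^j$ or $b_i\, a_i^j$, computable from its indices in $O(\log N)$ space using the assumed oracle for $\mathbb{F}$-arithmetic.

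The main subtlety is selecting a canonical nonzero solution of the Berlekamp--Welch system, whose solution space may have dimension greater than one. I would handle this by solving, in parallel for each $k \in \{0, 1, \ldots, e\}$, the augmented system obtained by requiring that the coefficient of $x^j$ in $E$ equals $0$ for all $j < k$ and equals $1$ for $j = k$, detecting feasibility via matrix rank, and returning the output corresponding to the smallest feasible $k$. Any nonzero $E$ has some smallest nonzero coefficient, so at least one $k$ succeeds, and the selection across the $e + 1$ branches is a standard $O(\log N)$-depth prefix computation. Putting everything together, the full decoder $\DEC$ has size $\poly(N)$ and depth $\polylog(N)$, and its description is uniformly printable in $O(\log N)$ space, as required.
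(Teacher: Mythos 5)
Your proposal is correct and takes essentially the same route as the paper, whose entire proof is to instantiate the Berlekamp--Welch decoder and solve the resulting $O(N)\times O(N)$ linear systems with a logspace-uniform $\bm{\mathsf{NC}}$ algorithm for linear algebra (the paper cites Csanky; your use of Berkowitz is equally valid, and in fact safer over the small-characteristic fields used in the application). The extra details you supply --- canonicalizing a nonzero solution of the possibly degenerate system, and recovering $q$ by parallel evaluation plus interpolation rather than polynomial division --- are precisely the implementation points the paper leaves implicit.
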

\begin{proof}
    The circuit $\DEC$ instantiates the Berlekamp-Welch algorithm \cite{WB86,GS92}. The algorithm involves solving systems of $O(N)$ linear equations on $O(N)$ variables, for which Csanky’s algorithm \cite{Csanky76} can be implemented in logspace-uniform-$\bm{\mathsf{NC}}$.
\end{proof}

\begin{proof}[Proof of \Cref{lemma:RM}]
    We assume without loss of generality that $m$ is a power of $2$. Let $\ell=m/\log m$, and $\mathbb{F}$ be a finite field of characteristic $2$ and size $m^2$. Take $H\subset\mathbb{F}$ to be a subset of size $m$, and we identify the domain $\{0,1\}^m$ of $f$ with $H^\ell$ as $2^m=|H|^\ell$. The arithmetics in $\mathbb{F}$ can be done in time $O(|\mathbb{F}|)$ and space $O(m)$, and so does the bijection between $\{0,1\}^m$ and $H^\ell$ (and its reverse).
    
    Let $p:\mathbb{F}^\ell\ra\mathbb{F}$ be the polynomial in \Cref{prop:RM}, and let
    $g:\mathbb{F}^{\ell+1}\ra\zo$ be the function defined as
    \[g(x_1,\ldots,x_\ell,y)=\langle p(x_1,\ldots,x_\ell), y\rangle,\]
    where $\langle\cdot,\cdot\rangle$ stands for inner product in $\mathbb{F}_2$ when taking the binary representation of the two arguments in $\mathbb{F}$. It is clear that $g$ can be computed in space $O(m)$, and the input of $g$ has length $(\ell+1)\log |\mathbb{F}|=O(m)$ when represented in binary.

    Now assume there is a circuit $\circB$ such that $\SUC(\circB,g)>0.99$. We first construct the circuit $\circB':\mathbb{F}^\ell\ra\mathbb{F}$ such that the $i$-th bit of the output is 
    \[\circB'_i(x_1,\ldots,x_\ell)=\MAJ_{z\in\mathbb{F}}(
    \circB(x_1,\ldots,x_\ell,e_i+z)-\circB(x_1,\ldots,x_\ell,z)).\]
    Here $e_i$ is the element in $\mathbb{F}$ whose binary representation has $1$ on the $i$-th bit and $0$ elsewhere. 
    
    \begin{claim}
        $\SUC(\circB',p)\geq 0.96$.
    \end{claim}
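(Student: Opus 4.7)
The plan is to combine a simple Markov-style averaging argument with a Goldreich--Levin-type local-decoding argument, exploiting the bilinearity of the inner product used to define $g$. First I would set $\delta_x := \Pr_y[\circB(x,y)\neq g(x,y)]$ and note that $\E_x[\delta_x] < 0.01$ by the hypothesis $\SUC(\circB,g) > 0.99$, so Markov's inequality gives $\Pr_x[\delta_x \geq 1/4] < 0.04$. The rest of the argument aims to show that $\circB'(x) = p(x)$ holds for every $x$ in the remaining $96\%$ of the domain, which immediately yields $\SUC(\circB',p) \geq 0.96$.

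For the local-decoding step, I would fix such a ``good'' $x$ (with $\delta_x < 1/4$) and any output bit index $i$, and analyze each term $\circB(x, e_i+z) \oplus \circB(x, z)$ inside the majority. Since $z \mapsto e_i + z$ is a bijection on $\mathbb{F}$, a union bound over the two $\circB$-queries gives that with probability strictly greater than $1/2$ over uniform $z$, both $\circB(x, z) = g(x, z)$ and $\circB(x, e_i+z) = g(x, e_i+z)$. For any such $z$, bilinearity of the inner product yields
\[
\circB(x,e_i+z) \oplus \circB(x,z) = \langle p(x),\, e_i+z \rangle \oplus \langle p(x),\, z \rangle = \langle p(x),\, e_i \rangle = p_i(x),
\]
so the strict majority over $z$ outputs $p_i(x)$, and hence $\circB'_i(x) = p_i(x)$. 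Since this holds for every bit $i$, we get $\circB'(x) = p(x)$ for every $x$ with $\delta_x < 1/4$.

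I do not expect a real obstacle here: this is the textbook Hadamard local-decoding argument, and the specific threshold $1/4$ is chosen precisely so that the Markov bound $4 \cdot 0.01 = 0.04$ leaves the desired $0.96$ fraction of good $x$. The only details worth double-checking are that the ``$-$'' in the definition of $\circB'_i$ is being read as XOR on bits (so the bilinearity identity above goes through cleanly), and that the majority is taken over all of $\mathbb{F}$ (as opposed to some subset), since both facts are needed for the union-bound step.
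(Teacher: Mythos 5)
Your proposal is correct and follows essentially the same argument as the paper: a Markov/averaging step showing at least a $0.96$ fraction of $x$ have $\circB$ agreeing with $g(x,\cdot)$ on more than $3/4$ of $y$, followed by the union bound over the two queries $z$ and $e_i+z$ and the bilinearity of the inner product (the ``$-$'' is indeed XOR, as $\mathbb{F}$ has characteristic $2$ and $\circB$ outputs a bit). No gaps.
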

    \begin{proof}
        Since $\SUC(\circB,g)>0.99$, there are at least a $0.96$-fraction of $(x_1,\ldots,x_\ell)\in\mathbb{F}^\ell$ such that $\circB$ coincide with $g$ on more than $3/4$ of $y\in\mathbb{F}$, which contains both $z$ and $(e_i+z)$ with probability larger than $1/2$ for a random $z\in\mathbb{F}$. In such cases we have $\circB'_i(x_1,\ldots,x_\ell)=\langle p(x_1,\ldots,x_\ell), e_i\rangle$ for every $i$, and thus $\circB'(x_1,\ldots,x_\ell)=p(x_1,\ldots,x_\ell)$.
    \end{proof} 

    From $\circB'$, we reconstruct the circuit $\circC:\zo^m\ra\zo$ as follows. Let $\SAMP:\zo^{8m}\ra(\mathbb{F}^\ell)^t$ be the sampler in \Cref{prop:samp} with $\eps = 0.01$ and thus $t=\poly(m)$. We think of the $\SAMP$ as sampling $t$ random vectors $v=(v_1,\ldots,v_\ell)\in\mathbb{F}^\ell$, and given the input $x=(x_1,\ldots,x_\ell)\in H^\ell$ for $\circC$, each vector $v$ represents a line $\{x+\lambda v\mid\lambda\in\mathbb{F}\}$. On each line, $p(x+\lambda v)$ is a univariate polynomial on $\lambda$ of degree at most $\ell|H|=m^2/\log m$, and we use the decoder circuit $\DEC$ in \Cref{prop:BW} to decode the Reed-Solomon code given by $\circB'$ on the line. We let the value of $\circC(x)$ to be the most common (breaking ties arbitrarily) decoded value among the $t$ lines. Notice that this process depends on the seed of the sampler, and we actually go through all the seeds and choose the one that makes $\circC(x)$ correctly computes $f$ on all $x\in H^\ell$.
    
    Formally, we present this linear space reconstruction algorithm as \Cref{alg:RM}.

    \newcommand{\algRMRecon}{\textsc{RM\_RECON}}
    \begin{algorithm}\caption{\algRMRecon$(f,\circB)$}\label{alg:RM}
        \DontPrintSemicolon
        Let $\SAMP:\zo^{8m}\ra(\mathbb{F}^\ell)^t$ be the sampler of \Cref{prop:samp} with $\eps = 0.01$.\\
        \For{$y\in \zo^{8m}$}{
            Let $v_1,\ldots,v_t \la \SAMP(y)$.\\
            Let $\circC:\zo^m\ra\zo$ be the circuit
            \[\circC(x) = \MAJ_{i\in [t]}( \DEC_1( (\circB'(x+\lambda v_i))_{\lambda\in\mathbb{F}} ) ).\]\\
            \lIf{$\circC(x)=f(x)$ for all $x\in\{0,1\}^m$}{\Return{$\circC$}}
        }
    \end{algorithm}
    The circuit $\circC$ constructed in the algorithm is of size $2t|\mathbb{F}|^2|\circB|+m^{O(1)}=m^{O(1)}\cdot |\circB|$, and has additional depth $\polylog(m)$ compared to that of $\circB$. Therefore $\circC$ can be evaluated in space $O(m)$.

    Now we prove that the algorithm always returns a valid circuit $\circC$. Notice that for uniformly random $v\in\mathbb{F}^\ell$, $x+\lambda v$ is also uniformly random after given $x$ and $\lambda\neq 0$. Since $\SUC(\circB',p)\geq 0.96$, it means that there are at least a 0.84-fraction of $v\in\mathbb{F}^\ell$ such that $\circB'$ coincide with $p$ on $x+\lambda v$ for at least $3/4$ of $\lambda\in\mathbb{F},\lambda\neq 0$. Recall that the degree of $q(\lambda)=p(x+\lambda v)$ is at most $\ell|H|=|\mathbb{F}|/\log m$, and therefore by \Cref{prop:BW} we conclude that for every $x\in\{0,1\}^\ell$,
    \[\Pr_{v\in\mathbb{F}^\ell}\left[\DEC( (\circB'(x+\lambda v))_{\lambda\in\mathbb{F}} )= (p(x+\lambda v))_{\lambda\in\mathbb{F}}\right] \geq 0.84,\]
    in which case we have $\DEC_1( (\circB'(x+\lambda v))_{\lambda\in\mathbb{F}} )= p(x)$. Viewing this probability as an expectation of the indicator function on $v$, by the guarantee of the sampler in \Cref{prop:samp} we have
    \[\Pr_{v_1,\ldots,v_t\la \SAMP(U_{8m})}\left[
    \Pr_{i\in [t]}\left[\DEC_1( (\circB'(x+\lambda v_i))_{\lambda\in\mathbb{F}} )= p(x)\right] \geq 0.51
    \right]\geq 1-2^{-4m}.\]
    By a union bound over $x\in\{0,1\}^m$, there must exist a $y\in\{0,1\}^{8m}$ such that $\circC(x)=p(x)=f(x)$ for all $x\in\{0,1\}^m$. Therefore the algorithm always returns such a circuit $\circC$. Moreover, the algorithm can be implemented to run in space $O(m)$, as we can enumerate over seeds to the sampler and construct the circuit (as a function of the sampler output) in space $O(m)$, and test if the circuit correctly computes $f$ in this space bound. 
\end{proof}

\subsection{Derandomizing the Derandomized XOR Lemma}\label{sec:xor}

Our next step follows the approach of Impaggliazo and Wigderson~\cite{IW97}, who use a derandomized XOR lemma to produce from a function that is hard on a constant fraction of inputs, a function that is hard on any exponentially small fraction of inputs. The construction is identical to the one in \cite{IW97}, except that we modify the reconstruction algorithm and analysis to make the circuit $\circC$ constructible in deterministic space $O(m)$.
\begin{lemma}\label{lemma:derandXOR}
    For every $\gamma\in (0,1)$, there is an $O(m)$-space computable function $G:\zo^{m'}\ra (\zo^{m})^{m}$, where $m'=\Theta(m/\gamma)$, that satisfies the following: Given $f:\zo^m\ra\zo$, and a circuit $\circB$ satisfying $\SUC(\circB, f^m\circ G) \geq 2^{-\gamma m}$, there exists a circuit $\circC$ of size $2^{O(\gamma m)}\cdot |\circB|$ such that
    \[
    \SUC(\circC,f) >0.99.
    \]
    Moreover, when $f$ is computable in space $O(m)$, there is a deterministic $O(m)$-space algorithm that, given the circuit $\circB$ which is evaluable in space $O(m)$, prints $\circC$, and $\circC$ is also evaluable in space $O(m)$.
\end{lemma}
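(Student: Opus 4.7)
The plan is to follow the derandomized XOR lemma of Impagliazzo-Wigderson using an identical combinatorial-design-based generator, and to replace every probabilistic-method argument in the reconstruction step by an explicit enumeration driven by the averaging sampler from \Cref{prop:samp}, thereby fitting the entire reconstruction into space $O(m)$.

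For the construction, I would apply \Cref{prop:design} with a constant $\alpha=\Theta(\gamma)$ to obtain a design $S_1,\ldots,S_m\subseteq[s]$ with $|S_i|=\alpha s=m$ and $|S_i\cap S_j|\leq 2\alpha^2 s=O(\gamma m)$, then set $m'=s=m/\alpha=\Theta(m/\gamma)$ and define $G(y)=(y_{S_1},\ldots,y_{S_m})$. Since the design itself is $O(s)$-space computable, so is $G$.

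For the reconstruction, given $\circB$ with $\SUC(\circB,f^m\circ G)\geq 2^{-\gamma m}$, write $P_i=\Pr_y[\circB(y)_j=f(y_{S_j})\text{ for all }j\leq i]$, so that $P_0=1$ and $P_m\geq 2^{-\gamma m}$. By an averaging argument some index $i$ satisfies $P_i/P_{i-1}\geq 2^{-\gamma}>1/2$. To capture this conditioning in a circuit, I would hardcode an assignment $w\in\zo^{[s]\setminus S_i}$ of the bits of $y$ outside $S_i$; once $w$ is fixed, for each $j<i$ the value $f(y_{S_j})$ becomes a function of $y_{S_i\cap S_j}$ alone, and so can be stored in a table $T_j$ of size $2^{|S_i\cap S_j|}\leq 2^{O(\gamma m)}$. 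The candidate circuit $\circC_{i,w}(x)$ sets $y_{S_i}=x$, $y_{[s]\setminus S_i}=w$, evaluates $\circB(y)$, checks whether $\circB(y)_j=T_j(x|_{S_i\cap S_j})$ for every $j<i$, and outputs $\circB(y)_i$ if the check passes (and a fresh auxiliary random bit otherwise). The IW analysis shows that for typical $w$ this yields advantage $\Omega(2^{-\gamma m})$ in predicting $f(x)$, which I would then boost to $\SUC>0.99$ by taking the majority vote over $2^{O(\gamma m)}$ independent copies with fresh $w$'s, producing a final circuit of size $2^{O(\gamma m)}\cdot|\circB|$.

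To realize this deterministically in space $O(m)$, I would enumerate $i\in[m]$ explicitly and, for each $i$, enumerate over all $O(m)$-bit seeds of an averaging sampler from \Cref{prop:samp} whose output jointly supplies the many $w$'s needed for the boosting step together with the auxiliary random bits used when the check fails. The tables $T_j$ are generated on the fly by invoking the $O(m)$-space algorithm for $f$ on the $2^{O(\gamma m)}$ relevant inputs. For each seed the algorithm implicitly constructs the candidate $\circC$ and verifies $\SUC(\circC,f)>0.99$ by enumerating $x\in\zo^m$ and comparing $\circC(x)$ with $f(x)$; the first circuit to pass is output. The sampler's $2^{-2m}$ failure probability, combined with a union bound over $\zo^m$, guarantees that at least one seed succeeds. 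The main obstacle I anticipate is arranging IW's reconstruction so that a single short sampler seed can supply all the randomness that the original argument would draw from many independent sources, with failure bounds tight enough to support a union bound over all inputs, while keeping the circuit $\circC$---which carries up to $m$ hardcoded tables of size $2^{O(\gamma m)}$---both describable and evaluable within space $O(m)$.
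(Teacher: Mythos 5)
Your derandomization strategy for the reconstruction step (enumerate sampler seeds from \Cref{prop:samp}, build the look-up tables by querying $f$ in space $O(m)$, and verify each candidate circuit against $f$ by brute force over $\zo^m$) matches the paper's. However, there is a genuine gap in your construction of $G$: you use only the combinatorial design, $G(y)=(y_{S_1},\ldots,y_{S_m})$, whereas the Impagliazzo--Wigderson direct-product generator (which the paper uses verbatim) XORs the design-based part with an expander walk, $G(r,v,d)=\big((r|_{S_1})\oplus\mathtt{EW}(v,d)_1,\ldots,(r|_{S_m})\oplus\mathtt{EW}(v,d)_m\big)$. The expander walk is not an optional optimization: its hitting property is what guarantees that, except with probability $2^{-\Omega(m)}$, a constant fraction of the $m$ output blocks land in the hard set of $f$. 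With the design alone the blocks are only pairwise close to independent, so Chebyshev-type bounds give failure probability polynomial in $1/m$, not exponentially small, and the amplification from constant average-case hardness to $2^{-\gamma m}$ hardness breaks down. You cannot then invoke ``the IW analysis'' for a generator that differs from theirs in exactly the component their analysis relies on.

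A second, related problem is in your amplification step. Your hybrid argument (some $i$ with $P_i/P_{i-1}\geq 2^{-\gamma}$) yields a circuit whose advantage $\Omega(2^{-\gamma m})$ is an average over both $x$ and the randomness $w$; to boost to $\SUC(\circC,f)>0.99$ by majority over independent copies you need a \emph{per-input} guarantee, namely that for at least a $0.99$ fraction of $x$ the randomized circuit is correct with probability at least $1/2+2^{-\gamma m}/c$ over its internal randomness. This is precisely what IW's Theorem~15 (\Cref{prop:iw} in the paper) provides for their generator, via the $2^{-t}$-weighted output rule rather than your ``all previous blocks correct, else random bit'' rule, and its proof again uses the expander-walk component. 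Restoring the IW generator and reconstruction circuit $\circF$, and then applying your sampler-based enumeration to fix its randomness, recovers the paper's proof.
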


We first give the construction of the function $G$, which is called a \emph{direct-product generator} in \cite{IW97}. As in \cite{IW97}, it consists of two components: an expander walk and a combinatorial design. For the expander walk, we need an explicit expander where the neighbors of a vertex can be efficiently computed:
\begin{proposition}[see e.g. \cite{LPS88}]
    There is a constant $\lambda\in(0,1)$, such that for every $m\in\N$, there exists a $4$-regular graph $E_m$ on the vertex set $\zo^m$ with spectral expansion at most $\lambda$, such that given any vertex $v\in\zo^m$, its neighbors can be computed in time $\poly(m)$ and space $O(\log m)$.
\end{proposition}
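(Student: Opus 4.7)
The plan is to construct $E_m$ as a Cayley graph on an abelian group of order exactly $2^m$. For even $m = 2k$, identify $\zo^m$ with $\mathbb{Z}_N \times \mathbb{Z}_N$ where $N = 2^k$, and take the Margulis--Gabber--Galil generators $T_1(x,y) = (x+y, y)$ and $T_2(x,y) = (x, y+x)$ together with their inverses; the resulting Cayley graph is $4$-regular on $2^m$ vertices. Odd $m$ is handled by a small adjustment, e.g.\ working on $\mathbb{Z}_N \times \mathbb{Z}_{2N}$ with $N = 2^{(m-1)/2}$ and modifying the second shear accordingly. An equally acceptable alternative is the Reingold--Vadhan--Wigderson zig-zag construction, which gives explicit constant-degree expanders on any vertex set of the form $D^n$ while keeping neighbor computation in logspace.

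The spectral-gap claim follows from the Fourier-analytic argument of Gabber and Galil, later quantified by Jimbo--Maruoka and Boppana: on the character group of $\mathbb{Z}_N \times \mathbb{Z}_N$, the shears $T_1$ and $T_2$ act by the transpose-inverse linear maps, and the orbit of every nontrivial character under their semigroup spreads out enough to force the second singular value of the normalized adjacency operator to be bounded by an absolute constant $\lambda < 1$ independent of $m$. For the space-efficient neighbor oracle, computing a neighbor of $(x,y) \in \mathbb{Z}_N \times \mathbb{Z}_N$ amounts to adding or subtracting two $k$-bit integers modulo $2^k$, which is in $\mathsf{AC}^0 \subseteq \mathsf{L}$: each output bit can be produced in $O(\log m)$ auxiliary space using a carry-lookahead subroutine that accesses the input tape two-way.

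The main obstacle I anticipate is the spectral analysis itself. The original Margulis construction is $8$-regular, and proving that the $4$-regular variant with only two generators plus inverses still enjoys a constant gap requires careful bookkeeping through the Fourier argument and explicit control of the resulting constants. As a pragmatic fallback, note that if a literal degree of $4$ turns out to be inconvenient, the remainder of the paper only uses the fact that the degree is a constant, so substituting the standard $8$-regular MGG graph costs at most a constant-factor blow-up in the expander-walk seed length and does not affect any downstream theorem.
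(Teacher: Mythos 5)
The paper does not actually prove this proposition; it is stated as a standard citation (to Lubotzky--Phillips--Sarnak, though Margulis--Gabber--Galil or zig-zag would be the constructions that literally live on vertex set $\zo^m$). So the comparison here is really about whether your sketch would stand on its own, and its primary construction has a genuine flaw. The $4$-regular graph generated by the two shears $T_1(x,y)=(x+y,y)$, $T_2(x,y)=(x,y+x)$ and their inverses on $\mathbb{Z}_N\times\mathbb{Z}_N$ with $N=2^k$ is not an expander --- it is not even connected. The origin $(0,0)$ is fixed by both shears, so all four of its edges are self-loops; worse, the sublattice $2\mathbb{Z}_N\times 2\mathbb{Z}_N$ is invariant under $T_1^{\pm1},T_2^{\pm1}$, so the graph splits into many components and the second eigenvalue is $1$. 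The affine offsets in the actual Gabber--Galil construction (e.g.\ neighbors $(x\pm 2y,y)$, $(x\pm(2y+1),y)$, $(x,y\pm 2x)$, $(x,y\pm(2x+1))$, giving degree $8$ with $\lambda\leq 5\sqrt2/8$) exist precisely to break this invariance; they are not optional bookkeeping. A separate terminological slip: this is a Schreier-type graph for an $SL_2$ action, not a Cayley graph of the abelian group $\mathbb{Z}_N\times\mathbb{Z}_N$ --- constant-degree Cayley graphs of abelian groups are provably never expander families, so if your object really were an abelian Cayley graph the spectral claim would be false for that reason alone.

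That said, your own hedges rescue the argument. The $8$-regular Gabber--Galil graph (for even $m$, with a standard adjustment such as $\mathbb{Z}_N\times\mathbb{Z}_{2N}$ for odd $m$), or alternatively the zig-zag construction, does give a constant-degree, constant-$\lambda$ expander on exactly $2^m$ vertices with neighbors computable via $O(m)$-bit modular arithmetic in space $O(\log m)$, and you are right that the only place the paper uses this proposition is the expander walk $\mathtt{EW}$ in the derandomized XOR lemma, where replacing degree $4$ by degree $8$ merely changes $d\in[4]^m$ to $d\in[8]^m$ and perturbs the seed length $m'=\Theta(m/\gamma)$ by a constant factor. So the correct fix is to drop the two-generator linear variant entirely and instantiate the proposition with the standard affine-shear (or zig-zag) expander, adjusting the stated degree to whatever constant that construction provides.
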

Define the expander walk function $\mathtt{EW}:\zo^{3m}\ra(\zo^m)^m$ as follows: Given the input $v\in \zo^m$ and $d=(d_1,\ldots,d_m)\in[4]^m$, the output is sequence of vertices $v_1,\ldots,v_m$ in $E_m$ that starts with $v_1=v$, and take $v_{i+1}$ to be the $d_i$-th neighbor of $v_i$. On the other hand, let $S_1,\ldots,S_m\subseteq [s]$ be the first $m$ sets in the combinatorial design from \Cref{prop:design} with $\alpha=\gamma/2$ and $s=m/\alpha$. Then we defined the function $G:\zo^{3m+s}\ra(\zo^m)^m$ as:
\[G(r,v,d)=\big((r|_{S_1})\oplus \mathtt{EW}(v,d)_1, \ldots, (r|_{S_m})\oplus\mathtt{EW}(v,d)_m\big).\]
Here $r|_S$ is the part of $r\in\zo^s$ on indices $S$, and $\oplus$ is bit-wise XOR. From the definition we have that $G$ can be computed in time $\poly(m)$ and space $O(m)$. The input length of $G$ is $m'=3m+2m/\gamma=O(m/\gamma)$.

Now given $f:\zo^m\ra\zo$, assume there is a circuit $\circB$ such that $\SUC(\circB, f^m\circ G) \geq 2^{-\gamma m}$. Before we move on and show how to reconstruct the circuit $\circC$ efficiently and deterministically from $\circB$, let us first review the reconstruction step in \cite{IW97}. For $i\in[m]$, $x\in\zo^m$, $a\in\zo^{s-m}$, $v\in\zo^m$ and $d\in[4]^m$, let $h(i,x,a,v,d)=(r,v,d)$ where $r\in\zo^s$ such that
\[r|_{S_i}=x\oplus\mathtt{EW}(v,d)_i
\textrm{ and }r|_{\overline{S_i}}=a.\]
The function $h$ is called the \emph{restricting function} of $G$. Given $x\in\zo^m$, with $i,a,v$ and $d$ chosen uniformly at random, they build a circuit $\circF$ that first simulates $\circB$ to compute $\circB(h(i,x,a,v,d))=(y_1,\ldots,y_m)$. Then it computes a number $t$ defined as
\[t=\left|\left\{j\neq i\mid y_j\neq f\big(G_j\circ h(i,x,a,v,d)\big)\right\}\right|,\]
and outputs $y_i$ with probability $2^{-t}$, while outputting a random bit with probability $1-2^{-t}$. To compute $t$, for each $j\neq i$, $f(G_j\circ h(i,x,a,v,d))$ is computed through a non-uniformly constructed look-up table for $f$ of size $2^{\gamma m}$, containing the values of $f(x_j)$ for all possible $j$-th output $x_j$ of $G\circ h$ with the fixed $i,a,v$ and $d$.

We could not resort to non-uniformity to construct the look-up table. Nevertheless, when $f$ is computable in space $O(m)$, we can compute the entire table in space $O(m)$ and hardwire it to the circuit. Even better, when $i,a,v$ and $d$ are given, each output $x_j$ of $G\circ h$ is fixed except for $\gamma m$ bits (corresponding to the coordinates in $S_i\cap S_j$), so we only need to go through all $2^{\gamma m}$ possibilities for these bits to compute the table.

The circuit $\circF$ presented above uses a string $R$ of $|R|=O(m)$ random bits, including $i,a,v,d$ along with $w\in\zo^{m+1}$, the randomness used to decide the final output. It was proved in \cite{IW97} that:
\begin{proposition}[{\cite[Theorem 15]{IW97}}]\label{prop:iw}
    Suppose the $\SUC(\circB, f^m\circ G) \geq 2^{-\gamma m}$. There exists $c>0$ (that depends on $\gamma$), such that the fraction of inputs $x\in \zo^m$ with \[\Pr_R[\circF(x,R)=f(x)]\geq 1/2+2^{-\gamma m}/c\]
    is more than 0.99.
\end{proposition}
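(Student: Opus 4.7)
The plan is to analyze $\circF$ via a Yao-style ``soft conditioning'' hybrid argument, adapting the standard proof of the XOR lemma. For a fixed $x$, averaging first over $w$ (the randomness for $\circF$'s probabilistic output) gives
\[
\Pr_R[\circF(x,R)=f(x)]-\tfrac{1}{2} = \E_{i,a,v,d}\!\left[2^{-t}\bigl(\mathbf{1}[y_i=f(x)]-\tfrac{1}{2}\bigr)\right],
\]
where $y_i$ is the $i$-th output of $\circB(h(i,x,a,v,d))$ and $t$ counts the errors of $\circB$ at the coordinates other than $i$. Setting $T_j(r')=\mathbf{1}[\circB_j(r')=f(G_j(r'))]$ and noting that $G_i\circ h(i,x,a,v,d)=x$ by construction of $h$, one has $\mathbf{1}[y_i=f(x)]=T_i(r')$ with $r'=h(i,x,a,v,d)$. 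The identity $2^{-t}=\prod_{j\neq i}\tfrac{1+T_j(r')}{2}$ is exactly the ``soft indicator'' that all other coordinates are correct, which is Yao's trick.

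The first step is to lower bound the average advantage over uniform $x$. A direct unfolding shows that when $x,i,a,v,d$ are jointly uniform, $r'=h(i,x,a,v,d)$ is uniformly distributed on $G$'s input space, so the hypothesis $\SUC(\circB,f^m\circ G)\geq 2^{-\gamma m}$ translates to $\E_{r'}[\prod_j T_j(r')]\geq 2^{-\gamma m}$. Viewing $(1+T_j)/2$ as a $[1/2,1]$-valued weight, Yao's hybrid argument applied to the telescoping factors $\Pr[T_k=1\mid T_1=\cdots=T_{k-1}=1]$ (whose geometric mean in $k$ is $\geq 2^{-\gamma}$) then yields an average advantage $\E_{x}\!\left[\Pr_R[\circF(x,R)=f(x)]-\tfrac{1}{2}\right]\geq 2^{-\gamma m}/c_1$ for some $c_1=c_1(\gamma)$.

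The second step is to upgrade this average bound to hold pointwise for at least a $99\%$ fraction of $x$. Here the structure of $G$ is essential: by the combinatorial design, for any fixed $i,a,v,d$ each output $G_j\circ h(i,x,a,v,d)$ with $j\neq i$ depends on $x$ only through the at most $\gamma m$ bits indexed by $S_i\cap S_j$, so the per-$x$ advantage has bounded sensitivity to changes in $x$; and the expander walk distributes the offsets $\mathtt{EW}(v,d)_j$ so that the $T_j$'s behave essentially independently across $j$. Combining a McDiarmid-style sensitivity bound with the expander mixing gives concentration of the per-$x$ advantage around its mean, so that, after adjusting the constant, at most a $1\%$ fraction of $x$ falls below the threshold $2^{-\gamma m}/c$. \textbf{The main obstacle} is exactly this concentration step: translating a good average-case advantage into a bound that holds pointwise for most $x$ requires careful joint use of the design (for sensitivity) and the expander walk (for near-independence of the $T_j$), and is the heart of the derandomized XOR lemma.
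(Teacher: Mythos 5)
This proposition is not proved in the paper at all: it is imported verbatim as \cite[Theorem 15]{IW97}, so there is no in-paper argument to compare against. Judged on its own merits, your attempt has a genuine gap precisely where you flag ``the main obstacle.'' Your first step (the soft-conditioning identity $\Pr_R[\circF(x,R)=f(x)]-\tfrac12=\E_{i,a,v,d}[2^{-t}(\mathbf{1}[y_i=f(x)]-\tfrac12)]$, the fact that $G_i\circ h(i,x,a,v,d)=x$, and that $r'=h(i,x,a,v,d)$ is uniform when $x,i,a,v,d$ are) is correct and is indeed how the reconstruction circuit is analyzed. But your second step---upgrading an average advantage of $2^{-\gamma m}/c_1$ to a pointwise bound for $99\%$ of $x$ via ``McDiarmid plus expander mixing''---would fail. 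The mean advantage is exponentially small, so to conclude that at most a $1\%$ fraction of $x$ drops below $2^{-\gamma m}/c$ by concentration you would need the per-$x$ advantage to have fluctuations far below $2^{-\gamma m}$; no bounded-difference or mixing bound gives that, and it is simply false in general (for many $\circB$ the advantage on individual $x$ can be $0$ or even negative while the average is fine).

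The actual argument in \cite{IW97} is not ``average then concentrate'' but a contrapositive hitting/conditioning argument: assume the set $H$ of inputs $x$ on which $\circF$ fails to achieve advantage $2^{-\gamma m}/c$ has density at least $0.01$, and show directly that $\E_{r'}[\prod_j T_j(r')]<2^{-\gamma m}$. The expander walk enters via its hitting property---almost every length-$m$ walk visits $H$ on $\Omega(m)$ coordinates---and the design enters by making the coordinates with $j\in$ (walk hits $H$) behave, conditioned on the others, like fresh attempts to predict $f$ on $H$, each of which succeeds with probability bounded away from $1$ because $x\in H$ means the weighted predictor has no advantage there. Multiplying these conditional bounds over the $\Omega(m)$ hits drives the direct-product success below $2^{-\gamma m}$, a contradiction. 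Since this proposition is exactly the step the paper deliberately cites rather than reproves (the paper's contribution is making the surrounding sampling of $R$ derandomizable, not reproving Theorem 15), the right move for you is either to cite it as the paper does or to carry out the hard-core-set/hitting argument in full; the concentration route you propose does not close the gap.
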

Therefore, the final circuit $\circC$ takes $O(m\cdot 2^{2\gamma m})$ independent copies of $\circF$ and outputs their majority, and there exists a fixing of the randomness that provides the final deterministic circuit $\circC$. We could not afford to store exponentially many random bits if they are independently sampled. Instead, we employ the efficient sampler in \Cref{prop:samp} that uses only $O(m)$ random bits as the seed to generate $2^{O(\gamma m)}$ samples, and we can enumerate over all the seeds to find the one that makes $\SUC(\circC,f)>0.99$. As shown in the proof below, such seed always exists.

\begin{proof}[Proof of \Cref{lemma:derandXOR}]
    Let $\circF:\zo^{m+|R|}\ra\zo$ be the circuit described above, and $c>0$ be the constant in \Cref{prop:iw}. We give the formal description of the linear-space algorithm for the reconstruction procedure as \Cref{alg:xor}.
    \newcommand{\algXORRecon}{\textsc{XOR\_RECON}}
    \begin{algorithm}\caption{\algXORRecon$(f,\circB)$}\label{alg:xor}
        \DontPrintSemicolon
        Let $\SAMP:\zo^{4|R|}\ra (\zo^{|R|})^t$ be the sampler of \Cref{prop:samp} with $\eps=2^{-\gamma m}/(2c)$.\\
        \For{$y\in \zo^{4|R|}$}{
            Let $R_1,\ldots,R_t \gets \SAMP(y)$.\\
            Let $\circC:\zo^m\ra\zo$ be the circuit
            \[\circC(x) = \MAJ(\circF(x,R_1),\ldots,\circF(x,R_t)).\] \\
            \lIf{$\SUC(\circC,f)>0.99$}{\Return{$\circC$}}
        }
    \end{algorithm}

    By \Cref{prop:samp} we have $t=\poly(m/\eps)=2^{O(\gamma m)}$ for $\eps=2^{-\gamma m}/(2c)$.
    From the description we know that $\circF$ has size $|\circB|+2^{\gamma m}\cdot m^{O(1)}$, and therefore $\circC$ has size $t|\circF|+m^{O(1)}=2^{O(\gamma m)}\cdot |\circB|$. When $\circB$ is evaluable in space $O(m)$, $\circC$ is clearly also evaluable in space $O(m)$.

    By the guarantee of the averaging sampler $\SAMP$ in \Cref{prop:samp}, for every $x\in\zo^m$:
    \[
        \Pr_{R_1,\ldots,R_t\la \SAMP(U_{4|R|})}\left[\left|\E_{i\in [t]}[\circF(x,R_i)]-\E_R[\circF(x,R)]\right|\leq \eps\right]\geq 1-2^{-2|R|}.
    \]
    By \Cref{prop:iw}, there exists a subset $V\subseteq\zo^m$ such that $|V|>0.99\cdot 2^m$, such that for every $x\in V$:
    \[
        \left|\E_R[\circF(x,R)]-f(x)\right|\leq 1/2-2^{-\gamma m}/c=1/2-2\eps.
    \]
    Therefore for every $x\in V$, it is implied that
    \[
        \Pr_{R_1,\ldots,R_t\la \SAMP(U_{4|R|})}\left[\left|\E_{i\in [t]}[\circF(x,R_i)]-f(x)\right|\leq 1/2-2\eps+\eps\right]\geq 1-2^{-2|R|},
    \]
    which means that
    \[
        \Pr_{R_1,\ldots,R_t\la \SAMP(U_{4|R|})}\left[\MAJ(\circF(x,R_1),\ldots,\circF(x,R_t))= f(x)\right]\geq 1-2^{-2|R|}>1-1/|V|.
    \]
    By a union bound over $x\in V$, there must exist a $y\in\{0,1\}^{4|R|}$ such that $\circC(x)=f(x)$ for all $x\in V$, which satisfies $\SUC(\circC,f)>0.99$. Therefore the algorithm always returns a valid $\circC$. Moreover, the algorithm runs in space $O(m)$, as it enumerates the seeds of length $O(|R|)=O(m)$, constructs and evaluates the circuit $\circC$ and makes oracle calls to $f$, which all can be done in space $O(m)$.
\end{proof}

\subsection{Derandomizing the Goldreich-Levin Theorem}
\begin{lemma}\label{lemma:GL}
    Given $f:\zo^m\ra\zo^m$, let $g:\zo^m\times\zo^m\ra\zo$ be defined as $g(x,r)=\langle f(x),r\rangle$. Then, given $\delta>0$, there is $\delta' \geq \Omega(\delta^3/m)$ so that, for every $\circB$ satisfying $\ADV(\circB,g)> \delta$, there is a circuit $\circC$ of size at most $|\circB|\cdot (m/\delta)^{O(1)}$ satisfying
    \[\SUC(\circC,f)>\delta'.\]
    Moreover, when $f$ is computable in space $O(m)$, there is a deterministic $O(m)$-space algorithm that, given the circuit $\circB$ which is evaluable in space $O(m)$, prints $\circC$, and $\circC$ is also evaluable in space $O(m)$.
\end{lemma}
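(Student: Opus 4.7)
The plan is to adapt the classical Goldreich--Levin inner-product decoder, using an averaging sampler to derandomize the reconstruction step.

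First, Markov's inequality applied to $\ADV(\circB,g) > \delta$ yields that at least a $\delta/4$ fraction of $x \in \zo^m$ are ``good'', in the sense that $\Pr_r[\circB(x,r)=g(x,r)] \geq 1/2 + \delta/4$. For such an $x$, the function $r \mapsto \circB(x,r)$ is noticeably correlated with the Hadamard encoding $r \mapsto \langle f(x), r\rangle$, enabling GL-style recovery of $f(x)$ from oracle access to $\circB(x,\cdot)$.

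Next, instantiate the GL decoder: pick $k = O(\log(m/\delta))$ base vectors $r_1,\dots,r_k \in \zo^m$ and guess $b_j \in \zo$ as a candidate for $\langle f(x), r_j\rangle$ for each $j\in[k]$. For each nonempty $S \subseteq [k]$, form $r_S = \bigoplus_{j\in S} r_j$ and $b_S = \bigoplus_{j\in S} b_j$; the $T = 2^k-1$ values $\{r_S\}_{S\neq\emptyset}$ are each uniform and pairwise independent when the base vectors are independent, and under a correct guess $b$ we have $b_S = \langle f(x), r_S\rangle$. Define the candidate circuit coordinate-by-coordinate as
\[
\circC_{b,r}(x)_i := \MAJ_{S\neq\emptyset}\bigl(\circB(x, e_i \oplus r_S) \oplus b_S\bigr).
\]
For a good $x$ and a correct guess $b$, each indicator $\mathbf{1}[\circB(x, e_i \oplus r_S) \oplus b_S = f(x)_i]$ has mean $\geq 1/2+\delta/4$, so by Chebyshev applied to the pairwise-independent $r_S$'s the majority recovers $f(x)_i$ with error $O(1/(T\delta^2))$; choosing $T = \Omega(m/\delta^2)$ and union-bounding over the $m$ coordinates gives $\circC_{b,r}(x) = f(x)$ with probability $\geq 1/2$ over $r_1,\dots,r_k$. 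Averaging this over a uniform $b \in \zo^k$ (correct with probability $2^{-k}$) and over the $\delta/4$ fraction of good $x$ gives expected success $(\delta/4)(2^{-k})(1/2) = \Omega(\delta^3/m) =: \delta'$, so some fixing $(b^*, r_1^*,\dots,r_k^*)$ yields $\SUC(\circC_{b^*,r^*}, f) > \delta'$. The reconstruction algorithm enumerates all candidate $(b, r_1,\dots,r_k)$, builds the corresponding circuit of size $|\circB| \cdot (m/\delta)^{O(1)}$, and checks its success by sweeping through every $x \in \zo^m$ and comparing $\circC_{b,r}(x)$ to $f(x)$ --- all in space $O(m)$ given the assumed evaluation bounds on $\circB$ and $f$.

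The main obstacle is keeping the enumeration within space $O(m)$: iterating naively over independent uniform $r_j$'s costs $km = O(m\log(m/\delta))$ counter bits, which exceeds $O(m)$ when $\delta$ is exponentially small. To resolve this, I would replace the independent draw of $r_1,\dots,r_k$ with an $O(m)$-bit seed fed through the averaging sampler of \Cref{prop:samp}, and recover the concentration needed in the Chebyshev step from the sampler's variance guarantee rather than from strict pairwise independence of the $r_S$'s; the remaining $k = O(\log(m/\delta)) \leq O(m)$ guess bits fit inside the same $O(m)$ counter, and the search proceeds in a single outer loop over $O(m)$-bit strings.
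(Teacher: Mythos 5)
Your classical Goldreich--Levin skeleton matches the paper's: a Markov argument to isolate a $\Omega(\delta)$ fraction of good $x$, $k=O(\log(m/\delta))$ base vectors with guessed inner products, subset-XORs $r_S$ to get $2^k-1$ queries from $k$ guesses, Chebyshev on the majority vote, and an exhaustive $O(m)$-space search over seeds and guess strings with direct verification against $f$. You also correctly identify the crux: drawing $r_1,\dots,r_k$ independently costs $km=\omega(m)$ enumeration bits when $\delta=2^{-\Theta(m)}$ (which is exactly the regime used in \Cref{thm:IWrecon}, where $\delta=1/(8n)$).

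The gap is in your fix for that crux. You propose to generate the base vectors from an $O(m)$-bit seed via the averaging sampler of \Cref{prop:samp} and to ``recover the concentration needed in the Chebyshev step from the sampler's variance guarantee.'' But the sampler's guarantee concerns the empirical average of a function over its \emph{own output points} $q_1,\dots,q_t$; the Chebyshev step in GL needs near-pairwise-independence of the $2^k-1$ \emph{derived} vectors $r_S=\bigoplus_{j\in S}r_j$, i.e.\ a bound on $\Var\bigl(\sum_S \mathbb{I}[r_S\oplus e_i\in A]\bigr)$. Nothing in the definition of an averaging sampler controls the joint distribution of XORs of subsets of its outputs --- a perfectly good sampler can have, say, $q_1\oplus q_2\oplus q_3$ supported on a single point, and more generally its outputs are heavily correlated in ways that need not wash out under subset-XOR. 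So the variance bound you invoke is unjustified, and with it the whole success-probability calculation. The paper resolves this with a different tool: an $\eps$-biased generator (\Cref{prop:smallbias}) with $\eps=2^{-4m-1}$ and seed length $O(m)$. Small-bias is exactly the property that survives taking parities of subsets of output bits, so \Cref{lem:Fourier} shows each $r^J$ is $2^{-2m}$-close to uniform and each pair $(r^J,r^{J'})$ is $2^{-2m}$-close to a uniform pair (\Cref{clm:closeUnif}), which gives $\Cov(\zeta_J,\zeta_{J'})\le 2^{-2m}$ and hence the variance bound $O(2^\ell)$ that Chebyshev needs. If you substitute the small-bias generator for the sampler (keeping everything else in your write-up), the argument goes through; as written, the derandomization step --- which is the entire point of the lemma --- does not. (A minor additional omission: the case $\delta<2^{-m}$, where the paper just outputs a constant circuit.)
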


Note that the original Goldreich-Levin theorem~\cite{GL89} does not guarantee (and in fact does not give) an efficient deterministic reconstructor, as it is not randomness efficient. A later work of Hoza and Klivans~\cite{HK18} achieves this, though with a significantly more involved proof. As such, we directly show this using small-bias spaces, which we define now:
\begin{definition}\label{def:smallbias}
    A function $G:\zo^t\ra\zo^k$ is an $\eps$-\textbf{biased generator} if $G(U_t)$ is a $\eps$-biased probability space over $\zo^k$, which formally means that for every $T\in \zo^k$,
    \[\Pr_{y\la U_t}\left[\langle T, G(y)\rangle=1\right] \in [1/2-\eps,1/2+\eps].\]
\end{definition}
We recall small-bias generators exist with good seed length, and moreover these generators can be evaluated in small space:
\begin{proposition}[\cite{NaorN93}]\label{prop:smallbias}
    Given $k\in \N$ and $\eps>0$, there is an $O(t)$-space evaluable $\eps$-biased generator $\BIAS:\zo^t\ra\zo^k$ with seed length $t=O(\log(k/\eps))$.
\end{proposition}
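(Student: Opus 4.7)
The plan is to implement the classical algebraic construction of $\eps$-biased generators due to Alon, Goldreich, H{\aa}stad and Peralta, which already achieves the optimal seed length. Fix $\ell = \lceil \log_2(k/\eps) \rceil$ and let $\mathbb{F}_{2^\ell}$ be the field obtained by fixing an irreducible polynomial of degree $\ell$ over $\mathbb{F}_2$. The seed will be a pair $(x,y) \in \mathbb{F}_{2^\ell} \times \mathbb{F}_{2^\ell}$, giving total seed length $t = 2\ell = O(\log(k/\eps))$. Choosing a basis that identifies $\mathbb{F}_{2^\ell}$ with $\mathbb{F}_2^\ell$ as an $\mathbb{F}_2$-vector space, I would define
\[
\BIAS(x,y)_i \;=\; \langle x^{i-1},\, y\rangle, \qquad i \in [k],
\]
where $\langle \cdot,\cdot\rangle$ denotes the standard bilinear form on $\mathbb{F}_2^\ell$.

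For the bias analysis, the key step is to fix any nonempty $T \subseteq [k]$ and observe that
\[
\bigoplus_{i\in T} \BIAS(x,y)_i \;=\; \left\langle \sum_{i\in T} x^{i-1},\; y\right\rangle \;=\; \langle p_T(x),\, y\rangle,
\]
where $p_T(X) = \sum_{i\in T} X^{i-1}$ is a nonzero polynomial of degree at most $k-1$ over $\mathbb{F}_{2^\ell}$. For uniformly random $x\in\mathbb{F}_{2^\ell}$, the element $p_T(x)$ vanishes at at most $k-1$ field points, so $\Pr_x[p_T(x)=0] \leq k/2^\ell \leq \eps$. Conditioned on $p_T(x)\neq 0$, the inner product $\langle p_T(x), y\rangle$ is uniformly distributed in $\{0,1\}$, because $y$ is uniform and the inner product of a uniform vector with any fixed nonzero vector in $\mathbb{F}_2^\ell$ is unbiased. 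Combining the two cases, the bias of $\bigoplus_{i\in T}\BIAS(x,y)_i$ is at most $k/2^\ell \leq \eps$, exactly the condition demanded by \Cref{def:smallbias}.

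For space efficiency, I would show that each output bit is computable in $O(\ell) = O(t)$ space. An explicit irreducible polynomial of degree $\ell$ over $\mathbb{F}_2$ can be produced in $O(\ell)$ workspace by enumerating degree-$\ell$ polynomials $f$ and testing irreducibility (for instance via $\gcd(X^{2^i}-X,\, f(X))$ for $i\leq \ell/2$, computed by repeated squaring modulo $f$). Given this modulus, field addition and multiplication on $\ell$-bit representatives each use $O(\ell)$ space. The power $x^{i-1}$ is evaluated by repeated squaring on the binary expansion of the index $i-1$ with a single $\ell$-bit accumulator overwritten in place, again in $O(\ell)$ space, and the final inner product is an XOR of $\ell$ bit-products. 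The hard part — really more a bookkeeping concern than a conceptual obstacle — is verifying that each of these arithmetic steps genuinely fits in $O(\ell)$ workspace rather than $O(\ell^2)$; but since $\ell = O(\log(k/\eps))$, even the brute-force construction of the modulus runs in $\poly(k/\eps)$ time within the required space bound, so the whole generator is $O(t)$-space evaluable as claimed.
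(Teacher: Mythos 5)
Your proof is correct, but note that the paper does not prove \Cref{prop:smallbias} at all --- it is imported as a black box from Naor--Naor, so you are supplying a self-contained argument where the paper offers only a citation. Your route is the powering construction of Alon--Goldreich--H{\aa}stad--Peralta rather than the original Naor--Naor construction; it achieves the same asymptotic seed length $t=O(\log(k/\eps))$, and its bias analysis (a nonzero polynomial $p_T$ of degree $<k$ has at most $k-1$ roots, and conditioned on $p_T(x)\neq 0$ the inner product with a uniform $y$ is exactly unbiased) is clean and complete. The one substantive point for this paper is the $O(t)$-space evaluability, since that is what the surrounding lemmas actually rely on, and your treatment of it is adequate: brute-force search for an irreducible modulus, schoolbook field arithmetic, and repeated squaring with in-place accumulators all fit in $O(\ell)$ workspace, and since only $O(1)$ field elements need to be held at any time there is no hidden blowup from composing these steps. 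Two cosmetic remarks: \Cref{def:smallbias} as written quantifies over \emph{all} $T\in\zo^k$ including $T=0^k$, for which no generator can satisfy the condition; your implicit restriction to nonempty $T$ is the standard (and clearly intended) reading, consistent with how the proposition is used in \Cref{lem:Fourier} and \Cref{clm:closeUnif}. Also, your final bias bound of $k/2^{\ell}\le\eps$ is actually a factor of two generous --- the deviation of $\Pr[\langle p_T(x),y\rangle=1]$ from $1/2$ is $\tfrac{1}{2}\Pr[p_T(x)=0]\le\eps/2$ --- which only helps.
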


We require a basic Fourier-analytic lemma, that states that a small-bias space fools the conjunction of $k$ parities.
\begin{lemma}\label{lem:Fourier}
    Let $\BIAS:\zo^t\ra\zo^k$ be an $\eps$-biased generator. Then for every collection $T_1,\ldots,T_d\in \zo^k$ and $v_1,\ldots,v_d \in \zo$ we have
    \[
    \left|\E_{r\la \BIAS(U_t)}\left[\bigwedge_{i\in [d]}(\langle T_i,r\rangle \oplus v_i )\right]-\E_{r\la U_k}\left[\bigwedge_{i\in [d]}(\langle T_i,r\rangle \oplus v_i)\right]\right|\leq 2\eps.
    \]
\end{lemma}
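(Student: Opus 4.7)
\medskip

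\noindent\textbf{Proof plan for \Cref{lem:Fourier}.} The plan is a standard Fourier expansion: rewrite the AND of parities as a short linear combination of characters and then use the $\eps$-bias guarantee on each nontrivial character. Writing $\chi_T(r) \defeq (-1)^{\langle T, r\rangle}$, for each $i$ I first use the identity
\[
\langle T_i,r\rangle \oplus v_i \;=\; \frac{1-(-1)^{\langle T_i,r\rangle \oplus v_i}}{2} \;=\; \frac{1-(-1)^{v_i}\chi_{T_i}(r)}{2}.
\]
Taking the product over $i \in [d]$ and expanding gives
\[
\bigwedge_{i\in[d]}\!\bigl(\langle T_i,r\rangle \oplus v_i\bigr) \;=\; \prod_{i=1}^{d}\frac{1-(-1)^{v_i}\chi_{T_i}(r)}{2} \;=\; \frac{1}{2^d}\sum_{S \subseteq [d]} (-1)^{|S| + \sum_{i\in S} v_i}\, \chi_{\,\sum_{i\in S} T_i}(r),
\]
where the character sum in the subscript is over $\zo^k$. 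This expresses the conjunction as a linear combination of $2^d$ characters with coefficients of magnitude $1/2^d$.

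Next I take expectations of both sides under $r \la \BIAS(U_t)$ and $r \la U_k$ and subtract term-by-term. For any $S$ with $\sum_{i \in S} T_i = 0$, the character is the constant $1$, so both expectations equal $1$ and the term cancels. For any $S$ with $T \defeq \sum_{i\in S} T_i \neq 0$, we have $\E_{r \la U_k}[\chi_T(r)] = 0$, and by \Cref{def:smallbias} applied to $T$,
\[
\bigl|\E_{r \la \BIAS(U_t)}[\chi_T(r)]\bigr| \;=\; \bigl|1 - 2\Pr_{r\la \BIAS(U_t)}[\langle T,r\rangle = 1]\bigr| \;\leq\; 2\eps.
\]
Summing the absolute differences over all $S$ (at most $2^d$ of them) and multiplying by the $1/2^d$ coefficient gives the claimed bound of $2\eps$.

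I do not expect any real obstacle here; the only mildly delicate point is that the paper's definition of $\eps$-bias is stated in terms of the probability of the inner product being $1$ (rather than the expectation of $\chi_T$), which costs a factor of two and is exactly what turns the bound into $2\eps$ rather than $\eps$. Everything else is bookkeeping of signs in the character expansion.
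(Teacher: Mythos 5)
Your proof is correct and takes essentially the same route as the paper: both expand the conjunction as a $2^{-d}$-weighted sum over subsets $S\subseteq[d]$ of the parities $\langle\bigoplus_{i\in S}T_i,r\rangle$ (the paper in $\{0,1\}$/XOR form, you in $\pm1$/character form) and then apply the $\eps$-bias guarantee to each nontrivial parity, accumulating total error $2\eps$.
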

\begin{proof}
    We have
    \begin{align*}
        \bigwedge_{i\in [d]}(\langle T_i,r\rangle \oplus v_i ) &= 1-2\cdot 2^{-d}\sum_{S\subseteq[d]}\bigoplus_{i\in S}\neg\left(\langle T_i,ry\rangle \oplus v_i\right)\\
        &= 1-2\cdot 2^{-d}\sum_{S\subseteq [d]}\left(\left\langle \bigoplus_{i\in S}T_i,r\right\rangle\oplus \bigoplus_{i\in S}\neg v_i\right)
    \end{align*}
    and as $\BIAS$ fools all such parities to error $\eps$ in the summation over $S\subseteq[d]$, we have that the total error is at most $2\eps$.
\end{proof}

\begin{proof}[Proof of \Cref{lemma:GL}]
If $\delta<2^{-m}$, we can choose $\delta'=2^{-m}$ and the lemma trivially holds for a circuit $\circC$ outputting a constant. Therefore, from now on we assume that $\delta\geq 2^{-m}$. We formally state our algorithm as \Cref{alg:GL}, with $\delta'$ to be determined later. Note that $\ell=O(m)$, and therefore in the $\eps$-biased generator $\BIAS:\zo^t\ra \zo^{\ell\times m}$ we have $t=O(\log(\ell m/\eps))=O(m)$ with $\eps=2^{-4m-1}$, and the algorithm runs in space $O(t+\ell+m)=O(m)$.

\newcommand{\algGLRecon}{\textsc{GL\_RECON}}
\begin{algorithm}\caption{\algGLRecon$(f,B)$}\label{alg:GL}
    \DontPrintSemicolon
    Let  $\ell \gets \lceil \log_2(128m/\delta^2+1)\rceil$.\\
    Let $\BIAS:\zo^{t}\ra \zo^{\ell\times m}$ be the generator of \Cref{prop:smallbias} with $\eps=2^{-4m-1}$.\\
    \For{$y\in \zo^{t}$}{
        Let $r_1,\ldots,r_\ell \la \BIAS(y)$.\\
        \For{$(b_1,\ldots,b_\ell)\in \zo^\ell$}{
            Let $\circC:\zo^m\ra\zo^m$ be the circuit that for each $i\in[m]$:
            \[\circC_i(x) = \MAJ_{J\subseteq[\ell]:J\neq \emptyset}(b^J\oplus \circB(x,r^J\oplus e_i)).\] \\
            \lIf{$\SUC(\circC,f)>\delta'$}{\Return{$\circC$}.}
        }
    }
\end{algorithm}
    
    We view the output of $\BIAS$ as a tuple of $\ell$ vectors:
    \[
    \BIAS(y) = (r_1,\ldots,r_\ell), \quad r_i\in \zo^m.
    \]
    For convenience, let $\vec{r}:=(r_1,\ldots,r_\ell)$ and $\vec{b}:=(b_1,\ldots,b_\ell)$. For every $J\subseteq [\ell]$, let:
    \[r^J :=\bigoplus_{i\in J}r_i,\quad\quad b^J = \bigoplus_{i\in J} b_i.\]
    Note that in the original GL algorithm, all $r_i$'s are i.i.d. uniform over $\zo^m$. We first argue that our distribution over $r^J$'s satisfies (approximately) the two properties used in the analysis of the original algorithm:
    \begin{claim}\label{clm:closeUnif}
        The following two properties hold:
        \begin{enumerate}
            \item For every non-empty $J$, $r^J$ is $2^{-2m}$-close to $U_m$ in $\ell_1$-distance.
            \item For every non-empty $J$ and $J'$ where $J\neq J'$, $(r^J,r^{J'})$ is $2^{-2m}$ close to $U_{2m}$ in $\ell_1$-distance.
        \end{enumerate}
    \end{claim}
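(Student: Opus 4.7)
The plan is to use standard Fourier analysis over $\mathbb{F}_2$ to convert the $\eps$-bias of $\BIAS$ into $\ell_1$-closeness of the derived distributions. The key observation is that both $r^J$ and $(r^J,r^{J'})$ are $\mathbb{F}_2$-linear images of $\BIAS(y)$, so every nontrivial parity character of these distributions pulls back to a parity character of $\BIAS(y)$, which is controlled by \Cref{def:smallbias} as long as the pullback is itself nontrivial. Once all nontrivial Fourier coefficients are bounded, Parseval plus Cauchy--Schwarz will yield the desired $\ell_1$ bound.

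For part 1, given a nonempty $J \subseteq [\ell]$ and a nonzero $T \in \zo^m$, I would write $\langle T,r^J\rangle = \langle T',\BIAS(y)\rangle$, where $T' \in \zo^{\ell m}$ places $T$ in each block indexed by $i \in J$ and $0$ elsewhere. Since $T \neq 0$ and $J \neq \emptyset$, the vector $T'$ is nonzero, so $|\E[(-1)^{\langle T,r^J\rangle}]| \leq 2\eps$ by \Cref{def:smallbias}. For part 2, the analogous pullback $T''$ of a nonzero $(T_1,T_2)$ places $T_1 \oplus T_2$ on blocks in $J \cap J'$, $T_1$ on $J \setminus J'$, $T_2$ on $J' \setminus J$, and $0$ elsewhere. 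The substantive step is verifying $T'' \neq 0$: if exactly one of $T_1, T_2$ is zero, say $T_2 = 0$, then $T''$ equals $T_1 \neq 0$ on the nonempty set $J$; if $T_1, T_2$ are both nonzero and $T_1 \neq T_2$, then either $J \cap J' \neq \emptyset$ and $T'' = T_1 \oplus T_2 \neq 0$ there, or $J \cap J' = \emptyset$ and $T'' = T_1$ on the nonempty $J$; and if $T_1 = T_2 \neq 0$, the hypothesis $J \neq J'$ forces one of $J \setminus J'$, $J' \setminus J$ to be nonempty and $T'' = T_1 \neq 0$ on that block.

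To close the argument, I would apply the standard Fourier-to-$\ell_1$ conversion: if a distribution $X$ on $\zo^k$ satisfies $|\E[(-1)^{\langle T,X\rangle}]| \leq 2\eps$ for every nonzero $T$, then Parseval gives $\|p_X - u_k\|_2^2 = 2^{-k}\sum_{T \neq 0}\E[(-1)^{\langle T,X\rangle}]^2 \leq 4\eps^2$, and Cauchy--Schwarz yields $\|p_X - u_k\|_1 \leq 2\eps\sqrt{2^k}$. Plugging in $\eps = 2^{-4m-1}$ from \Cref{alg:GL} together with $k \in \{m,2m\}$ gives $\ell_1$ distances of at most $2^{-7m/2}$ and $2^{-3m}$ respectively, both comfortably below the claimed $2^{-2m}$. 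I do not anticipate any real obstacle beyond the brief case analysis in part 2; everything else is routine Fourier bookkeeping.
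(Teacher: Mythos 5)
Your proposal is correct, and it takes a mildly but genuinely different route from the paper's. The paper works pointwise: it invokes \Cref{lem:Fourier} to show that each point probability $\Pr[r^J=v]$ is within $2\eps$ of the corresponding probability under a truly uniform seed, uses linear independence of the vectors $\{T_{i,J}\}_{i\in[m]}$ (resp.\ of $\{T_{i,J}\}\cup\{T_{i,J'}\}$) to identify that latter probability as exactly $2^{-m}$ (resp.\ $2^{-2m}$), and then sums over all $2^m$ (resp.\ $2^{2m}$) points, yielding $\ell_1$ error $2\eps\cdot 2^k$ --- which for the pair case is exactly the claimed $2^{-2m}$ with no slack. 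You instead bound every nontrivial Fourier coefficient of the output distribution by $2\eps$ directly and convert to $\ell_1$ via Parseval and Cauchy--Schwarz, getting $2\eps\cdot 2^{k/2}$, a quadratically better dependence on the domain size; your explicit case analysis verifying that the pullback character $T''$ is nonzero plays exactly the role that the linear-independence observation plays in the paper (and is the one step in either argument that actually uses $J\neq\emptyset$ and $J\neq J'$). Both arguments bottom out in the same fact --- every nontrivial $\mathbb{F}_2$-linear functional of $(r^J,r^{J'})$ pulls back to a nontrivial parity of $\BIAS(y)$ --- so the difference is purely in the bookkeeping used to pass from character bias to statistical distance; your version is tighter and bypasses \Cref{lem:Fourier} entirely, while the paper's version reuses that lemma, which it needs anyway. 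One pedantic note: \Cref{def:smallbias} as literally written quantifies over all $T\in\zo^k$ including $T=0$, for which the stated condition fails trivially; like the paper, you are implicitly (and correctly) reading it as a statement about nonzero $T$, which is why your nonvanishing check for $T''$ is necessary.
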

    \begin{proof}
        For $i\in [m]$, the $i$-th bit of $r^J$ can be written as $\langle T_{i,J},\BIAS(y)\rangle$ where $T_{i,J}$ indicates a non-empty subset of bits. From \Cref{lemma:GL} we know that for every $v\in\zo^m$,
        \[\left|\Pr_{y\la U_t}[r^J=v]-
        \Pr_{r\la U_{\ell m}}\left[\bigwedge_{i\in[m]}(\langle T_{i,J},r\rangle=v_i)\right]\right|\leq 2\eps.\]
        Notice that $\{T_{i,J}\}_{i\in[m]}$ are linearly independent, and thus $(\langle T_{i,J},r\rangle)_{i\in[m]}$ is uniformly distributed over $\zo^m$. Therefore taking the sum over $v\in\zo^m$ we have that $r^J$ is $2\eps\cdot 2^m\leq 2^{-2m}$-close to $U_m$ in $\ell_1$ distance.

        When $J\neq J'$ are both non-empty, $\{T_{i,J}\}_{i\in[m]}\cup \{T_{i,J'}\}_{i\in[m]}$ are still linearly independent. For the same reason as above, $(r^J,r^{J'})$ is $2\eps\cdot 2^2m=2^{-2m}$-close to $U_{2m}$ in $\ell_1$ distance.
    \end{proof}
    
    Now recall that for $i\in [m]$ the $i$th bit of the output of $\circC$ is 
    \[
    \circC_i(x) = \MAJ_{J:J\neq \emptyset}(b^J\oplus \circB(x,r^J\oplus e_i)).
    \]
    Thus $\circC$ has size $|\circC| \leq (|\circB|+O(\ell))\cdot 2^\ell m\leq |\circB|\cdot O(2^\ell \ell m)=|\circB|\cdot (m/\delta)^{O(1)}$ as claimed. To analyze the performance of $\circC$, let
    \[S := \{x\in \zo^m:\Pr_{z\la U_m}[\circB(x,z)=g(x,z)]\geq 1/2+\delta/2\}.\]
    By a standard averaging argument, $|S| \geq (\delta/2)\cdot 2^m$.
    \begin{claim}\label{claim:majbias}
        For every $x\in S$ and $i\in [m]$,
        \[
        \Pr_{(r_1,\ldots,r_\ell)\la \BIAS(U_t)}\left[\left|\{J:\circB(x,r^J\oplus e_i) = g(x,r^J\oplus e_i)\}\right|\leq\frac{1}{2}(2^\ell-1)\right]\leq \frac{1}{2m}.
        \]
    \end{claim}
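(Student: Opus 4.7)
\medskip
\noindent\emph{Proof plan for \Cref{claim:majbias}.} The plan is to apply Chebyshev's inequality to the count $W := \sum_{J\neq\emptyset} Z_J$, where $Z_J := \mathbbm{1}[\circB(x,r^J\oplus e_i)=g(x,r^J\oplus e_i)]$, with randomness taken over $(r_1,\ldots,r_\ell)\la\BIAS(U_t)$. Write $p_x := \Pr_{z\la U_m}[\circB(x,z)=g(x,z)]$, which satisfies $p_x \geq 1/2+\delta/2$ because $x\in S$. We may assume $\delta\geq 2^{-m}$, the complementary case being handled trivially at the start of the algorithm.

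First I would lower-bound the mean. By \Cref{clm:closeUnif}(1), the marginal distribution of $r^J$ is $2^{-2m}$-close to $U_m$ in $\ell_1$, so each $\E[Z_J]$ differs from $p_x$ by at most $2^{-2m-1}$. Under $\delta\geq 2^{-m}$ this gives $\E[Z_J]\geq 1/2+\delta/3$, hence $\E[W]-\tfrac12(2^\ell-1)\geq (2^\ell-1)\delta/3$. The bad event $\{W\leq \tfrac12(2^\ell-1)\}$ is therefore contained in $\{W-\E[W]\leq -(2^\ell-1)\delta/3\}$.

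Next I would bound $\Var[W]=\sum_J \Var[Z_J]+\sum_{J\neq J'}\Cov[Z_J,Z_{J'}]$. Trivially $\Var[Z_J]\leq 1/4$. For the off-diagonal terms, observe that if $(r^J,r^{J'})$ were exactly uniform on $U_{2m}$ (which is possible for $J\neq J'$ by linear independence), then $Z_J$ and $Z_{J'}$ would be independent with joint mean $p_x^2$. The naive bound from \Cref{clm:closeUnif}(2) is $|\Cov[Z_J,Z_{J'}]|=O(2^{-2m})$, which I expect to be too loose once $\delta$ approaches $2^{-m}$. To sharpen it, I would Fourier-expand the indicator $E_x(z):=\mathbbm{1}[\circB(x,z)=g(x,z)]=\sum_S\widehat{E_x}(S)\chi_S(z)$, write
\[
\E[Z_JZ_{J'}]=\sum_{S,S'}\widehat{E_x}(S)\widehat{E_x}(S')\chi_S(e_i)\chi_{S'}(e_i)\,\E[\chi_S(r^J)\chi_{S'}(r^{J'})],
\]
and observe that for $J\neq J'$ nonempty every non-$(0,0)$ pair $(S,S')$ induces a non-trivial parity on $r\in\zo^{\ell m}$, whose expectation has magnitude at most $\eps$ by $\eps$-bias. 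Combined with $\sum_S|\widehat{E_x}(S)|\leq 2^{m/2}$ from Parseval and Cauchy–Schwarz, this gives $|\Cov[Z_J,Z_{J'}]|\leq O(\eps\cdot 2^m)=O(2^{-3m})$ when $\eps=2^{-4m-1}$. Thus $\Var[W]\leq (2^\ell-1)/4+O((2^\ell)^2 \cdot 2^{-3m})$.

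Finally, Chebyshev gives
\[
\Pr\Bigl[W\leq \tfrac12(2^\ell-1)\Bigr]\leq \frac{9\Var[W]}{(2^\ell-1)^2\delta^2}\leq \frac{9}{4(2^\ell-1)\delta^2}+O\!\left(\frac{2^{-3m}}{\delta^2}\right).
\]
The first term is at most $1/(56m)$ by the algorithm's choice $2^\ell\geq 128 m/\delta^2$; the second term is $O(2^{-m})$ using $\delta\geq 2^{-m}$, which is below $1/(2m)-1/(56m)$ for $m$ larger than a fixed constant (and constant $m$ can be absorbed by padding or a direct brute-force circuit). The main obstacle is exactly the covariance step: the pointwise $\ell_1$ bound of \Cref{clm:closeUnif}(2) is insufficient at the hardest parameter regime, so it must be replaced by the Fourier/small-bias argument above, which exploits the linear independence of the characters $\{T_{i,J}\}\cup\{T_{i,J'}\}$ already used in the proof of \Cref{clm:closeUnif}.
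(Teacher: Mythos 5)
Your proposal is correct and follows the same skeleton as the paper's proof: define indicator variables $Z_J$ for the nonempty subsets $J$, lower-bound their mean using the $\ell_1$-closeness of each $r^J$ to uniform, bound the variance of the sum, and finish with Chebyshev using the choice $2^\ell\geq 128m/\delta^2$. The one place you genuinely diverge is the covariance step. The paper bounds $\Cov(Z_J,Z_{J'})\leq 2^{-2m}$ directly from the pairwise $\ell_1$-closeness of \Cref{clm:closeUnif}(2) and then asserts $2^{2\ell}\cdot 2^{-2m}\leq 2^\ell$, which requires $2^\ell\leq 2^{2m}$ — a condition that is \emph{not} implied by the standing assumption $\delta\geq 2^{-m}$ alone (at $\delta=2^{-m}$ one has $2^\ell\approx 256m\cdot 2^{2m}$). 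Your instinct that the naive $O(2^{-2m})$ bound is too loose at the bottom of the parameter range is therefore well-founded. Your replacement — Fourier-expanding the indicator, noting that for $J\neq J'$ every non-trivial pair of characters composes to a non-trivial parity of $\BIAS(y)$ by the linear independence of $\{T_{i,J}\}\cup\{T_{i,J'}\}$, and paying $\sum_S|\widehat{E_x}(S)|\leq 2^{m/2}$ via Cauchy--Schwarz and Parseval — correctly yields $|\Cov(Z_J,Z_{J'})|=O(\eps\cdot 2^m)=O(2^{-3m})$, which makes the off-diagonal contribution $O(2^{-3m}/\delta^2)=O(2^{-m})$ after Chebyshev and closes the gap uniformly over all $\delta\geq 2^{-m}$. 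What the paper's route buys is brevity (it reuses \Cref{clm:closeUnif} verbatim and works in the parameter regime actually used downstream); what yours buys is a proof of the claim exactly as stated, at the cost of an extra Fourier-analytic argument that exploits the $\eps$-biased generator more directly than the stated closeness claims do.
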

    \begin{proof}
        For the remainder of the proof we fix $x$ and $i$.
        Let $A\subset \zo^m$ be the set of values $r$ on which $\circB(x,r) = g(x,r)$. By the fact that $x\in S$ we have $|A|\geq (1/2+\delta/2)\cdot 2^m$. Furthermore, for each $y\in\zo^t$ (where $y$ is the input to $\BIAS$) let 
        \[\zeta_J(y) = \mathbb{I}[r^J\oplus e_i \in A]\]
        and observe that $\zeta_J=1$ is equivalent to $\circB(x,r^J\oplus e_i)= g(x,r^J\oplus e_i)$, i.e. $\circB$ computes the inner product with $f(x)$ correctly on that input. Now observe that by \Cref{clm:closeUnif}, 
        \[\E_y[\zeta_J]=\Pr_y[\zeta_J(y)=1]\geq 1/2+\delta/2 - 2^{-2m} \geq 1/2+\delta/4.\]
        We now bound the variance of the number of such places where we compute the inner product correctly. Let
        \begin{align*}
            \sigma^2 = \Var\left(\sum_{J}\zeta_J\right) 
            &= \sum_{J,J'}\Cov(\zeta_J,\zeta_{J'})\\
            &\leq \sum_J\Var(\zeta_J)+\sum_{J,J'} 2^{-2m}\\
            &\leq  2^\ell +2^{2\ell}\cdot 2^{-2m} \leq 2^{\ell+1}
        \end{align*}
        where the first inequality follows from \Cref{clm:closeUnif}.
        Now the result follows by Chebyshev's inequality and a union bound. For convenience let $d=2^\ell-1$, and the probability in the claim equals:
        \begin{align*}
            \Pr_y\left[\sum_J\zeta_J \leq \frac{d}{2}\right] &\leq \Pr_y\left[\left|\sum_J\zeta_J - \E[\zeta_J]\cdot d\right| \geq \left(\frac{d \delta}{4 \sigma}\right)\cdot \sigma\right]\\
            &\leq \frac{16\sigma^2}{\delta^2 (2^\ell-1)^2}
            \leq \frac{32\sigma^2}{\delta^2 2^{2\ell}}
            \leq\frac{64}{\delta^2 2^{\ell}} \leq \frac{1}{2m}. \qedhere
        \end{align*}
    \end{proof}
    Notice that when $\circB(x,r^J\oplus e_i) = g(x,r^J\oplus e_i)$ and for every $j\in[\ell]$, $b_j=g(x,r_j)$, we have
    \[b^J\oplus \circB(x,r^J\oplus e_i)=g(x,r^J)\oplus g(x,r^J\oplus e_i)=g(x,e_i)=f_i(x).\]
    Thus, using a union bound over $i\in[m]$ on \Cref{claim:majbias}, we have that for every $x\in S$,
    \begin{align*}
        \Pr_{\substack{\vec{r}\la \BIAS(U_t)\\ \vec{b}\la U_l}}[\circC(x)=f(x)] 
        \geq &\ \Pr_{\vec{r}\la \BIAS(U_t)}\left[\forall i\in[m],\left|\{J:\circB(x,r^J\oplus e_i) = g(x,r^J\oplus e_i)\}\right|>\frac{1}{2}(2^l-1)\right]\\
        &\ \cdot\Pr_{\vec{b}\la U_\ell}\big[\forall j\in [\ell], b_j = g(x,r_j)\big]\\
        \geq &\ \frac{1}{2}\cdot\Pr_{\vec{b}\la U_\ell}\big[\forall j\in [\ell], b_j = g(x,r_j)\big] \geq 2^{-\ell-1}.
    \end{align*}
    Thus, there is an assignment of $y$ and $\vec{b}$ such that $\circC$ computes $f$ correctly on at least $|S|\cdot 2^{-\ell-1} \geq 2^m\cdot \delta 2^{-\ell-2}$ inputs. Moreover, we can find such a circuit by enumerating the assignments to $y$ and $\vec{b}$, and verifying the success probability by evaluating $\circC$ and $f$ over all $x\in\zo^m$. Therefore letting
    \[\delta'=\delta 2^{-\ell-2}=\Omega(\delta^3/m)\]
    completes the proof.
\end{proof}

\subsection{Space-Efficient Nisan-Wigderson PRG}\label{sec:nw}
We recall the argument of~\cite{KM02} that there is a space-efficient implementation of the Nisan-Wigderson~\cite{NW94} PRG, using the linear-space constructible combinatorial design (\Cref{prop:design}). While we rephrase their result in our notation, we make no changes to the construction, as (in contrast to all other steps) the existing implementation satisfies our desired reconstruction property.

\begin{lemma}\label{lemma:NW}
    Given $\rho>0$ and $n\in \N$ and a family of functions  $f_m:\zo^m\ra\zo\in\SPACE[m]$, there exists an $m=\Theta(\log n)$ and $G:\zo^s\ra\zo^n$ with $s=O(m)$ such that, given a circuit $\circB$ which is a next-bit predictor for $G$ with advantage $\varepsilon$, there is a circuit $\circC$ of size $|\circB|+O(n2^{\rho m})$ satisfying
    \[\ADV(\circC,f_m)>\eps.\]
    Moreover, there is an deterministic $O(m)$-space algorithm that, given the circuit $\circB$ which is evaluable in space $O(m)$, prints $\circC$, and $\circC$ is also evaluable in space $O(m)$.
\end{lemma}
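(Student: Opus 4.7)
The plan is to instantiate the Nisan--Wigderson PRG using the linear-space combinatorial design of \Cref{prop:design} and to carry out the reconstruction step by deterministic enumeration in place of the usual probabilistic choice. First, set $\alpha=\rho/2$ in \Cref{prop:design}, let $\beta=\beta(\rho)$ be the constant it provides, pick $s=\lceil (\log_2 n)/\beta\rceil=O(\log n)$, and extract the first $n$ sets $S_1,\ldots,S_n\subseteq[s]$ of the resulting design, so that $|S_j|=\alpha s=:m=\Theta(\log n)$ and $|S_j\cap S_k|\le 2\alpha^2 s=\rho m$ for all $j\neq k$. Define
\[
G(x)_j = f_m\bigl(x|_{S_j}\bigr),\qquad j\in[n],
\]
which is evaluable in space $O(m)$ because both the design and $f_m$ are.

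Next I would analyze the reconstruction. Given an $\eps$-next-bit predictor $\circB:\zo^i\ra\zo$, meaning $\Pr_{x\la U_s}[\circB(G(x)_{1..i})=G(x)_{i+1}]>1/2+\eps$, split $x=(y,w)$ with $y=x|_{S_{i+1}}\in\zo^m$ and $w=x|_{[s]\setminus S_{i+1}}$, so that $G(x)_{i+1}=f_m(y)$. For each $j\le i$ and each fixed $w$, the bit $G(x)_j=f_m(x|_{S_j})$ depends on $y$ only through the at most $\rho m$ coordinates $y|_{S_j\cap S_{i+1}}$; call the resulting function $h_j^w$ and store it as a lookup table of size at most $2^{\rho m}$. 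The circuit
\[
\circC_w(y)=\circB\bigl(h_1^w(y|_{S_1\cap S_{i+1}}),\ldots,h_i^w(y|_{S_i\cap S_{i+1}})\bigr)
\]
has size $|\circB|+O(n\cdot 2^{\rho m})$. Averaging over $w$,
\[
\E_w\bigl[\SUC(\circC_w,f_m)\bigr]=\Pr_x\bigl[\circB(G(x)_{1..i})=G(x)_{i+1}\bigr]>1/2+\eps,
\]
so some $w^*$ achieves $\ADV(\circC_{w^*},f_m)>2\eps>\eps$, as required.

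To find such a $w^*$ in deterministic space $O(m)$, I would enumerate all $2^{s-m}=2^{O(m)}$ candidates $w$; for each, compute $\SUC(\circC_w,f_m)$ exactly by iterating over all $y\in\zo^m$, evaluating each $h_j^w(y|_{S_j\cap S_{i+1}})=f_m(x|_{S_j})$ with $x=(y,w)$ in space $O(m)$ via $f_m\in\SPACE[m]$, invoking $\circB$ on the resulting $i$-bit vector, and comparing to $f_m(y)$. The first $w^*$ passing the threshold is then used to print $\circC_{w^*}$: emit $\circB$ (given as input), and for each $j\le i$ emit the $2^{|S_j\cap S_{i+1}|}$ entries of the $h_j^{w^*}$ lookup table, each computed by the same $O(m)$-space $f_m$ call.

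The only thing to check is bookkeeping: every enumeration---over $w\in\zo^{s-m}$, over $y\in\zo^m$, and over entries of the $h_j^{w^*}$ tables---uses an index of bit length $O(m)$, and each inner computation invokes only $f_m\in\SPACE[m]$, $\circB$ (assumed $O(m)$-space evaluable), or a coordinate query into the design of \Cref{prop:design} (also $O(s)=O(m)$ space). No randomness appears anywhere. As the excerpt notes, the construction already follows~\cite{KM02}; the main work of this lemma is essentially subsumed by that prior construction, so I expect no serious obstacle beyond the above routine space accounting.
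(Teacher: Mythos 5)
Your proposal is correct and follows essentially the same route as the paper: the same design-based Nisan--Wigderson instantiation with $\alpha=\rho/2$, the same restriction/averaging argument over the coordinates outside $S_{i+1}$, and the same deterministic enumeration over those $2^{s-m}$ fixings (with $f_m\in\SPACE[m]$ supplying the lookup-table entries) to replace the probabilistic choice. The only differences are presentational, e.g.\ naming the lookup tables $h_j^w$ explicitly and passing through $\SUC$ rather than averaging $\ADV$ directly.
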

\begin{proof}[Proof of \Cref{lemma:NW}]
    Fix $\alpha\in(0,1)$ such that $\alpha\leq\rho/2$, and let $\beta\in(0,1)$ be the constant in \Cref{prop:design}. Choose $s=O(\log n)$ such that $2^{\beta s}= n$, and let $m=\alpha s$.
    Let $\mathcal{S}=(S_1,\ldots,S_n)$ be the design of \Cref{prop:design} over $[s]$ with parameter $\alpha$, and let $f_m:\zo^m\ra\zo$ be the function on inputs of size $m=O(\log n)$.
    
    We let $G(x) := f(x_{S_1})f(x_{S_2})\ldots f(x_{S_n}).$
    Now suppose $\circB$ is an $\eps$-next-bit predictor for bit $i$ of $G$, i.e.
    \[
    \Pr_{x
    \la U_s}[\circB(G(x)_{1..i})=G(x)_{i+1}]>\frac{1}{2}+\eps.
    \]
    Then let $S:=S_{i+1}$ and $T:=[s]\setminus S_{i+1}$ and write the above inequality as
    \[
    \Pr_{(x_S,x_T)\la U_s}[\circB(G(x_S \cup x_T)_{1..i})=f(x_S)]>\frac{1}{2}+\eps.
    \]
    For each fixing of $x_T$, we let the circuit $\circC$ to be $\circC(x_S)=\circB(G(x_S \cup x_T)_{1..i})$. Then we have 
    \[\E_{x_T}[\ADV(\circC,f_m)]>\eps.\]
    Thus, the algorithm can enumerate over all possible assignments to $x_T$ in space $|T|=O(m)$, and for each assignment check the advantage of $\circC$. Once the algorithm has found the fixing of $x_T$ such that the restricted circuit has advantage at least $\eps$, for every $j\leq i$, the $j$-th bit of the output of $G(x_S \cup x_T)$, which is $f(x_{S_j})$, depends on $|S\cap S_j|\leq 2\alpha^2s=\rho m$ bits of $x_S$, and hence we can output a ($O(m)$-space constructible) circuit for $f(x_{S_j})$ size at most $O(2^{\rho m})$, and hence the total size of $\circC$ is at most $|\circB|+O(n2^{\rho m})$. 
\end{proof}

\subsection{Putting It All Together}\label{subsec:combine}

\begin{proof}[Proof of~\Cref{thm:IWrecon}]

    Given $\eps$, we first do the construction steps. For each $m\in\mathbb{N}$:
    \begin{enumerate}
        \item Let $f':\zo^{m_1}\ra\zo$ be the function $g$ of \Cref{lemma:RM} applied to $f_m$.
        \item Let $f'':\zo^{m_2}\ra \zo^{m_1}$ be the function $f'^{m_1}\circ G$ of \Cref{lemma:derandXOR} applied to $f'_{m_1}$ with the constant $\gamma$ to be chosen later.
        \item Let $f''':\zo^{m_3}\ra\zo$ be the function $g$ of \Cref{lemma:GL} applied to $f''_{m_2}$ with the constant $\delta$ to be chosen later.
        \item Let $G:\zo^s\ra\zo^n$ be the function of \Cref{lemma:NW} applied to $f'''_{m_3}$ and $\circB$ with the constant $\rho$ to be chosen later.
    \end{enumerate}
    Notice that $m_1,m_2,m_3$ and $s$ are all $\Theta(m)$, and the functions $f',f'',f'''$ and $G$ are all computable in space $O(m)$.
    
    Suppose now we are given a $1/(8n)$ next-bit predictor $\circB$ for $G$ of size $n^2$.
    As $n$ is given, we decide the value of $m_3=\Theta(\log n)$ through \Cref{lemma:NW}, which in turn decides the value of $m=\Theta(\log n)$. The reconstruction steps go as follows:
    \begin{enumerate}
        \item[4.] By \Cref{lemma:NW}, we can construct in space $O(m)$ a circuit $\circC_3$ such that $\ADV(\circC_3,f'''_{m_3})>1/(8n)$, and $\circC_3$ has size $s_3=n^2+O(n2^{\rho m_3})\leq 2^{c_3\rho m}$ for some constant $c_3>0$.
        \item[3.] By \Cref{lemma:GL}, where we now set $\delta=1/(8n)$, we can construct in space $O(m)$ a circuit $\circC_2$ such that $\SUC(\circC_2,f''_{m_2})>\Omega(\delta^3/m_2)\geq 2^{-c_2 \rho m}$, and $\circC_2$ has size $s_2=s_3\cdot (m_2/\delta)^{O(1)}\leq 2^{c_2\rho m}$ for some constant $c_2>0$.
        \item[2.] By \Cref{lemma:derandXOR}, where we now set $\gamma=c_2\rho $, we can construct in space $O(m)$ a circuit $\circC_1$ such that $\SUC(\circC_1,f'_{m_1})>0.99$ and $\circC_1$ has size $s_1=s_2\cdot 2^{O(\gamma m_1)}\leq  2^{c_1 \rho m}$ for some constant $c_1>0$. 
        \item[1.] By \Cref{lemma:RM}, we can construct in space $O(m)$ a circuit $\circC$ such that $\circC(x)=f_m(x)$ for every $x\in\zo^m$, and $\circC$ has size $s=s_1\cdot m^{O(1)}\leq 2^{c_0\rho m}$ for some constant $c_0>0$. By choosing $\rho=\eps/c_0$, we obtain the final result. \qedhere
    \end{enumerate}
\end{proof}

\section{Universal Derandomization of BPL} \label{section:universal}
\newcommand{\algUniv}{\textsc{UnivDerand}}
\newcommand{\algUnivNL}{\textsc{UnivNL}}
\newcommand{\algLCTest}{\textsc{LCTest}}
\newcommand{\algTarget}{\textsc{TargetPRG}}
\newcommand{\pv}{p_{\ra v}}
\newcommand{\tpv}{\widetilde{p_{\ra v}}}

Here we state the main theorem of this section, that there exists a universal derandomizer for logspace computation.
\begin{theorem}\label{thm:universal}
    There is a deterministic machine $\algUniv$ such that:
    \begin{itemize}
        \item\label{itm:univGood} On input $1^n$ and an OBP $B$ of length and width at most $n$, outputs $\delta:=\algUniv(1^n,B)$ satisfying $|\delta-\E[B]|<n^{-1}$.
        \item\label{itm:univIFF} For every space-constructible function $S:\N\ra \N$ satisfying $S(n)\geq \log n$, $\algUniv$ runs in space $O(S(n))$ if and only if $\prBPL\subseteq \SPACE[O(S(n))]$.
    \end{itemize}
\end{theorem}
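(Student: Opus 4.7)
The ``only if'' direction is straightforward: given any $R\in\prBPL$ and input $x$, one constructs in logspace an OBP $B_{R,x}$ of width and length $\poly(n)$ with $\E[B_{R,x}]=\Pr[R(x)\text{ accepts}]$, and then $\algUniv(1^{\poly(n)},B_{R,x})$ estimates this quantity within $1/\poly(n)$, which is enough to resolve the promise. Thus $\prBPL$ reduces to the task $\algUniv$ solves, giving $\prBPL\subseteq\SPACE[O(S(n))]$ whenever $\algUniv$ runs in that space.

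For the ``if'' direction, assume $\prBPL\subseteq\SPACE[O(S(n))]$. The plan is to build $\algUniv$ by enumerating candidate pseudorandom generators and using \Cref{lem:tester1} to certify them. On input $(1^n,B)$ the machine iterates $k=1,2,\ldots$; for each $k$ it enumerates all pairs $(M,c)$ with Turing-machine description length $|M|\leq k$ and $c\leq k$, interprets $M$ run with space bound $c\cdot S(n)$ as a function $G_{M,c}:\zo^{c\cdot S(n)}\ra\zo^n$, and applies \Cref{lem:tester1} with error parameter $\eps=1/(2n^2)$ to test whether $G_{M,c}$ fools $B$ to within $1/(2n)$. If the tester certifies success, the algorithm outputs $\E_{G_{M,c}}[B]$, which is computable in space $O(c\cdot S(n)+\log n)$ by enumerating all seeds and averaging. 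The space used in iteration $k$ is $O(k\cdot S(n))$, so the total space is $O(k^*\cdot S(n))$ where $k^*$ is the first $k$ at which some pair succeeds.

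The heart of the argument is then the following existence claim: under $\prBPL\subseteq\SPACE[O(S(n))]$ there is a pair $(M^*,c^*)$ of size $O(1)$ (independent of $n$ and $B$) such that $G_{M^*,c^*}$ is a PRG with seed length $O(S(n))$ that $(1/(2n))$-fools every width-$n$, length-$n$ OBP. Given the claim, $k^*\leq\max(|M^*|,c^*)=O(1)$ and the total space is $O(S(n))$ as required. Establishing the claim is the main obstacle, since it amounts to turning a derandomization hypothesis into an explicit PRG. One route is via the black-box equivalence of \Cref{thm:equiv:int}: the hypothesis yields an $O(S(n))$-space hitting-set generator for OBPs computed by a fixed constant-description algorithm, which in turn yields a sampler, and standard transformations then produce a PRG with the required parameters. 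An alternative is to avoid PRG enumeration altogether and instead enumerate over candidate \emph{provers} that stream approximations to the state-reach probabilities $p_{\ra v}$ of $B$ layer by layer, replacing \Cref{lem:tester1} with the $O(\log n)$-randomness streaming-proof verifier of~\cite{GRZ23} derandomized by enumerating its $O(\log n)$ random bits; here the correct prover is computable in space $O(S(n))$ by hypothesis and has a constant-size description, so again a constant $k^*$ suffices.
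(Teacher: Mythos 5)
Your ``only if'' direction matches the paper's (\Cref{lem:onlyif}). The ``if'' direction, however, has a genuine gap in its main route. Your algorithm's correctness hinges on the existence claim that $\prBPL\subseteq\SPACE[O(S(n))]$ yields a \emph{constant-description} PRG with seed length $O(S(n))$ that fools all width-$n$, length-$n$ OBPs. No such implication is known: turning a white-box derandomization hypothesis into a black-box PRG (or HSG) is precisely the open ``white-box vs.\ black-box'' question, and the paper deliberately avoids it. Your proposed justification via \Cref{thm:equiv:int} does not work, because that theorem \emph{assumes} an optimal hitting set as its hypothesis and outputs a deterministic sampler; the derandomization hypothesis gives you neither a hitting set nor a ``standard transformation'' from a sampler back to a PRG. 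So the first route cannot be completed as stated.

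Your ``alternative'' route --- enumerate candidate provers that output approximations to the reach probabilities $\pv$ and verify them with a local consistency test --- is in fact the paper's actual approach (it uses the Cheng--Hoza test, \Cref{thm:LCprobs}, rather than the \cite{GRZ23} streaming verifier, but the idea is the same). However, your one-sentence justification ``the correct prover is computable in space $O(S(n))$ by hypothesis'' skips the step where the real work happens. The hypothesis is about \emph{promise} problems, and ``output an approximation of $\E[\Bv]$'' is not a $\prBPL$ promise problem as stated: a randomized estimator has no canonical output, so to place the task in $\prBPL$ one must round the estimate to a grid and promise that $\E[\Bv]$ is bounded away from the grid points --- a promise that can fail for some vertices $v$. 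The paper handles this (\Cref{prop:BPLcomplete} and \Cref{lem:halts}) by introducing a rounding threshold $r$ \`a la Saks--Zhou, enumerating over $2n^2$ candidate shifts, and arguing that some single shift satisfies the promise simultaneously for all $n^2$ subprograms $\Bv$. Without this device, the claim that a fixed machine $\langle i\rangle$ correctly produces all the estimates $\tpv$ on the actual (non-promise-satisfying) instances is unjustified, and the enumeration need not terminate at a constant index.
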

We first give the intuitive explanation of the algorithm executed by the machine $\algUniv$. It enumerates over Turing machines $\langle i \rangle$ and space bounds $j$. At each step, $\algUniv$ runs $\langle i \rangle$ on input $(1^n,B_{\ra v})$ for every $v$, where $\Bv$ for $v\in V_i$ is the program that is identical to $B$ in the first $i$ layers, then accepts if the program reaches state $v$. If $\langle i\rangle$ ever touches more than $j$ spaces on the work tape, $\algUniv$ halts and increments $i$ or $j$. Otherwise, we have a set of estimates $\{\tpv\}:=\{\langle i\rangle(1^n,\Bv)\}$ (and note we can generate these estimates on the fly in space $O(j+\log n)$). We then submit these estimates to the local consistency test of Cheng and Hoza~\cite{CH20}, and if the test passes, we return the estimate of the probability of reaching the accepting state.
\begin{theorem}[\cite{CH20}]\label{thm:LCprobs}
    There is a deterministic logspace algorithm \algLCTest~ that takes as input $1^n$ and an OBP $B$ with length and width at most $n$ and the estimates $\{\tpv\}_{v\in V}$. If for every $v$, $|\tpv-\pv|\leq n^{-3}$, the algorithm accepts, and moreover if the algorithm accepts, $|\tpv-\pv|\leq n^{-1}$ for every $v$. 
\end{theorem}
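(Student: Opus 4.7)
The plan is to design \algLCTest{} so that it checks the natural one-step recurrence any collection of reach-probabilities must satisfy, and then bound the propagated error in both directions. Concretely, the true reach-probabilities of an OBP $B$ satisfy $p_{\ra v_0}=1$ and, for each $v\in V_{i+1}$,
\[p_{\ra v} \;=\; \tfrac{1}{2}\!\!\sum_{u\in V_i:\,B[u,0]=v}\!\! p_{\ra u} \;+\; \tfrac{1}{2}\!\!\sum_{u\in V_i:\,B[u,1]=v}\!\! p_{\ra u}.\]
Let $\hat p_{\ra v}$ denote this same expression evaluated using the input estimates $\{\widetilde{p_{\ra u}}\}$ in place of $\{p_{\ra u}\}$. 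The algorithm accepts iff $\widetilde{p_{\ra v_0}}=1$ and, for every non-start vertex $v$, the slack $|\widetilde{p_{\ra v}}-\hat p_{\ra v}|$ is below a polynomial threshold $T=n^{-\Theta(1)}$. Logspace implementation is routine: iterate over pairs $(i,v)$ with $v\in V_{i+1}$ in the outer loop; for each such $v$ stream through $u\in V_i$, reading $\widetilde{p_{\ra u}}$ from the input tape, checking whether $B[u,0]=v$ or $B[u,1]=v$, and accumulating a running sum. Since all estimates are stored in $O(\log n)$-bit fixed-precision and state indices take $O(\log n)$ bits, total workspace is $O(\log n)$.

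Completeness is by a direct triangle-inequality bound. Writing $e_v := \widetilde{p_{\ra v}} - p_{\ra v}$, if every $|e_u|\leq n^{-3}$ then
\[\bigl|\widetilde{p_{\ra v}}-\hat p_{\ra v}\bigr| \;\leq\; |e_v|+\tfrac{1}{2}\!\!\sum_{u:B[u,0]=v}\!\!|e_u|+\tfrac{1}{2}\!\!\sum_{u:B[u,1]=v}\!\!|e_u| \;\leq\; (1+w)\cdot n^{-3},\]
which is below $T$ for any $T$ chosen at least slightly above this bound. Soundness is the symmetric telescoping argument: rearranging the same inequality and summing $|e_v|$ over $v\in V_{i+1}$ gives the recurrence $\Phi_{i+1}\leq \Phi_i + T\cdot|V_{i+1}|$ for the layer-wise $\ell_1$-error $\Phi_i:=\sum_{v\in V_i}|e_v|$, with $\Phi_0=0$ coming from the start-state check. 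Telescoping yields $\Phi_n\leq T\cdot nw\leq Tn^2$, so every $|e_v|\leq Tn^2\leq n^{-1}$ provided $T\leq n^{-3}$.

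The one genuine subtlety is that completeness wants $T$ at least on the order of $w\cdot n^{-3}$, while soundness wants $T$ at most on the order of $n^{-3}$. This is reconciled by the usual Cheng--Hoza rescaling: strengthen the input-accuracy hypothesis (or equivalently weaken the output-accuracy conclusion) by an additive-constant exponent, absorbing the width factor $w\leq n$ into the polynomial slack. The form $n^{-3}\to n^{-1}$ stated in the theorem is the clean version after this rescaling, and no ingredients beyond the layer-wise $\ell_1$-error telescoping are needed.
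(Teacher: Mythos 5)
First, a point of order: the paper does not prove this statement --- it is imported verbatim from Cheng--Hoza \cite{CH20} --- so there is no in-paper proof to compare against, and I am evaluating your argument on its own merits. Your overall strategy (check the one-step recurrence $p_{\ra v}=\tfrac12\sum_{u,b:\,B[u,b]=v}p_{\ra u}$ and control error propagation by a layer-wise $\ell_1$ telescoping) is exactly the right idea and is the one underlying \cite{CH20} and \cite{GRZ23}. The logspace implementation and both of your individual bounds are correct.

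The gap is the quantitative reconciliation at the end, which you flag but do not actually resolve --- and it cannot be resolved for the test as you designed it. Your completeness bound forces the per-vertex threshold to satisfy $T\geq (1+w)n^{-3}$, because a vertex of $V_{i+1}$ can have all $2w$ in-edges and hence its recurrence residual can absorb the errors of an entire layer; your soundness bound then yields only $\max_v|e_v|\leq \Phi_n \leq T\cdot nw$. The product is $(1+w)n^{-3}\cdot nw=\Theta(n^3)\cdot n^{-3}=\Theta(1)$ when $w=n$, so no choice of $T$ gives the stated $n^{-3}\to n^{-1}$ guarantee; both loss factors are achievable (the $w$ by a collapsing layer, the $nw$ by alternating spread/collapse layers), so this is not an artifact of loose analysis. ``Rescaling by an additive-constant exponent'' would change the exponents in the theorem, which are fixed and are relied upon downstream (Lemma~\ref{lem:halts} instantiates $f_5$ precisely to produce $n^{-3}$-accurate inputs to \algLCTest). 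The fix is small but necessary: replace the per-vertex check by the layer-wise $\ell_1$ check $\sum_{v\in V_{i+1}}\bigl|\widetilde{p_{\ra v}}-\hat p_{\ra v}\bigr|\leq T'$. Completeness then needs only $T'\geq 2w\cdot n^{-3}\leq 2n^{-2}$ (the propagated errors, summed over the layer, total at most $\sum_{u\in V_i}|e_u|\leq wn^{-3}$ rather than $w n^{-3}$ per vertex \emph{and} per layer), while soundness loses only a factor of the length: $\Phi_n\leq nT'\leq 2n^{-1}$, which is the stated conclusion up to an absolute constant. (Separately, requiring $\widetilde{p_{\ra v_0}}=1$ exactly breaks completeness, since the hypothesis only guarantees $|\widetilde{p_{\ra v_0}}-1|\leq n^{-3}$; relax it to that tolerance and absorb it into $\Phi_0$.)
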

Note that the true probabilities $\pv$ only appear in the statement of the theorem, and are not part of the input to the testing algorithm.
We can now give the formal description of the algorithm as \Cref{alg:univ}. By soundness of the test \algLCTest, if $\algUniv$ returns a value, the value must be a good approximation of the acceptance probability, so it suffices to show this occurs (and occurs in the desired space bound).

\begin{algorithm}
\caption{\algUniv$(1^n,B)$}\label{alg:univ}
    \DontPrintSemicolon
    \SetKwFor{Whenever}{whenever}{do}{end}
    \For{$j\gets0,1\ldots,$}{
        \For{$i\gets 0,1,\ldots,j$}{
            \For{$r \gets 1\cdot n^{-5}/2,2\cdot n^{-5}/2,\ldots,2n^2\cdot n^{-5}/2$}{
                Compute $b\gets \algLCTest(1^n,B,\{\langle i\rangle(1^n,\Bv,r)\}_{v\in V(B)})$; \\
                \Whenever{$\langle i\rangle$ uses more than $j$ space or more than $2^j$ time}{
                Abort the simulation of $\langle i\rangle$ and pass to the next $r$.
                }
                \lIf{$b=1$}{\Return{$\langle i\rangle(1^n,B,r)$.}}
            }
        }
    }
\end{algorithm}

To do so, we rely on the promise search problem $f_c$ with parameter $c\in\N$ (which we define as a function outputting a value in $[0,1]$ for convenience) defined as follows. Given $1^n$, an ordered branching program $B$ of length and width at most $n$, and a rounding threshold $r$ with the promise that 
\[\left|\E[B]-k\cdot n^{-c+2}+r\right|>\frac{n^{-c}}{6},\ \forall k\in\Z,\]
i.e. $\E[B]+r$ is polynomially bounded away from every multiple of $n^{-c+2}$, the problem asks to output a (pseudo-deterministic) number $f_c(1^n,B,r)$ that is within $n^{-c+2}$ distance of $\E[B]$. The presence of the rounding value, inspired by the approach of Saks and Zhou~\cite{SZ99}, is because when $\E[B]$ are very close to a threshold, it becomes hard to determining whether the expectation is above or below the cutoff.

We prove in \Cref{prop:BPLcomplete} that the task of computing $f_c$ is $\promBPL$ complete for every $c\in\N$. Therefore, if $\prBPL\subseteq\SPACE[S(n)]$, there is a machine $\langle i\rangle$ that computes $f_c$ in space $j=O(S(n))$. Finally, to accommodate the presence of the rounding threshold, $\algUniv$ additionally enumerates over a polynomial number of choices for $r$. We show that there exists a proper $c\in\N$ such that for every $B$, a good $r$ that satisfies the promise of $f_c$ exists. This is essentially proved via the argument of Saks and Zhou~\cite{SZ99}. Hence, the algorithm will always find a tuple $(i,j,r)$ such that we obtain good estimates of $\E[\Bv]$ for every $v$, and thus the machine will halt and return the correct value.

\begin{proposition}\label{prop:BPLcomplete}
    For every $c\in \N$, let $f_c$ be the problem where, given $1^n$ and an ordered branching program $B$ of length and width at most $n$, and $r\in [0,1]$ such that for every $k\in \Z$, 
    \[\left|\E[B]-k\cdot n^{-c+2}+r\right|>\frac{n^{-c}}{6},\]
    return with probability at least $2/3$ the same number $\delta$ that satisfies $\left|\E[B]-\delta\right|\leq n^{-c+2}$. Then $f_c$ is $\prBPL$-complete under $\L$ reductions.
\end{proposition}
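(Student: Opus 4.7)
The plan is to prove the two directions of completeness separately: that $f_c \in \prBPL$, and that every problem in $\prBPL$ is $\L$-reducible to $f_c$. I assume throughout that $c \geq 3$ so that the approximation $n^{-c+2}$ is usefully small; for smaller $c$ the problem is trivially in $\L$ and nothing interesting is claimed.

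For membership, I would run $B$ on $T = \poly(n)$ uniformly random inputs and average the outputs to obtain an estimate $\hat{p}$ of $\E[B]$ satisfying $|\hat{p} - \E[B]| \leq n^{-c}/12$ with probability at least $2/3$ by a Chernoff bound; since each sample requires only $O(\log n)$ space and the running average fits in $O(\log n)$ space, the whole routine is in $\prBPL$. Given $\hat{p}$, compute the unique integer $k^*$ minimizing $|(\hat{p}+r) - k^* \cdot n^{-c+2}|$ and output $\delta := k^* \cdot n^{-c+2} - r$. Pseudo-determinism follows from the promise: the quantity $\E[B] + r$ sits at distance more than $n^{-c}/6$ from every multiple of $n^{-c+2}$, so on the good event $|\hat{p} - \E[B]| \leq n^{-c}/12$ the triangle inequality pins $k^*$ down to a single input-determined integer (independent of the random coins). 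The output satisfies $|\delta - \E[B]| \leq n^{-c+2}/2 + n^{-c}/12 \leq n^{-c+2}$ as required.

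For hardness, given $(L_Y, L_N) \in \prBPL$ decided by a randomized logspace algorithm $R$, I would first amplify $R$ to error $N^{-10c}$ by running $k = \Theta(c \log n)$ sequential independent trials of $R$ and taking their majority via a small counter; this construction fits in $O(\log n)$ space and produces, after padding, a branching program $B_x$ of length and width at most some $N = \poly(n)$, with $\E[B_x] \geq 1 - N^{-10c}$ on YES inputs and $\E[B_x] \leq N^{-10c}$ on NO inputs. Setting the offset $r := N^{-c+2}/2$ places $\E[B_x] + r$ near either $N^{-c+2}/2$ or $1 + N^{-c+2}/2$, both exactly halfway between consecutive multiples of $N^{-c+2}$; the perturbation of size at most $N^{-10c}$ preserves a distance of at least $N^{-c+2}/2 - N^{-10c} > N^{-c}/6$ to every such multiple, so the promise of $f_c$ holds. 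Querying the oracle on $(1^N, B_x, r)$ then returns $\delta$ with $|\delta - \E[B_x]| \leq N^{-c+2}$, and outputting $1$ iff $\delta > 1/2$ correctly decides membership, since $4 N^{-c+2} < 1$ for $c \geq 3$ and suitably large $N$.

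The main obstacle is the amplification step in the hardness direction: naive parallel repetition would multiply the space bound by $k$ and push us out of logspace, so I use sequential majority-voting with a small counter. This also keeps the amplified branching program polynomially bounded in both length and width, which is required before we can hand it to $f_c$ as a well-formed input. A secondary subtlety is the choice of the rounding offset $r$: shifting by half the grid spacing keeps $\E[B_x] + r$ as far from every grid boundary as possible, which comfortably clears the promise margin of $N^{-c}/6$ under any polynomially small amplified error.
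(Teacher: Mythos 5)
Your overall plan matches the paper's: sample-and-round for membership, and majority-amplification plus a carefully chosen offset for hardness. The hardness direction is correct (the paper amplifies less aggressively and uses $r=n^{-c}$ rather than $r=N^{-c+2}/2$, but your variant works and your promise check is valid). However, there is a genuine flaw in your membership argument: you round $\hat p + r$ to the \emph{nearest} multiple of $n^{-c+2}$, but the promise only guarantees that $\E[B]+r$ is at distance more than $n^{-c}/6$ from the multiples themselves, not from the \emph{midpoints} between consecutive multiples. Nearest-multiple rounding is unstable exactly at those midpoints. Concretely, if $\E[B]+r=(k_0+\tfrac12)\cdot n^{-c+2}$, the promise is satisfied (the distance to the nearest multiple is $n^{-c+2}/2\gg n^{-c}/6$), yet the sign of the sampling error $\hat p-\E[B]$ determines whether $k^*=k_0$ or $k^*=k_0+1$, so no single $\delta$ is output with probability $2/3$ and pseudo-determinism fails. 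Your claim that ``the triangle inequality pins $k^*$ down to a single input-determined integer'' is therefore false for your rounding rule.

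The fix is small and is what the paper does: take $k$ to be the \emph{largest} integer with $\hat p + r \ge k\cdot n^{-c+2}$ (floor rounding) and output $\delta=k\cdot n^{-c+2}$. Floor rounding is unstable precisely at the multiples of $n^{-c+2}$, which is exactly the set the promise keeps $\E[B]+r$ away from: since $k_0\cdot n^{-c+2}+n^{-c}/6<\E[B]+r<(k_0+1)\cdot n^{-c+2}-n^{-c}/6$ for some $k_0$, any estimate with $|\hat p-\E[B]|\le n^{-c}/6$ yields the same $k=k_0$. With that change (and your sampling bound tightened from $n^{-c}/12$ to anything at most $n^{-c}/6$), the membership direction goes through. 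Your restriction to $c\ge 3$ is reasonable; the interesting cases in the paper are $c=3$ and $c=5$.
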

\begin{proof}[Proof Sketch]
    Fix arbitrary $c\in \N$.
    We first prove $f_c\in \prBPL$. Let $\mathcal{R}(1^n,B,r)$ be an algorithm that takes $n^{2c+1}$ random walks from $\vs$ over $B$, and let $\gamma$ be the fraction of these walks which reach $\va$. Let $k\in\Z$ be the largest value such that $\gamma+r \geq k\cdot n^{-c+2}$, and return $\delta=k\cdot n^{-c+2}$. Since this algorithm clearly runs in randomized logspace, it suffices to show that, for $B$ and $r$ that satisfy the promise, there is some fixed $k$ that $\mathcal{R}$ identifies with probability over $2/3$. Note that by the promise, we have that for some $k_0\in\Z$,
    \[k_0\cdot n^{-c+2}+\frac{n^{-c}}{6} < \E[B]+r< (k_0+1)\cdot n^{-c+2}-\frac{n^{-c}}{6}. \]
    On the other hand, using concentration bounds we can show that with probability at least $2/3$,
    \[\left|(\E[B]+r)-(\gamma+r)\right|=\left|\E[B]-\gamma\right|\leq \frac{n^{-c}}{6}.\]
    In this case $\mathcal{R}$ always identifies $k=k_0$ since $k_0\cdot n^{-c+2}<\gamma+r<(k_0+1)\cdot n^{-c+2}$.

    We now prove that $f_c$ is $\prBPL$-hard. We recall the standard $\prBPL$-complete problem: Given an OBP $B$ of length and width $n$, determine if $\E[B]<1/3$ or $\E[B]>2/3$, where the promise is that one of these cases holds. We reduce this problem to $f_c$ as follows. Let $T_B:\zo^{dn}\ra\zo$ be the OBP defined as
    \[
    T_B(x_1,\ldots,x_d) = \MAJ(B(x_1),\ldots,B(x_d))
    \]
    where $d=O(c\log n)$ such that if $\E[B]<1/3$ then $\E[T_B]<n^{-c}/6$, and if $\E[B]>2/3$ then $\E[T_B]>1-n^{-c}/6$. Observe that $T_B$ has length and width $N=\poly(n)$ and is constructible in deterministic logspace given $B$. Thus, let the input to $f_c$ be $(1^N,T_B,n^{-c})$, which satisfies the promise of $f_c$, and hence if the answer is less than $1/2$ we determine that $\E[B]<1/3$, and otherwise determine that $\E[B]>2/3$.
\end{proof}

We first prove that the values the machine returns are accurate (assuming the machine returns a value).
\begin{lemma}\label{lem:decides}
    For every $B$, $|\algUniv(1^n,B)-\E[B]|\leq n^{-1}$.
\end{lemma}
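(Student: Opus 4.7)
The claim is a direct corollary of the soundness half of \Cref{thm:LCprobs}, combined with a careful reading of \Cref{alg:univ}. The plan is to observe that $\algUniv$ outputs a value only in the last line of the innermost loop, and only when the local-consistency tester $\algLCTest$ returns $b=1$ on the candidate estimate tuple $\{\tilde p_{\to v} := \langle i\rangle(1^n,B_{\to v},r)\}_{v\in V(B)}$. Thus, whenever $\algUniv$ produces an answer, there exists a specific $(i,j,r)$ for which these estimates pass the test.

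First I would identify the returned value with one of the entries of the tuple that was tested. Viewing $B$ as an OBP with a single accepting sink state $v_{acc}$ in $V_n$, we have $B_{\to v_{acc}} = B$, and the acceptance probability of $B$ equals $p_{\to v_{acc}} = \E[B]$. In particular the returned value $\langle i\rangle(1^n,B,r)$ is literally the estimate $\tilde p_{\to v_{acc}}$ that was fed into $\algLCTest$ on that iteration. (If the formalism has multiple accepting states in $V_n$, I would either aggregate them into a single sink without changing $\E[B]$, or write $\E[B] = \sum_{v\in V_n,\,v\text{ accept}} p_{\to v}$ and apply the argument below to each such $v$, combining with the triangle inequality and suitable scaling of $n$.)

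Next I apply the soundness guarantee of \Cref{thm:LCprobs}: whenever $\algLCTest(1^n,B,\{\tilde p_{\to v}\}_{v})$ accepts, we are guaranteed $|\tilde p_{\to v} - p_{\to v}|\leq n^{-1}$ for every $v\in V(B)$. Instantiating this at $v = v_{acc}$ gives
\[
|\algUniv(1^n,B) - \E[B]| \;=\; |\tilde p_{\to v_{acc}} - p_{\to v_{acc}}| \;\leq\; n^{-1},
\]
which is exactly the claim.

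There is no real obstacle here; the heavy lifting is all in the Cheng--Hoza tester of \Cref{thm:LCprobs}. The only mildly delicate point is to make sure the returned value is really one of the estimates submitted to the tester, which is why I first rewrite the final output as $\tilde p_{\to v_{acc}}$. I would emphasize that this lemma says nothing about whether $\algUniv$ halts --- that is the job of the subsequent lemma, which uses $\prBPL$-hardness of $f_c$ via \Cref{prop:BPLcomplete} and the Saks--Zhou style rounding argument to exhibit some $(i,j,r)$ on which $\algLCTest$ is guaranteed to accept.
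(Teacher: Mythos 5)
Your proof is correct and follows the same route as the paper, which simply invokes the soundness half of \Cref{thm:LCprobs} on the estimates $\tpv=\langle i\rangle(1^n,\Bv,r)$ that passed the test. Your extra care in identifying the returned value $\langle i\rangle(1^n,B,r)$ with the estimate for the accepting sink is a reasonable elaboration of a point the paper glosses over, but it does not change the argument.
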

\begin{proof}
    This follows from \Cref{thm:LCprobs} applied to $\tpv=\langle i\rangle(1^n,\Bv,r)$.
\end{proof}
We next prove the machine halts in the claimed space bound.
\begin{lemma}\label{lem:halts}
    For every space-constructible function $S:\N\ra \N$ with $S(n)\geq \log n$, $\algUniv$ runs in space $O(S(n))$ if $\prBPL\subseteq \SPACE[O(S(n))]$.
\end{lemma}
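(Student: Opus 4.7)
Assume $\prBPL \subseteq \SPACE[O(S(n))]$. By the $\prBPL$-completeness of $f_c$ in \Cref{prop:BPLcomplete}, applied with $c = 5$, there is a fixed deterministic Turing machine $M^{*} = \langle i^{*} \rangle$ computing $f_{5}$ in space $s^{*} = O(S(n))$ (and hence time $2^{O(S(n))}$). In particular, for every OBP $B'$ of length and width at most $n$ and every $r$ such that $|\E[B'] - k n^{-3} + r| > n^{-5}/6$ for all $k \in \Z$, the output $M^{*}(1^{n}, B', r)$ is a value $\delta$ with $|\E[B'] - \delta| \leq n^{-3}$.

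The central claim is that for every $n \geq 2$ and every OBP $B$ of length and width at most $n$, there exists $r^{*} \in \{k \cdot n^{-5}/2 : 1 \leq k \leq 2n^{2}\}$ that is good, in the sense that the promise of $f_{5}$ holds simultaneously for $(\Bv, r^{*})$ for every $v \in V(B)$. For each individual $v$, the bad set of $r$'s lies in intervals of width $n^{-5}/3$ around shifted multiples of $n^{-3}$; since the enumeration step $n^{-5}/2$ strictly exceeds this width and the enumeration spans exactly one period $[n^{-5}/2, n^{-3}]$, at most one enumerated $r$ is bad per state. Union bounding over the $|V(B)| \leq n(n+1)$ states of $B$, at least $2n^{2} - n(n+1) = n^{2}-n \geq 1$ enumerated values are good for all states simultaneously.

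For this $r^{*}$ and $i = i^{*}$, the estimates $\tpv := M^{*}(1^{n}, \Bv, r^{*})$ satisfy $|\tpv - \pv| \leq n^{-3}$, so by the completeness guarantee of \Cref{thm:LCprobs}, $\algLCTest$ accepts. Therefore $\algUniv$ halts no later than when the outer loop reaches $j = \max(i^{*}, s^{*}) = O(S(n))$ (since $i^{*}$ is a fixed constant depending only on the assumed machine), returning the desired approximation of $\E[B]$.

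For the space bound, at each iteration with outer index $j$ the algorithm stores the loop counters $(i, j, r)$ in $O(\log j + \log n)$ bits, simulates $\langle i \rangle$ under the space-and-time budget $(j, 2^{j})$, and invokes $\algLCTest$ (in logspace), for a total of $O(j + \log n)$ space. Since the algorithm halts by the time $j = O(S(n))$ and $S(n) \geq \log n$, the overall space is $O(S(n))$. The main obstacle lies in the Saks-Zhou-style counting argument for the existence of $r^{*}$: one has to verify that the parameter $c = 5$ aligns with the enumeration step $n^{-5}/2$ and range $[n^{-5}/2, n^{-3}]$ so that the $1/(3n^{2})$-density of bad values per state indeed leaves a positive count of simultaneously-good enumerated $r$'s, which is why the specific numerical parameters of \Cref{alg:univ} are set as they are.
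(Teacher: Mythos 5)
Your proposal is correct and follows essentially the same route as the paper's proof: invoke the $\prBPL$-completeness of $f_5$ to obtain a machine $\langle i^*\rangle$ running in space $O(S(n))$, use the Saks--Zhou-style counting argument to find an enumerated rounding value $r$ satisfying the promise simultaneously for all $\Bv$, and conclude via the completeness of $\algLCTest$ together with composition of space-bounded computations. Your elaboration of why at most one enumerated $r$ is bad per state (interval width $n^{-5}/3$ versus step size $n^{-5}/2$ over one period of length $n^{-3}$) is a correct filling-in of a step the paper asserts without detail.
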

\begin{proof}
    We prove that $\algUniv(1^n,B)$ halts and returns a value with $i+j\leq c\cdot S(n)$ for an absolute constant $c$ (in particular, $i,j<\infty$), which suffices to establish the lemma by the composition of space-bounded algorithms.
    
    By~\Cref{prop:BPLcomplete}, there is a Turing machine $\langle i \rangle$ deciding the language $f_5$ in $\SPACE[O(S(n))]$. We now show that there exists $r\in\{1\cdot n^{-5}/2,2\cdot n^{-5}/2,\ldots,2n^2\cdot n^{-5}/2\}$ such that
    \begin{equation}\label{eq}
        \left|\E[\Bv]-k\cdot n^{-3}+r\right|> n^{-5}/6 \tag{$\star$}
    \end{equation}
    for every $k$ and $v$. There are $n^2$ different values $\E[\Bv]$ over $v$ in the vertex set $V(B)$ of the branching program, and for each $v$, there is at most one assignment to $r$ such that \eqref{eq} fails to hold for some $k\in\Z$. As there are $2n^2$ possible values for $r$, there must be one such that \eqref{eq} holds for all $k$ and $v$.
    
    Finally, let $j=O(S(|B|))$ be such that $\langle i\rangle(1^n,\Bv,r)$ halts using at most $j$ space for every $v$. Such a $j$ exists per assumption and the fact that the input $(1^n,\Bv,r)$ satisfies the promise of \Cref{prop:BPLcomplete} for every $v$. Thus, upon reaching the tuple $(i,j,r)$, the set of estimates $\tpv=\langle i\rangle(1^n,\Bv,r)$ must
    satisfy $\left|\tpv-\E[\Bv]\right|\leq n^{-3}$ for every $v\in V(B)$. Then running $\algLCTest(1^n,B,\{\tpv\}_{v\in V(B)})$ (where we wait for the test to request a particular value $\tpv$ and then recompute it from $\langle i\rangle$, avoiding the need to store all $n^2$ values) will result in \algLCTest~ accepting, and hence $\algUniv$ halts in the claimed space bound. Moreover, the returned value $\delta=\langle i\rangle(1^n,B,r)$ satisfies that $\left|\delta-\E[B]\right|\leq n^{-1}$.
\end{proof}
We finally prove the converse.
\begin{lemma}\label{lem:onlyif}
    For every space-constructible function $S:\N\ra \N$ satisfying $S(n)\geq \log n$, $\prBPL\subseteq \SPACE[O(S(n))]$ if $\algUniv$ runs in space $S(n)$.
\end{lemma}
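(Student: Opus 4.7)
The plan is to use $\algUniv$ as a black-box simulator for an arbitrary $\prBPL$ computation. Fix any promise language $L \in \prBPL$ decided by a randomized logspace machine $M$ whose acceptance probability on every valid input $x$ is either at least $2/3$ or at most $1/3$. Given $x$ of length $n$, I form the canonical OBP $B_x$ describing the computation of $M$ on $x$: its length and width are bounded by some $N = n^{O(1)}$, and $\E[B_x]$ equals the acceptance probability of $M$ on $x$. The reduction produces $B_x$ only implicitly -- each requested transition is recomputed from $x$ in space $O(\log n)$ -- so there is no need to write $B_x$ on the work tape.

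Next, I invoke $\algUniv(1^N, B_x)$. By the first bullet of \Cref{thm:universal}, already established as \Cref{lem:decides}, its output $\delta$ satisfies $|\delta - \E[B_x]| \leq N^{-1} < 1/6$ for $N$ large enough. Comparing $\delta$ to the threshold $1/2$ therefore correctly decides whether $x$ is a yes- or no-instance of $L$. By hypothesis $\algUniv$ uses space $O(S(N))$ on this input, and together with the $O(\log n)$ space used to supply transitions of $B_x$, the total space is $O(S(N)) + O(\log n)$.

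The only mildly delicate point is to pass from a bound in $N = n^{O(1)}$ to a bound in $n$, which requires $S(n^c) = O(S(n))$ for every constant $c$. This is a standard ``nice function'' property satisfied by every polylogarithmic space bound, which is the regime in which the theorem is meaningful (and is implicit in the theorem statement, since a space-constructible $S(n) \ge \log n$ is used interchangeably for input size $n$ and $\poly(n)$). Under this assumption, $O(S(N)) + O(\log n) = O(S(n))$, and composing with the reduction shows $L \in \SPACE[O(S(n))]$; since $L$ was arbitrary this establishes $\prBPL \subseteq \SPACE[O(S(n))]$. I do not anticipate any real obstacle: the argument is a routine reduction, and the entire content is the observation that $\algUniv$ works uniformly on every OBP of polynomial size, and in particular on the branching program $B_x$ underlying the chosen $\prBPL$ computation.
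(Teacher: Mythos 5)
Your proposal is correct and matches the paper's argument in substance: the paper simply routes the same reduction through the $\prBPL$-complete problem $f_3$ of \Cref{prop:BPLcomplete} (whose hardness proof already contains the machine-to-OBP construction you describe), then observes that $\algUniv$ solves $f_3$ directly via \Cref{lem:decides}. The polynomial input-blowup caveat you flag (needing $S(n^{O(1)})=O(S(n))$) is equally present, and equally implicit, in the paper's appeal to closure under logspace reductions, so it is not a gap relative to the paper's own proof.
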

\begin{proof}
    By \Cref{prop:BPLcomplete} it suffices to solve $f_3$ using a logspace reduction to $\algUniv$. Given $(1^n,B,r)$ as the input (where $r$ is the rounding threshold, which we will ignore), let $\delta:=\algUniv(1^n,B)$ be the value returned by $\algUniv$ on $B$. By \Cref{lem:decides} we have $|\delta-\E[B]|<n^{-1}$, and hence $\delta$ is a desired deterministic output for $f_3$.
\end{proof}

We can then conclude the proof of \Cref{thm:universal}.
\begin{proof}[Proof of~\Cref{thm:universal}]
    Let $\algUniv$ be the algorithm as defined above. \Cref{itm:univGood} follows from \Cref{lem:decides} (and the fact that it returns a value follows from \Cref{lem:halts}). The if direction of \Cref{itm:univIFF} follows from \Cref{lem:halts}, and the only if direction follows from \Cref{lem:onlyif}.
\end{proof}

Finally, we conclude the proof of \Cref{universal-derandomizer}.
\begin{proof}[Proof of~\Cref{universal-derandomizer}]
    Let $U$ be the algorithm that, given the description of a randomized logspace algorithm $R$ and an input $x$ where $|x|=n$, constructs (in deterministic logspace) the ordered branching program $B:=R(x,\cdot)$ of length and width at most $m=\poly(n)$ that represents the action of $R$ over its random bits. Then let $U$ call $\algUniv(1^m,B)$, and if the value returned is less than $1/2$ return $0$, and otherwise return $1$. By the promise on $R$ we have either $\Pr[B(U_n)=1]>3/4$ or $\Pr[B(U_n)=1]<1/4$, and as in both cases we estimate the expectation of $B$ up to error $1/n$ by \Cref{thm:universal}, we correctly decide which case we are in, and the space consumption follows from that of \Cref{thm:universal}.
\end{proof}

\section{Black Box Testing} \label{sec:BBT}
We now prove that hitting sets imply black-box two-sided derandomization of ordered branching programs. To do so, we first formally define hitting sets and deterministic samplers:
\begin{definition}
    Given a class of functions $\cF=\{f:\zo^n\ra\zo\}$, an \emph{$\eps$-hitting set generator (HSG)} $H:\zo^s\ra\zo^n$ for $\cF$ satisfies that for every $f\in \cF$ with $\E[f]\geq \eps$, there exists $y\in \zo^s$ where $f(H(y))=1$. We say $H$ is \emph{explicit} if there is a uniform algorithm that computes $H(x)$ in space $O(s)$ given $1^n$ and $x$.
\end{definition}

\begin{definition}
    Given a class of functions $\cF=\{f:\zo^n\ra\zo\}$, an \emph{$\eps$-(deterministic) sampler} $\SAMP$ with space complexity $s(n)$ for $\cF$ is a deterministic algorithm that runs in space $s(n)$ and, given oracle access to $f\in \cF$, makes queries to $f$ and outputs an estimate $\delta$ satisfying $|\delta-\E[f]|\leq \eps$.
\end{definition}
A deterministic sampler captures the idea of a derandomization algorithm that only accesses the branching program in a black-box fashion, and such a notion has been explored before in the context of small-space derandomization~\cite{HU22,CH20,PV22}.

We now give a formal statement of \Cref{thm:equiv:int}.
We state it in terms of dependence on the seed length of the HSG, as our result generically converts a hitting set to a sampler with comparable space complexity.
\newcommand{\cH}{\mathcal{H}}
\begin{theorem}\label{thm:equiv_formal}
    Suppose there is a uniformly constructible family $\cH=\{H_1,\ldots,\}$ where $H_n:\zo^{s(n)}\ra\zo^n$ is an explicit $1/2$-hitting set with seed length $s(n)$ for width $n$, length $n$ OBPs. Then there is a uniformly computable deterministic $\eps$-sampler with space complexity $O(s(nw/\eps))$ for width $w$, length $n$ OBPs.
\end{theorem}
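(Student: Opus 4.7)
The plan is to port the Cheng--Hoza strategy \cite{CH20} to the black-box setting. Their white-box algorithm uses the HSG to guess approximations $\tpv$ of the probability of reaching each state $v$ of $B$ and then invokes the local-consistency test of \Cref{thm:LCprobs} to verify them. We cannot enumerate the states of $B$ directly when $B$ is a black box, but we \emph{can} construct auxiliary OBPs whose combinatorial structure is under our control and whose evaluation only requires oracle access to $B$. The HSG's guarantee then applies to the auxiliary OBPs even though the sampler itself never ``sees'' $B$'s transitions.

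Concretely, I would fix an amplification parameter $T = \Theta(1/\eps)$ and, for each threshold $\tau \in \{0,1/T,2/T,\ldots,1\}$, define the OBP
\[
C_\tau(x_1,\ldots,x_T) = 1 \iff \sum_{i=1}^{T} B(x_i) \geq \tau T.
\]
Viewed as a bit-level OBP, $C_\tau$ has width at most $O(Tw)$ (its state tracks one running copy of $B$'s state together with a count up to $T$) and length $Tn$, so its total size is at most $N := nw/\eps$. The sampler never needs to inspect the transitions of $C_\tau$: to test whether a seed $y \in \zo^{s(N)}$ witnesses $C_\tau$, it simply computes $H_N(y)$, slices the output into $T$ blocks, and makes $T$ oracle calls to $B$. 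Since $C_\tau$ is formally an OBP of width and length at most $N$, $H_N$ is a $1/2$-hitting set for it. The sampler enumerates all seeds and records, for each $\tau$, whether a satisfying seed exists; the contrapositive of the HSG property applied at threshold $\tau' = \tau^{*}+1/T$, combined with a Chernoff concentration bound, yields $\E[B] \leq \tau^{*} + O(\eps)$, where $\tau^{*}$ is the largest $\tau$ for which a witness was found.

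The main obstacle I expect is controlling the \emph{accidental-hit} asymmetry intrinsic to hitting sets: because $H_N$ only guarantees a hit for events of probability at least $1/2$, it may spuriously satisfy $C_\tau$ even when $\Pr[C_\tau(U)=1]$ is tiny, which in isolation could push $\tau^*$ far above $\E[B]$. To pin down $\E[B]$ from both sides I would simultaneously apply the HSG to the complementary amplified OBP $\overline{C}_\tau$ (accepting iff the empirical mean is at most $\tau$), obtaining a symmetric upper bound on $\E[B]$ from the smallest threshold where $\overline{C}_\tau$ has no witness. The delicate step is then to show that there exists a single grid point $\hat{\mu}$ on a $\Theta(\eps)$-granularity grid for which neither $C_{\hat{\mu}+\eps}$ nor $\overline{C}_{\hat{\mu}-\eps}$ is witnessed; for such $\hat{\mu}$, the two-sided HSG contrapositive combined with Chernoff sandwiches $\E[B]$ in $(\hat{\mu}-O(\eps), \hat{\mu}+O(\eps))$. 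Establishing the existence of this crossover grid point is where I expect the real work to lie, and I anticipate it to proceed by the local-consistency analysis of \cite{CH20} applied to the amplified OBP $C_\tau$, possibly combined with a Saks--Zhou-style rounding argument to absorb the boundary cases.
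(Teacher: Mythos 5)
You correctly identify the ``accidental-hit asymmetry'' as the obstacle, but the two-sided fix you propose does not overcome it, and this is a fatal gap. A $1/2$-hitting set guarantees only the implication ``$\E[f]\geq 1/2 \Rightarrow$ some seed hits $f$''; it places no constraint whatsoever on which low-probability events its image happens to intersect. In particular, it is entirely consistent with the hitting-set property that some output of $H_N$, sliced into $T$ blocks, consists of $T$ accepting inputs of $B$ (so $C_\tau$ is witnessed for \emph{every} $\tau$), while another output consists of $T$ rejecting inputs (so $\overline{C}_\tau$ is witnessed for every $\tau$) --- regardless of the value of $\E[B]$. In that case the crossover grid point $\hat\mu$ for which neither $C_{\hat\mu+\eps}$ nor $\overline{C}_{\hat\mu-\eps}$ is witnessed simply does not exist, and the sandwich collapses. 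This is exactly why a hitting set naively yields only one-sided derandomization, and neither threshold amplification nor Saks--Zhou rounding repairs it: what is needed is a way to \emph{verify} candidate estimates, rather than trusting the HSG to avoid spurious hits.

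The paper's proof supplies precisely that verification. It builds a black-box local consistency test (\Cref{thm:bb_lc}): states of $B$ are indexed by pairs $(x,i)$ via the prefixes $H(x)_{1..i}$, with an indistinguishability relation collapsing seeds that behave identically under all HSG continuations, and candidate estimates $\tp_{x,i}$ of the accepting probabilities from these states are checked for pairwise consistency (\Cref{def:test}). Soundness must tolerate two black-box artifacts: conflating distinct but indistinguishable states (handled via \Cref{clm:consis}), and ``unverified'' states whose out-neighbours are never hit, which are shown to be reached with probability at most $2\eps$ and hence harmless. Completeness is then achieved by running the hitting set not on threshold programs but on an existential auxiliary program $\EST$ (\Cref{lem:goodest}) that accepts exactly when its blocks yield $\eps$-accurate empirical estimates of the accepting probability from every reachable state; since $\E[\EST]>1/2$, some seed produces estimates that pass the test, and the sampler enumerates seeds until one does. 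Your proposal contains neither the state-indexing and verification machinery nor the analysis of unhit states, which is where essentially all of the work lies.
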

We prove this by developing a local consistency test that can be implemented given black-box access to a branching program.
\subsection{Proof Overview}
The proof of \Cref{thm:equiv_formal} relies on developing a local consistency test that can be implemented given black-box access to a branching program (whereas all previous tests required access to the internal states of the program). We first describe how we can access the internal states of the program in a black-box manner. 
		
Given a branching program $B:\zo^n\ra\zo$ and a hitting set $H:\zo^s\ra\zo^n$, for each seed $x\in \zo^s$ and layer $i\in [n]$, the program reaches some state $v$ on input $H(x)_{1..i}$. We can index this state in a black-box fashion by writing down $(x,i)$. However, as potentially many seeds may reach the same state $v$, we would like to collapse these duplicates back together. Since we cannot examine layer $i$ of the program, we can instead attempt to test if $x$ and $x'$ reach the same state, by plugging in every HSG output and see if the programs starting from $(x,i)$ and $(x',i)$ behave differently.
\begin{definition}[Informal statement of~\Cref{def:indis}]
    For $x,x'\in \zo^s$ and $i\in [n]$, tuples $(x,i)$ and $(x',i)$ are \textit{indistinguishable} if for every $y\in\zo^s$,
    \[B(H(x)_{1..i}H(y)_{1..n-i})=B(H(x')_{1..i}H(y)_{1..n-i}).\]
\end{definition} 
It is not the case that indistinguishable tuples always reach the same state. However, Cheng and Hoza were able to show the following:
\begin{lemma}[\cite{CH20} (Informal)]\label{lem:soundIntro}
    Suppose states $v$ and $v'$ are reached by indistinguishable tuples. Then the probability of accepting in $B$ starting from $v$ is similar to that of accepting starting from $v'$.
\end{lemma}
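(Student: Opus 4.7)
The plan is to reduce the question ``how much do the tail programs starting at $v$ and $v'$ disagree on uniform inputs?'' to the hitting-set guarantee, by constructing a small OBP that captures the disagreement and then using indistinguishability to force that OBP to have small expectation. Let $F_v, F_{v'}: \zo^{n-i} \ra \zo$ denote the tail programs obtained by starting $B$ from $v$ and $v'$ respectively; each has width at most $w$ and length $n-i$, and $\E[F_v] = p_{v\ra}$, $\E[F_{v'}] = p_{v'\ra}$. I would then define the two one-sided difference programs $D^+(z) := F_v(z) \wedge \neg F_{v'}(z)$ and $D^-(z) := \neg F_v(z) \wedge F_{v'}(z)$, built via the standard product construction on the state spaces of $F_v$ and $F_{v'}$, so that each has length $n-i$ and width at most $w^2$.

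Next, the indistinguishability hypothesis translates directly into the assertion $D^+(H(y)_{1..n-i}) = D^-(H(y)_{1..n-i}) = 0$ for every seed $y \in \zo^s$, since whenever $F_v(H(y)_{1..n-i}) = F_{v'}(H(y)_{1..n-i})$ neither asymmetric difference can fire. I would then pad each $D^{\pm}$ into a length-$n$, width-$O(w^2)$ OBP $\widetilde{D}^{\pm}$ by appending $i$ trivial layers that ignore the last $i$ input bits, so that $\widetilde{D}^{\pm}(H(y)) = D^{\pm}(H(y)_{1..n-i}) = 0$ for every $y$. Because $\widetilde{D}^{\pm}$ lies in the class for which a suitable $H_N$ from $\mathcal{H}$ is a $1/2$-hitting set (taking $N$ on the order of $\max(n, w^2)$), the fact that no seed of $H_N$ produces a $1$ forces $\E[\widetilde{D}^{\pm}] < 1/2$. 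Rewriting as $\Pr_z[F_v(z)=1 \wedge F_{v'}(z)=0] < 1/2$ and symmetrically for the other direction gives
\[|p_{v\ra} - p_{v'\ra}| \;=\; \bigl|\Pr[D^+=1] - \Pr[D^-=1]\bigr| \;<\; 1/2.\]

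For the sharper $\eps$-closeness demanded by \Cref{thm:equiv_formal}, I would amplify by taking a length-$k(n-i)$ OBP whose output is the OR of $k$ independent copies of $D^+$ (and likewise $D^-$): this combined OBP still has width $O(w^2)$, and its expectation exceeds $1/2$ as soon as $\Pr[D^+=1] \gtrsim 1/k$. Instantiating the HSG at the enlarged size parameter $O(nw^2/\eps)$ therefore drives the per-block disagreement down to $\eps$ and yields $|p_{v\ra}-p_{v'\ra}| \leq 2\eps$. The main obstacle I foresee is that the amplified program consumes $k$ independent blocks of fresh randomness, whereas the original indistinguishability relation is defined only with respect to pointwise outputs of a single $H$; I would handle this by formulating indistinguishability with respect to the entire family $\mathcal{H} = \{H_N\}$ used by the sampler, so that a hit of the amplified OBP by $H_N$ automatically contradicts the pointwise-agreement assumption at the appropriate scale.
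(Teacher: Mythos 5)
Your first half is exactly the right idea and matches the argument the paper imports from Cheng--Hoza: form a difference program on the product of the two tail programs (width at most $w^2$), observe that indistinguishability means no hitting-set output makes it fire, pad to length $n$, invoke the hitting-set guarantee to bound its expectation, and finish with $|p_{v\ra}-p_{v'\ra}|\leq \Pr_z[F_v(z)\neq F_{v'}(z)]$.

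The gap is in how you get from a $1/2$ bound to an $\eps$ bound. The OR-of-$k$-independent-copies amplification does not follow from the hypothesis: indistinguishability only tells you that the prefixes $H(y)_{1..n-i}$ of outputs of the \emph{specific} generator $H$ make $D^{\pm}$ vanish, whereas the amplified program is fed $k$ blocks of an output of a \emph{different, longer} generator $H_N$, and nothing forces those blocks to be (prefixes of) outputs of $H$. So you cannot conclude that the amplified program is unhit, and you identify this obstacle yourself. Your proposed fix --- redefining indistinguishability with respect to the entire family $\cH$ --- proves a different statement and would break the surrounding construction: the tester of \Cref{thm:bb_lc} decides $v\sim v'$ by enumerating the $2^s$ seeds of the single $H$ it is given (which is what keeps it in space $O(s+\log n)$), and the completeness and soundness analyses use that same relation. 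The paper sidesteps the issue by performing the error reduction once and globally: in the proof of \Cref{thm:equiv_formal}, the $1/2$-HSG family $\cH$ is first converted by a standard reduction into an explicit $\eps$-HSG $H$ for length-$n$, width-$w^2$ programs, and the formal version of this lemma (\Cref{clm:consis}) is stated and used for that $\eps$-HSG. With an $\eps$-HSG in hand, the single symmetric-difference program $D(z)=\mathbb{I}[F_v(z_{1..n-i})\neq F_{v'}(z_{1..n-i})]$ (length $n$ after padding, width at most $w^2$) is never hit, hence $\E[D]<\eps$, hence $|p_{v\ra}-p_{v'\ra}|\leq\E[D]<\eps$ --- no amplification inside the lemma at all.
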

Thus, the states have similar behavior from layer $i$ onward. Unfortunately, it is not the case that indistinguishable states always have indistinguishable out-edges. Thus, a naive attempt to learn the program using query access would print \textit{both} out-edges and output a nondeterministic branching program, a model that is provably $\NL$-hard to derandomize. It is likewise unclear how to select a single edge to print in a way that maintains the acceptance probability of the program. In the constant-width regime, Cheng and Hoza~\cite{CH20} circumvented this by remembering $O(\log n)$ bits of information about every state in layer $i+1$ while constructing layer $i$, allowing them to choose a good out-edge. However, this does not seem feasible for super-constant width.
Instead, we develop a local consistency test that can tolerate conflating indistinguishable states.

Suppose for every tuple $(x,i)$ we are given an estimate $\tp_{x,i}$, which is supposedly close to the true probability of accepting from $v:=B[\vs,H(x)_{1..i}]$. 
\begin{definition}[Black-Box Local Consistency Test (Informal)]\label{test:introbb}
    Given black-box access to an ordered branching program $B:\zo^n\ra\zo$ and a hitting set $H:\zo^s\ra\zo^n$ and estimates $\{\tp_{x,i}\}_{x\in \zo^s,i\in [n]}$, verify that the following conditions hold:
    \begin{enumerate}
        \item For every pair of indistinguishable tuples $(x,i),(x',i)$, we have $|\tp_{x,i}-\tp_{x',i}|\leq O(\eps)$.
        \item For every tuple $(x,i)$, let $(x_0,i+1)$ and $(x_1,i+1)$ be arbitrary tuples that are indistinguishable from $B[\vs,H(x)_{1..i}0]$ and $B[\vs,H(x)_{1..i}1]$ respectively. Then
        \[
        \left|\tp_{x,i}-\frac{\tp_{x_0,i+1}+\tp_{x_1,i+1}}{2}\right|\leq O(\eps).
        \]
    \end{enumerate}
    If all such conditions hold, output the estimate $\tp_{0,0}$, and otherwise reject.
\end{definition}
We think of all our tests as having a completeness and soundness component, where completeness means that a set of estimates which are sufficiently close to the true probabilities are gauranteed to pass, and soundness means that the test passing implies the returned estimate is close to the true value (where the precise parameters are discussed later).

It is not too difficult to the tests of \Cref{test:introbb} in space $O(s+\log n)$ given $H$ and black-box access to $B$, as we can enumerate over the seeds of the hitting set and layers in the program, and all such tests are \say{local}, in the sense that they deal with at most two layers and a constant number of seeds.

If every state is distinguishable from every other, and $H$ hits every state in the program, the test of \Cref{test:introbb} is equivalent to the following white-box local consistency test:
\begin{definition}[White-Box Local Consistency Test (Informal)]\label{test:introwb}~
    \begin{enumerate}
        \item For every $v$, all estimates of the accepting probability from $v$ must be within $\eps$ of each other.
        \item For every $v$, estimates of the accepting probability from $v,v_0:=B[v,0]$, and $v_1:=B[v,1]$ (which we denote $\widetilde{p_v},\widetilde{p_{v_0}}$, and $\widetilde{p_{v_1}}$) must satisfy $\widetilde{p_{v}}\approx (\widetilde{p_{v_1}}+\widetilde{p_{v_0}})/2$.
    \end{enumerate}
    If all such conditions hold, output an arbitrary estimate $\widetilde{p_{\vs}}$, and otherwise reject.
\end{definition}
The test of \Cref{test:introwb} clearly accepts if the estimates are exactly (or within $\eps/2$ of) the true probabilities of accepting from each vertex. Likewise, soundness is not difficult to show.
However, this idealized version of the test in \Cref{test:introbb} not exactly happen, for two reasons:
\begin{enumerate}
    \item We may impose Item (1) checks between tuples that reach different, yet indistinguishable, states, and likewise for the $0$ and $1$ states of Item (2).
    \item We may fail to impose Item (2) checks between $v$ and $B[v,0]$ and $B[v,1]$ if no string output by the hitting set reaches one of the latter states.
\end{enumerate}
Issue (1) must be dealt with in the proof of completeness (as we add some tests not in the white-box tester) and (2) in the proof of soundness (as we sometimes fail to impose tests that should be present). Issue (1) is the easier of the two to deal with. Since indistinguishable states have similar probability of accepting by \Cref{lem:soundIntro}, good estimates for the accepting probabilities of indistinguishable vertices will still be within $O(\eps)$ of each other. Issue (2), in contrast, seemingly presents a real issue for the soundness. For state $v:=B[\vs,H(x)_{1..i}]$ where there is no seed $x'$ where $B[\vs,H(x')_{1..i+1}]=B[v,0]$, we could run \textit{no} local consistency test to verify $\tp_{x,i}$. In fact, the estimate of the probability of accepting from $v$ could be arbitrarily wrong, and we would have no ability to detect it. However, we observe that every such $v$ has low probability of being reached from the start state. This is because if no ($\eps$-)HSG output reaches $B[v,0]$, $v$ must have probability of being reached from the start state at most $2\eps$. But then a very bad estimate of the probability of accepting from $v$ only changes the overall probability of accepting by at most $O(\eps)$ (and such an argument can be run for all non-verified states simultaneously). Ultimately, we are able to show that the lack of these checks can only increase the overall error by $O(\eps)$, which is tolerable.

Putting it all together, we show a black-box tester that, given estimates $\tp_{x,i}$ for the probability of accepting from $B[\vs,H(x)_{1..i}]$ for every $x$ and $i$, either outputs an approximation of the expectation of the program or rejects the input. To conclude, we use an idea of Cheng and Hoza to find a good set of estimates $\tp_{x,i}$ using a hitting set. First, to obtain a better result for nontrivial yet suboptimal hitting set generators, we slightly modify the tester to take in $n\cdot w$ estimates, corresponding (essentially) to an estimate for the acceptance probability from every state in the original branching program. Then we show (essentially using the argument of~\cite{CH20}), that there is a branching program $T$ of length $\poly(nw/\eps)$ and width $\poly(nw/\eps)$ that divides its input into $n\times w$ blocks, and uses the block labeled with $v$ as a long random string to estimate $\tp_{v}$ for every state $v$ in the program, and accepts if all these estimates are within $\eps$ of the true acceptance probability. The program uses the true probabilities to check if the empirical average of the samples is within $\eps$ of the true values, but we do not need to explicitly construct it - we only need that it exists, and hence our HSG family will contain some string hitting it. Finally, we argue that we can compute the associated empirical averages with oracle access to $B$, rather than $T$. A string that hits $T$ will produce good estimates $\tp_{v}$ for every $v$, and our black-box tester will accept on these estimates. Then we can simply enumerate over hitting set strings, and return the first accepted estimate.

\subsection{Black-Box Local Consistency Tests}
\newcommand{\pvr}{p_{v\ra}}
We now formally state the black-box local consistency test:
\begin{theorem}\label{thm:bb_lc}
    There is a deterministic space $O(s+\log n)$ algorithm that, given an explicit $\eps$-HSG $H:\zo^s\ra\zo^n$ for length $n$, width $w^2$ branching programs, oracle access to an OBP $B$ of width $w$ and length $n$, and estimates $\{\tp_{x,i}\}_{x\in \zo^s,i\in [n]}$, either outputs a value or rejects. Moreover:
    \begin{enumerate}
        \item If for every $x$, we have $|\pvr-\tp_{x,i}|\leq 2\eps$ where $v=B[\vs,H(x)_{1..i}]$ for every $i<n$ and $\tp_{x,n}=B(H(x))$, then the algorithm outputs a value.
        \item If the algorithm outputs $\delta$, then $|\E[B]-\delta|\leq 6\eps n$.
    \end{enumerate}
\end{theorem}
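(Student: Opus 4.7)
The plan is to formalize the description in the proof overview. First I would restate the indistinguishability relation $(x,i)\sim(x',i)$: for every $y\in\zo^s$, $B(H(x)_{1..i}H(y)_{1..n-i})=B(H(x')_{1..i}H(y)_{1..n-i})$. This can be decided in space $O(s+\log n)$ by iterating over $y\in\zo^s$ and making a constant number of oracle queries to $B$ for each $y$. The tester then performs three checks: (i) for every layer $i\leq n$ and every indistinguishable pair $(x,i)\sim(x',i)$, verify $|\tp_{x,i}-\tp_{x',i}|\leq 2\eps$; (ii) for every $(x,i)$ with $i<n$ and each $b\in\zo$, search for a seed $x_b$ such that $(x_b,i+1)$ is indistinguishable from a tuple reaching $B[\vs,H(x)_{1..i}b]$ (skipping the check on that branch if no such seed exists), and verify $|\tp_{x,i}-(\tp_{x_0,i+1}+\tp_{x_1,i+1})/2|\leq 2\eps$; (iii) at the final layer, verify $\tp_{x,n}=B(H(x))$ for every $x$. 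Each check is local—two adjacent layers and a constant number of seeds—giving the stated space bound. If all checks pass, output $\tp_{0,0}$; otherwise reject.

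Completeness is the cleaner half. Under the assumption $|\tp_{x,i}-\pvr|\leq 2\eps$ where $v=B[\vs,H(x)_{1..i}]$, the Cheng--Hoza lemma (\Cref{lem:soundIntro}) gives that indistinguishable tuples reach states whose accept-from probabilities differ by at most $O(\eps)$: the OBP that XORs two walks starting from $v$ and $v'$ has width at most $w^2$ and length $n-i$, so the width-$w^2$ HSG would detect any gap $\geq\eps$ via some witness $y$. Thus all three checks pass with the prescribed slack $2\eps$, with check (ii) additionally relying on the identity $\pvr=\tfrac{1}{2}(p_{B[v,0]\ra}+p_{B[v,1]\ra})$.

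The main obstacle is soundness. To prove $|\E[B]-\tp_{0,0}|\leq 6\eps n$, I would assign to each state $v\in V_i$ a canonical estimate $\hat{p}_v$: if some seed reaches $v$, pick any such witness and take its estimate; otherwise set $\hat{p}_v=0$. Check (i) combined with \Cref{lem:soundIntro} makes this well-defined up to $O(\eps)$ on each indistinguishable class of hit states, and check (iii) ensures $\hat{p}_v=B(v)$ at layer $n$ for hit states. Check (ii) translates into the local relation $|\hat{p}_v-\tfrac{1}{2}(\hat{p}_{B[v,0]}+\hat{p}_{B[v,1]})|\leq O(\eps)$ for every hit state $v$ whose children are both hit. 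The delicate step is controlling the error coming from states where some check is missing. The crucial observation is that the set $S_i\subseteq V_i$ of unhit states is the accept set of a width-$w\leq w^2$ OBP, so the HSG guarantee forces $\sum_{v\in S_i}p_{\ra v}<\eps$; consequently the total mass in $V_i$ on states whose $b$-child lies in $S_{i+1}$ is at most $2\eps$, since each such $v$ contributes $\tfrac12 p_{\ra v}$ to the mass of its unhit child. Telescoping $\hat{p}_{\vs}=\sum_{v\in V_n}p_{\ra v}\hat{p}_v+\sum_{i<n}\sum_{v\in V_i}p_{\ra v}\epsilon_v$ where $\epsilon_v$ is the per-vertex local-consistency error, and splitting by whether $v$ is \emph{good} (checked, so $|\epsilon_v|\leq O(\eps)$) or \emph{bad} (unchecked, so $|\epsilon_v|\leq 1$ but $\sum p_{\ra v}\leq O(\eps)$ in the layer), yields total error $O(\eps n)$, which tightens with careful constants to the $6\eps n$ bound. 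I expect the most technical piece to be the bookkeeping that reconciles indistinguishability slack from check (i) with the missing-check slack from the bad vertices, propagated simultaneously across all layers.
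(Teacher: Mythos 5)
Your proposal follows essentially the same route as the paper: the same three local checks, completeness via the Cheng--Hoza fact that indistinguishable states have acceptance probabilities within $\eps$ (\Cref{clm:consis}), and soundness by isolating the states whose children are unhit and bounding their total mass by $2\eps$ per layer via the HSG guarantee. Your forward telescoping $\hat{p}_{\vs}=\sum_{v\in V_n}p_{\ra v}\hat{p}_v+\sum_{i<n}\sum_{v\in V_i}p_{\ra v}\epsilon_v$ is a cosmetic reformulation of the paper's backward induction over a modified program $Q$ whose unverified states are rewired to match the estimates; both yield $O(\eps)$ per verified vertex plus $O(\eps)$ per layer from the unverified mass.

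One concrete error: your test tolerances of $2\eps$ in checks (i) and (ii) are too tight for completeness. Under the hypothesis $|\tp_{x,i}-p_{v\ra}|\leq 2\eps$, two indistinguishable tuples can have estimates differing by up to $2\eps+\eps+2\eps=5\eps$ (two estimate errors plus the $\eps$ slack of \Cref{clm:consis}), and the check-(ii) quantity can likewise reach $5\eps$, so a $2\eps$ threshold would reject inputs that satisfy the completeness hypothesis. The paper uses $5\eps$ thresholds for exactly this reason, and the soundness telescoping then gives $5\eps n + 2\eps\leq 6\eps n$; your argument goes through verbatim once the constant is corrected.
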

We remark that, despite this result being a black-box test versus the white-box local consistency test of Cheng and Hoza, it obtains an improved soundness loss (of $\eps n$ rather than $\eps nw$), which is relevant in the regime where the branching program has width much larger than length. This is notable as obtaining optimal error samplers in the Nisan-Zuckerman regime~\cite{NZ96} (where optimal-error hitting sets are already known) is a well known open question. Unfortunately, we do not obtain this result, as the argument that we can obtain good accepting probability estimates using a hitting set (\Cref{lem:goodest}) requires a hitting set for ordered programs of length $nw\gg n$.

We first define notation related to using $H$ to traverse the branching program:
\begin{definition}\label{def:indis}
    For every $x\in \zo^s$ and $i\in [n]$, let
    \[
    v_i(x) := B[\vs,H(x)_{1..i}].
    \]
    Note that this implies $v_i(x)=v_i(x')$ if $B[\vs,H(x)_{1..i}]=B[\vs,H(x')_{1..i}]$, i.e. the two seeds reach the same vertex in layer $i$. For convenience, we write $p_{x,i}:=p_{v_i(x)\ra}$. Moreover, for states $u,u'\in V_i$ we write $u\sim u'$ if the two states are \emph{indistinguishable} under $H$, i.e. for all $y$,
    \[B[u,H(y)_{1..n-i}]=B[u',H(y)_{1..n-i}].\] 
\end{definition}

We can now define the consistency test implemented by the algorithm.
\begin{definition}[Local Consistency Test]\label{def:test}
    Given $B$ and $H$ and the estimates $\tp_{x,i}$, let the test be as follows:
    \begin{enumerate}
        \item \label{t1}
        For every $x,i\in \zo^s\times [n]$ and for every $x_0,x_1\in \zo^s$ such that $B[v_i(x),0]\sim v_{i+1}(x_0)$ and $B[v_i(x),1]\sim v_{i+1}(x_1)$, require
        \[\left|\tp_{x,i} - \frac{\tp_{x_1,i+1}+\tp_{x_0,i+1}}{2}\right|\leq 5\eps.\]
        \item\label{t2}
        For every $x,x'\in \zo^s$ and $i\in [n]$ such that $v_i(x)\sim v_i(x')$, require $|\tp_{x,i}-\tp_{x',i}|\leq 5\eps.$
        \item\label{t3} For every $x\in \zo^s$, require $\tp_{x,n}=B(H(x))$.
    \end{enumerate}
\end{definition}
Note that given $H$ and oracle access to $B$ and the estimates $\tp_{x,i}$, we can compute all such tests in space $O(s+\log n)$. We first show this test is complete:
\begin{lemma}\label{lem:complete}
    Suppose for every $x$, $|\tp_{x,i}-p_{x,i}|\leq 2\eps$ for $i<n$ and $\tp_{x,n}=B(H(x))$. Then the test of~\Cref{def:test} passes.
\end{lemma}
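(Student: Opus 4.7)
The plan is to check each of the three conditions in \Cref{def:test} in turn, reducing everything to the triangle inequality once we establish a single key lemma: that under an $\eps$-HSG for width $w^2$ OBPs, two indistinguishable states $u \sim u'$ in the same layer must satisfy $|p_{u\to} - p_{u'\to}| \leq \eps$. Condition \ref{t3} is immediate from the hypothesis $\tp_{x,n}=B(H(x))$. For the other two, it is natural to pass through the true probabilities $p_{x,i}$ and exploit that they exactly satisfy the consistency relations that the test wants the $\tp_{x,i}$ to satisfy approximately.

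The key indistinguishability lemma is the crux. Given $u, u' \in V_i$ with $u \sim u'$, consider the product OBP $B_u \oplus B_{u'}$ (XOR of the sub-branching-programs starting from $u$ and $u'$), which has width at most $w^2$ and length $n-i \leq n$; if the length is less than $n$, pad it with an identity prefix/suffix so it becomes a width $w^2$, length $n$ OBP. Then
\[
\E[B_u \oplus B_{u'}] = \Pr_z[B_u(z) \neq B_{u'}(z)] \geq |p_{u\to} - p_{u'\to}|,
\]
so if $|p_{u\to}-p_{u'\to}| > \eps$ the HSG $H$ must hit $B_u \oplus B_{u'}$, yielding a seed $y$ with $B[u,H(y)_{1..n-i}] \neq B[u',H(y)_{1..n-i}]$, contradicting $u \sim u'$.

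With this in hand, each remaining check is a short triangle-inequality computation. For condition \ref{t2}, if $v_i(x) \sim v_i(x')$, then $|p_{x,i}-p_{x',i}|\leq\eps$ and the hypothesis gives $|\tp_{x,i}-p_{x,i}|,\,|\tp_{x',i}-p_{x',i}| \leq 2\eps$ (treating the $i=n$ case separately, where indistinguishability forces equal labels and the estimates coincide exactly), so $|\tp_{x,i}-\tp_{x',i}| \leq 5\eps$. For condition \ref{t1}, use that the true probabilities always satisfy the exact recursion
\[
p_{x,i} = \tfrac12\bigl(p_{B[v_i(x),0]\to} + p_{B[v_i(x),1]\to}\bigr),
\]
and then apply the indistinguishability lemma to replace $p_{B[v_i(x),b]\to}$ by $p_{x_b,i+1}$ at a cost of $\eps$ on each side (averaged to $\eps$ total), and the hypothesis to pass from $p$ to $\tp$ at a cost of $2\eps$ on each of the three terms $\tp_{x,i}, \tp_{x_0,i+1}, \tp_{x_1,i+1}$; the averaging of the two layer-$(i{+}1)$ errors yields $2\eps + 2\eps + \eps = 5\eps$, exactly the allowed slack (and again the $i+1=n$ case is even easier because $\tp_{x_b,n}=B(H(x_b))$ equals the label of $v_n(x_b)$, which coincides with that of $B[v_i(x),b]$ by indistinguishability).

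There is no real technical obstacle here beyond bookkeeping: the $5\eps$ slack in \Cref{def:test} was chosen precisely to absorb the sum $2\eps + 2\eps + \eps$, so the argument is tight but automatic. The only subtlety worth flagging is ensuring the product OBP used in the indistinguishability lemma fits inside the width class the HSG is promised to hit, which is why the theorem statement requires an HSG for width $w^2$ rather than $w$.
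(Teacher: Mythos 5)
Your proof is correct and follows essentially the same route as the paper's: reduce all three checks to the triangle inequality together with the fact that indistinguishable states have accepting probabilities within $\eps$ (the paper's \Cref{clm:consis}), with the same $2\eps+2\eps+\eps$ accounting for the $5\eps$ slack. The only difference is that you supply a proof of that indistinguishability claim via the padded XOR-product OBP of width $w^2$ (which is valid and is exactly why the theorem demands an HSG for width $w^2$), whereas the paper simply cites it from Cheng--Hoza.
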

To show this we require the following, which follows from arguments about the mass of the set difference. 
\begin{claim}[Lemma 3.2~\cite{CH20}]\label{clm:consis}
    If $v\sim v'$, then $|p_{v\ra}-p_{v'\ra}|\leq \eps$. 
\end{claim}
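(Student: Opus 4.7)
The plan is to reduce the inequality to the hitting-set guarantee by cooking up two small OBPs that witness any discrepancy between the suffix behaviors of $v$ and $v'$. First I would construct auxiliary ordered branching programs $P_+$ and $P_-$ of length $n$ and width at most $w^2$ as follows: both simulate in parallel the two suffix computations starting from $v$ and from $v'$ on the first $n-i$ input bits (tracked as a pair of states, giving the $w^2$ bound); $P_+$ accepts on input $\sigma \in \zo^n$ iff $B[v,\sigma_{1..n-i}] = \va$ but $B[v',\sigma_{1..n-i}] \neq \va$, and $P_-$ does the opposite. The remaining $i$ layers are padded with trivial identity transitions so that the programs have length exactly $n$, matching the domain of $H$.

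The key step is to observe that the hypothesis $v \sim v'$ says precisely that $P_+(H(y)) = P_-(H(y)) = 0$ for every seed $y \in \zo^s$, since $B[v, H(y)_{1..n-i}]$ and $B[v', H(y)_{1..n-i}]$ agree by the definition of indistinguishability. Because $H$ is an $\eps$-HSG for length-$n$ width-$w^2$ OBPs, the contrapositive of the HSG property forces $\E[P_+] < \eps$ and $\E[P_-] < \eps$.

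Finally I would unpack
\[
p_{v\ra} - p_{v'\ra} = \Pr_{\sigma \la U_{n-i}}[B[v,\sigma] = \va] - \Pr_{\sigma \la U_{n-i}}[B[v',\sigma] = \va] = \E[P_+] - \E[P_-],
\]
so $|p_{v\ra} - p_{v'\ra}| \leq \max(\E[P_+], \E[P_-]) < \eps$, as desired. The main subtlety is not deep: the parallel simulation blows the width up from $w$ to $w^2$, which is exactly why \Cref{thm:bb_lc} assumes an HSG against width $w^2$ rather than $w$. This blowup is doing real work here, since a hitting set for only width-$w$ OBPs would not certify that the symmetric-difference programs $P_\pm$ are low-density.
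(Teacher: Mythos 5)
Your proof is correct and is exactly the argument the paper alludes to (it cites the claim from [CH20] with the remark that it ``follows from arguments about the mass of the set difference''): your $P_+$ and $P_-$ are precisely the indicator programs for the two set differences, the indistinguishability hypothesis makes the HSG miss both, and the width-$w^2$ parallel simulation is indeed why the theorem assumes a hitting set for width $w^2$. No gaps.
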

We can then prove completeness.
\begin{proof}[Proof of \Cref{lem:complete}]
    Consider an arbitrary \Cref{t1} test:
    \[
    \left|\tp_{x,i} - \frac{\tp_{x_1,i+1}+\tp_{x_0,i+1}}{2}\right|
    \]
    Let $v:=v_i(x)$ and for $b\in \zo$ let $v_b :=B[v,b]$ be the state actually reached following edge $b$ from state $v$. Then for $b\in \zo$ let $u_b:=v_{i+1}(x_b)$ be the state reached by $x_b$ in layer $i+1$. Note that $u_b$ does not necessarily equal $v_b$, as we could be conflating different states in layer $i+1$, but $u_b\sim v_b$. Thus:
    \begin{align*}
        \left|\tp_{x,i} - \frac{\tp_{x_1,i+1}+\tp_{x_0,i+1}}{2}\right| &\leq 4\eps+\left|p_{v\ra}- \frac{p_{u_0\ra}+p_{u_1\ra}}{2}\right| && \text{(Assumption)}\\
        &\leq 5\eps + \left|p_{v\ra}- \frac{p_{v_0\ra}+p_{v_1\ra}}{2}\right| && \text{(\Cref{clm:consis})}\\
        &= 5\eps.
    \end{align*}
    The proof of \Cref{t2} is analogous, again using \Cref{clm:consis}, and \Cref{t3} is immediate.
\end{proof}

We now show soundness. The key issue is dealing with states $v$ such that \textit{no} $x$ satisfies $v=v_i(x)$, because we cannot guarantee consistency for these states. However, these states are precisely those that the HSG fails to hit, which must mean they have low probability of being reached from the start vertex, and hence their estimates being wrong does only a small amount of harm.
\begin{lemma}\label{lem:sound}
    Suppose the test of~\Cref{def:test} passes with estimates $\tp_{x,i}$. Then $|\tp_{0,0}-p_{\vs\ra}|\leq 6\eps n$.
\end{lemma}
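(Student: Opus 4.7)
The plan is to turn the certified estimates $\{\tp_{x,i}\}$ into a coherent per-state estimate $\hat p_v$ of $p_{v\ra}$ for every state $v\in V_i$, and then telescope across the layers to control $|\tp_{0,0}-p_{\vs\ra}|$. Call a state $v\in V_i$ \emph{reached} if its indistinguishability class $[v]$ contains some $v_i(x)$ for a seed $x\in\zo^s$, and \emph{unreached} otherwise. For reached $v$, fix a canonical seed $x_{[v]}$ hitting the class and set $\hat p_v := \tp_{x_{[v]},i}$; by Test 2 of \Cref{def:test} this choice is uniquely determined up to an additive $5\eps$. For unreached $v$, set $\hat p_v := p_{v\ra}$ purely as an analytic device so that unreached states contribute nothing to the telescoping below; the algorithm never inspects these values.

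The crucial input from the hitting set is a mass bound on unreached states. The width-$w$ OBP that simulates $B$ for $i$ steps from $\vs$ and accepts iff it lands in an unreached class is, by definition, rejected by every HSG seed, so the hitting set property forces $\sum_{v\text{ unreached in }V_i}\mu_i(v)\leq\eps$, where $\mu_i(v):=\Pr_{y\la U_n}[B[\vs,y_{1..i}]=v]$. Since each unit of reaching-mass at an unreached layer-$(i{+}1)$ state is supplied by at most two parent edges, the total $\mu_i$-mass of states with at least one unreached child is at most $2\eps$. (Only the width-$w$ part of the stated width-$w^2$ HSG guarantee is needed here.)

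Now telescope. Let $D_i := \sum_{v\in V_i}\mu_i(v)\,\hat p_v$. Then $D_0 = \hat p_\vs = \tp_{0,0}$ (since $\vs$ is trivially reached by $x=0^s$) and $D_n = p_{\vs\ra}$, using Test 3 at reached leaves and our convention at unreached ones. Expanding parent weights gives
\[
D_i-D_{i+1} \;=\; \sum_{v\in V_i}\mu_i(v)\left(\hat p_v - \tfrac12\bigl(\hat p_{v_0}+\hat p_{v_1}\bigr)\right),
\]
where $v_b := B[v,b]$. For $v$ reached with both children's classes reached, Test 1 applied with the canonical seed for $[v]$ and canonical witnesses for $[v_0],[v_1]$ --- combined with Test 2 to bridge between the specific $\tp$-values appearing in the test and the representatives defining $\hat p$, and with \Cref{clm:consis} to bridge between a class and its individual members --- bounds the summand by $5\eps+\eps=6\eps$. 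For every remaining $v$ (either $v$ itself unreached or some child of $v$ unreached) no test constrains the summand, so we fall back on the trivial bound $|\cdot|\leq 1$; but the mass bound above caps the total weight of these summands at $O(\eps)$ per layer, which gets absorbed into the $6\eps$ budget. Summing across $n$ layers delivers $|\tp_{0,0}-p_{\vs\ra}|\leq 6\eps n$.

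The main obstacle is precisely this bad-case accounting: Test 1 imposes no constraint whatsoever on transitions that touch an unreached-class state, so the soundness bound has to rest entirely on the HSG-based mass bound rather than on any local algebraic identity. Making sure that the uncovered contribution stays $O(\eps)$ per layer --- and not, say, $\Omega(\eps\cdot w)$ --- is the real content of the argument, and it is what forces us to state the test in terms of indistinguishability classes and to use \Cref{clm:consis} when comparing $\hat p_v$ to $p_{v\ra}$ across members of the same class.
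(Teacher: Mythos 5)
Your high-level skeleton (small mass on states the hitting set misses, local consistency on the rest, telescope over layers) is the same idea as the paper's, but the paper packages it differently: it builds a modified program $Q$ that rewires every ``unverified'' state so its true acceptance probability in $Q$ matches the estimates up to $5\eps$, proves $|\E[Q]-\E[B]|\leq 2\eps$ once globally, and then runs a backward induction showing $|\tp_{x,i}-q_{v_i(x)\ra}|\leq 5\eps(n-i)$. Your forward telescoping of $D_i=\sum_v\mu_i(v)\hat p_v$ is a legitimate alternative route, but as written it has a genuine gap. You define $\hat p_v$ via a canonical seed $x_{[v]}$ that hits the \emph{class} of $v$, and then apply Test 1 ``with the canonical seed for $[v]$ and canonical witnesses for $[v_0],[v_1]$.'' Test 1 only constrains $\tp_{x_{[v]},i}$ against estimates for the children of the state $u:=v_i(x_{[v]})$ that the seed actually reaches; it requires witnesses $x_b$ with $B[u,b]\sim v_{i+1}(x_b)$. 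When $v$ is in a reached class but is not itself of the form $v_i(x)$ (or when the canonical seed hits a different member $u\neq v$ of the class), you have witnesses for $B[v,b]$, not for $B[u,b]$, and $u\sim v$ does \emph{not} imply $B[u,b]\sim B[v,b]$ --- indistinguishable states can have distinguishable out-edges, which is exactly the obstruction the paper highlights. \Cref{clm:consis} cannot bridge this, since it relates \emph{true} acceptance probabilities of indistinguishable states, whereas here you need to relate \emph{estimates} attached to possibly different classes of children, about which the tests say nothing. The fix is to attach estimates only to states that are literally hit ($v=v_i(x)$ for some $x$) and to dump all non-hit states, and all parents of non-hit states, into the bad bucket; the same HSG argument bounds their per-layer mass by $\eps$ and $2\eps$ respectively (this is precisely the paper's ``unverified'' mechanism, which deliberately uses exact equality $B[v,b]=v_{i+1}(x')$ rather than $\sim$).

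Even after that repair your constants do not close. A good layer-$i$ term costs $5\eps$ from Test 1 (and more if you insist on bridging through a per-class representative via Test 2, which adds up to $5\eps$ per bridge), and the bad terms cost up to $3\eps$ of mass per layer at trivial bound $1$, giving $8\eps n$ rather than the claimed $6\eps n$; your statement that the uncovered mass ``gets absorbed into the $6\eps$ budget'' is not an argument. The reason the paper lands at $2\eps+5\eps n\leq 6\eps n$ is that the $Q$-construction makes every unverified state exactly consistent with its estimate (up to $5\eps$) inside the induction, so the mass of bad states is paid \emph{once} as $|\E[Q]-\E[B]|\leq 2\eps$ rather than once per layer as in your telescoping. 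If you want to keep the telescoping, you either have to accept a weaker constant (which would propagate into \Cref{thm:bb_lc} and \Cref{thm:equiv_formal}) or reorganize so that the error from unreached states is charged to the event of \emph{first} reaching such a state, mimicking the $Q$-construction.
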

To prove \Cref{lem:sound}, we first define states that are not verified:
\begin{definition}
    For every $x\in \zo^s$ and $i<n$, let $v=v_i(x)$ be an \emph{unverified} state if there is some $b\in \zo$ such that there is no $x'\in \zo^s$ satisfying $B[v,b]=v_{i+1}(x')$, and otherwise let $v$ be \emph{verified}. Let $v_n(x)$ be verified for every $x$. Note that for an unverified state there still could be $x'$ such that $B[v,b]\sim v_{i+1}(x')$, but we do not use this in the proof of soundness.
\end{definition}
We first show that the probability of reaching an unverified state is small.
\begin{lemma}\label{lem:reachsmall}
    Let $T$ be the event of reaching an unverified state in $B$. Then $\Pr[T(U_n)=1]<2\eps$.
\end{lemma}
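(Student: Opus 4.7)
The plan is to exhibit a low-width branching program $B''$ that must be evaded by every seed of the HSG, and then show that the event $T$ can only happen if $B''$ is made to accept by a uniform walk. Specifically, for each layer $j\in\{0,1,\ldots,n\}$ define the \emph{missed set}
\[
S_j \;:=\; V_j \setminus \{v_j(x) : x\in \zo^s\},
\]
consisting of the states at layer $j$ that no HSG seed ever reaches. By construction, for every $x \in \zo^s$ and every $j$ the walk induced by $H(x)$ sits at $v_j(x)\notin S_j$.

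Next, I would build an OBP $B''$ of length $n$ and width at most $w+1\leq w^2$ (for $w\geq 2$; when $w=1$ every state is trivially verified and $T$ is empty) that simulates the transitions of $B$ starting from $\vs$ but carries one additional absorbing ``accept'' state per layer. The program $B''$ moves into that absorbing state the first time the simulated walk enters a vertex of $\bigcup_j S_j$, and rejects otherwise. Because no seed of $H$ ever drives the walk into $\bigcup_j S_j$, we have $B''(H(x))=0$ for every $x\in\zo^s$. Since $B''$ has width at most $w^2$ and length $n$, applying the HSG property assumed in \Cref{thm:bb_lc} to $B''$ yields
\[
\Pr_{z\la U_n}[B''(z)=1] \;<\; \eps .
\]

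Finally, I would bound $\Pr[T(U_n)=1]$ by the acceptance probability of $B''$. Conditioned on $T=1$, let $I$ be the minimum layer at which the uniform walk in $B$ sits in an unverified state $v$. By definition of unverified, there is a bit $b\in\zo$ with $B[v,b]\in S_{I+1}$. The $(I+1)$-st random bit is independent of the prefix that brought the walk to $v$, so with conditional probability at least $1/2$ the walk steps along edge $b$ into $S_{I+1}$, causing $B''$ to accept. Taking expectations,
\[
\Pr[B''(U_n)=1] \;\geq\; \tfrac12 \Pr[T(U_n)=1],
\]
which combined with the previous display gives $\Pr[T(U_n)=1]<2\eps$, as required.

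The main point to get right is the width bookkeeping and the independence argument in the last step. Defining $I$ as the \emph{first} layer at which the walk enters an unverified state is what lets us invoke independence of the next random bit without any double counting over layers, and the extra absorbing state raises the width by only $1$, keeping $B''$ within the width budget for which $H$ is guaranteed to be a hitting set.
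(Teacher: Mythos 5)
Your proposal is correct and follows essentially the same route as the paper: the paper likewise defines a width-$(w+1)$ program that accepts upon reaching a state the HSG does not hit, deduces its acceptance probability is below $\eps$ from the hitting property, and observes that conditioned on reaching an unverified state the walk enters an unhit state with probability at least $1/2$. Your first-hitting-layer decomposition just makes the paper's conditioning step explicit.
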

\begin{proof}
    Let $R$ be the width $w+1$ program that is the same as $B$ except it accepts if and only if we reach a state not hit by the HSG. We have $\Pr[R(U_n)=1]<\eps$ by the goodness of the HSG. Furthermore, conditioned on reaching an unverified state in $B$, we have probability at least $1/2$ of reaching a state the HSG does not hit. Thus, $\eps > \Pr[R(U_n)=1] \geq \Pr[T(U_n)=1]/2$.
\end{proof}

We now construct a branching program such that the estimates for unverified states are consistent with the true probabilities of these states.
\begin{lemma}\label{lem:qdef}
    There exists an ordered branching program $Q:\zo^n\ra\zo$ on a superset of the vertices of $B$ such that:
    \begin{enumerate}
        \item $|\E[Q]-\E[B]|\leq 2\eps $.
        \item $Q$ is identical to $B$ when restricted to edges between verified states, and edges from verified states to unverified states.
        \item For every unverified state $v$ in $B$, for every $x\in \zo^s$ such that $v=v_i(x)$, we have $|\tp_{x,i}-q_{v\ra}|\leq 5\eps$, where $q_{v\ra}$ is the probability of accepting from $v$ in $Q$.
    \end{enumerate}
\end{lemma}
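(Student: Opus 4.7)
I will build $Q$ by local surgery at each unverified state of $B$, keeping the rest of $B$ intact and forcing the acceptance probability of $Q$ from each unverified $v$ to equal the corresponding estimate. For every unverified state $v$ in some layer $i$ of $B$, pick a canonical seed $x_v \in \zo^s$ with $v_i(x_v) = v$, and set the target $\pi_v := \tp_{x_v,i}$. Take the vertex set of $Q$ to be $V(B)$ together with, for each such $v$, a chain of fresh auxiliary states $\hat{v}^{(i+1)}, \hat{v}^{(i+2)}, \ldots, \hat{v}^{(n)}$ placed respectively in layers $i+1, \ldots, n$ of $Q$. Redirect both out-edges of $v$ to $\hat{v}^{(i+1)}$; for each $i < j < n$, route both out-edges of $\hat{v}^{(j)}$ to $\hat{v}^{(j+1)}$; and label the terminal state $\hat{v}^{(n)}$ with the real number $\pi_v \in [0,1]$, which the OBP definition in the Preliminaries explicitly permits. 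Then the walk from $v$ in $Q$ deterministically reaches $\hat{v}^{(n)}$, so $q_{v\ra} = \pi_v$ exactly. All other vertices and edges of $B$ are preserved.

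Item (2) is then immediate, since we only alter the two out-edges of unverified states: verified-to-verified edges, and edges from verified states into unverified states (which enter the original vertex $v \in V(B)$ as in $B$), are left untouched. For item (1), I would couple the computations of $Q$ and $B$ on a common random input $y \sim U_n$. The two walks are identical until (and unless) the walk reaches an unverified state of $B$, after which the outputs can differ by at most $1$. By \Cref{lem:reachsmall} this event has probability at most $2\eps$, so
\[
|\E[Q] - \E[B]| \;\leq\; \Pr_{y}[\text{walk reaches an unverified state of } B] \;\leq\; 2\eps.
\]

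For item (3), by construction $q_{v\ra} = \tp_{x_v,i}$. Since every state is trivially indistinguishable from itself ($v \sim v$), for any other $x$ with $v_i(x) = v$ Test \ref{t2} of \Cref{def:test} yields $|\tp_{x,i} - \tp_{x_v,i}| \leq 5\eps$, and hence $|\tp_{x,i} - q_{v\ra}| \leq 5\eps$, as desired. The main technical delicacy is the use of real-valued leaf labels to realize $q_{v\ra} = \pi_v$ exactly; restricting $Q$ to binary labels would only guarantee precision $2^{-(n-i)}$ from layer $i$, which would add a slack that must be absorbed either by choosing $\pi_v$ as the midpoint of $\{\tp_{x',i}: v_i(x') = v\}$ (a set of diameter at most $5\eps$ by Test \ref{t2}) or by relaxing the constant in the lemma's bound; the cleaner real-labeled construction above sidesteps this issue entirely.
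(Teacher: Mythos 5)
Your proposal is correct and follows essentially the same route as the paper: perform local surgery at each unverified state to force its acceptance probability to a value within $5\eps$ of all its estimates (which exist by Test~\ref{t2}), get Item (2) by construction and Item (1) from \Cref{lem:reachsmall} via the coupling you describe. The one pleasant difference is that your deterministic chain to a real-labeled sink concretely realizes the step the paper handles with an ``arbitrarily complex set of states'' and a footnote conceding this may require a probabilistic program; your use of the real-valued output labels permitted in the preliminaries sidesteps that caveat cleanly.
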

\begin{proof}
    We first construct $Q$. Let $N$ be the set of unverified states of $B$. For every $v\in N$ in layer $i$, note that for every $x,x'\in \zo^s$ satisfying $v=v_i(x)=v_i(x')$ we have $|\tp_{x,i}-\tp_{x',i}|\leq 5\eps$ by \Cref{t2}. Let $q_{v\ra}$ be a number satisfying $|q_{v\ra}-\tp_{x,i}|\leq 5\eps$ for every such $x$. We now modify $B$ by wiring both edges from $v$ to a new (arbitrarily complex) set of states such that $v$ now has probability of accepting exactly $q_{v\ra}$,\footnote{Technically this may not be possible without making $Q$ a probabilistic branching program. However, this construction purely exists to analyze the probabilities $q_{v\ra}$, so we ignore this minor complication.} and we do this for every unverified $v$. Let $Q$ be this new branching program. It is clear by construction that $Q$ satisfies Property 2. Furthermore, by \Cref{lem:reachsmall} we have Property 1.
\end{proof}
We now prove \Cref{lem:sound} by showing that the estimates $\tp_{x,i}$ are consistent with the modified program $Q$.
\newcommand{\qvr}{q_{v\ra}}
\begin{lemma}\label{lem:qgood}
    Let $Q$ be defined as in~\Cref{lem:qdef}. Then for every $v$ in $B$ and every $x$ such that $v=v_i(x)$, we have
    $|\tp_{x,i}-\qvr|\leq 5\eps \cdot (n-i).$ In particular, $|\tp_{0,0}-q_{\vs\ra}|\leq 5\eps n$.
\end{lemma}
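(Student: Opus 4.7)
\medskip

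The plan is to prove \Cref{lem:qgood} by downward induction on $i$, from $i=n$ down to $i=0$. For the base case $i=n$, every state in layer $n$ is verified by construction, so $Q$ and $B$ agree on the output labels at that layer; combined with test \Cref{t3}, this gives $\tp_{x,n}=B(H(x))=q_{v_n(x)\to}$, matching the bound $5\eps(n-n)=0$.

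For the inductive step, fix $v=v_i(x)$ with $i<n$ and split into two cases. If $v$ is unverified, then Property 3 of \Cref{lem:qdef} gives $|\tp_{x,i}-\qvr|\leq 5\eps\leq 5\eps(n-i)$ directly, and we are done. If $v$ is verified, then by definition there exist seeds $x_0,x_1\in\zo^s$ with $v_{i+1}(x_b)=B[v,b]$ for both $b\in\zo$. In particular $B[v_i(x),b]\sim v_{i+1}(x_b)$, so test \Cref{t1} supplies
\[\left|\tp_{x,i}-\frac{\tp_{x_0,i+1}+\tp_{x_1,i+1}}{2}\right|\leq 5\eps.\]
Since $v$ is verified, Property 2 of \Cref{lem:qdef} ensures that the out-edges of $v$ are identical in $B$ and $Q$, hence $\qvr=(q_{B[v,0]\to}+q_{B[v,1]\to})/2=(q_{v_{i+1}(x_0)\to}+q_{v_{i+1}(x_1)\to})/2$. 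The inductive hypothesis applied to $(x_b,i+1)$ then yields $|\tp_{x_b,i+1}-q_{v_{i+1}(x_b)\to}|\leq 5\eps(n-i-1)$ for each $b$, and a triangle inequality gives
\[|\tp_{x,i}-\qvr|\leq 5\eps+5\eps(n-i-1)=5\eps(n-i),\]
completing the induction. The ``in particular'' statement follows by taking $i=0$ and $x=0^s$, so that $v_0(0)=\vs$.

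The proof is essentially mechanical once the right invariant (\Cref{lem:qdef}) is in place; the only subtle point is that test \Cref{t1} is phrased using $\sim$ rather than equality, but this slack is in our favor since we only invoke it with $x_0,x_1$ chosen so that equality actually holds, which is possible precisely because $v$ is verified. No further obstacle arises.
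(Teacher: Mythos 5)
Your proof is correct and follows essentially the same route as the paper's: downward induction on the layer, splitting on whether $v$ is verified, invoking \Cref{t1} together with Property 2 of \Cref{lem:qdef} in the verified case and Property 3 in the unverified case. The observation that \Cref{t1} applies because equality implies indistinguishability is exactly the point the paper relies on as well.
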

\begin{proof}
    The case $i=n$ holds by \Cref{t3} of \Cref{def:test} (and the fact that all final layer states are unmodified).
    Now assume this holds for layer $i+1$. Then for every $v=v_i(x)$ in layer $i$, we have two possibilities:
    \begin{itemize}
        \item Case 1: $v$ is unverified. In this case, $|\tp_{x,i}-\qvr|\leq 5\eps$ by~\Cref{lem:qdef}.
        \item Case 2: $v$ is verified. For $b\in \zo$, let $v_b:=B[v,b]$ and let $x_b$ be such that $v_b=v_{i+1}(x_b)$ (and note that such $x_0,x_1$ exist because we are not in Case 1). Then:
        \begin{align*}
            |\tp_{x,i}-\qvr| &= \left|\tp_{x,i}-\frac{q_{v_0\ra}+q_{v_1\ra}}{2}\right|\\
            &\leq 5\eps + \left|\frac{\tp_{x_0,i+1}+\tp_{x_1,i+1}}{2}-\frac{q_{v_0\ra}+q_{v_1\ra}}{2}\right| && \text{(\Cref{t1})}\\
            &\leq 5\eps + \frac{1}{2}\left|q_{v_0\ra}-\tp_{x_0,i+1}\right|+\frac{1}{2}\left|q_{v_1\ra}-\tp_{x_1,i+1}\right|\\
            &\leq 5\eps\cdot (n-i) && \text{(Induction.)} \qedhere
        \end{align*} 
    \end{itemize}
\end{proof}
We now conclude the proof of \Cref{lem:sound}.
\begin{proof}[Proof of \Cref{lem:sound}]
    We have:
    \begin{align*}
        |\E[B]-\tp_{0,0}| &\leq 2\eps + |\E[Q]-\tp_{0,0}| && \text{(\Cref{lem:qdef})}\\
        &\leq 2\eps + 5\eps n && \text{(\Cref{lem:qgood}).} \qedhere
    \end{align*}
\end{proof}

We now conclude \Cref{thm:bb_lc}.
\begin{proof}[Proof of~\Cref{thm:bb_lc}]
    Given $H:\zo^s\ra\zo^n$ and the estimates $\{\tp_{x,i}\}_{x\in \zo^s,i\in [n]}$, we run the tests as specified in \Cref{def:test}. All such tests can be implemented in space $O(s+\log n)$, as we now explain. Given $x\in \zo^s$ and $b\in \zo$, we can determine if $x'\in \zo^s$ satisfies $B[\vs,H(x)_{1..i}b]\sim v_{i+1}(x')$ by enumerating over $y\in \zo^s$ and computing 
    \[\bigwedge_{y\in \zo^s}\mathbb{I}[B(H(x)_{1..i}(x)bH(y)_{1..n-i-1}) = B(H(x')_{1..i+1}H(y)_{1..n-i-1})].\] 
    This can be implemented in space $O(s+\log n)$ given black-box access to $B$, and hence we can determine which \Cref{t1} tests to run in the desired space bound, and similar reasoning applies to the \Cref{t2} and \Cref{t3} tests.

    Finally, if all such tests pass, output $\tp_{0,0}$. By \Cref{lem:complete} we have that the completeness condition holds, and by \Cref{lem:sound} we have that the soundness condition holds.
\end{proof}

\subsection{Putting It All Together}
We now prove \Cref{thm:equiv_formal} from \Cref{thm:bb_lc}. It remains to show that we can generate a good set of estimates $\{\tp_{x,i}\}$ using a hitting set. We first show that we can modify \Cref{thm:bb_lc} to only take in $nw$ estimates, rather than $n\cdot 2^s$. This is not required for \Cref{thm:equiv:int}, but it improves the parameters in the case that $H$ is a highly nontrivial yet non-optimal hitting set.

To do so, we first observe that the indistinguishably relation induces a set of equivalence classes on the seeds:
\begin{definition}\label{def:equiv_class}
    Given an OBP $B$ of width $w$ and length $n$ and $H:\zo^s\ra\zo^n$, for every $i$ let $C_{1,i},\ldots,C_{w,i}\subset \zo^s$ be the equivalence classes (with possibly some empty) of $\zo^s$ under the indistinguishably relations $x\sim_i x'$ iff $v_i(x)\sim v_i(x')$. Let $y_{j,i}\in\zo^s$ be the lexicographically first element of $C_{j,i}$, and WLOG assume $y_{1,i}< y_{2,i}< \ldots< y_{w,i}$ for every $i$. Moreover, let $v_{j,i}:=v_i(y_{j,i})$. Note that given $B$ and $H$, $y_{j,i}$ (and hence an HSG output that reaches $v_{j,i}$) is constructible in space $O(s+\log n)$ given $i,j$.
\end{definition}

\begin{corollary}\label{cor:smalltester}
    There is a deterministic space $O(s+\log n)$ algorithm that, given an explicit $\eps$-HSG $H:\zo^s\ra\zo^n$ for length $n$, width $w^2$ branching programs, oracle access to an OBP $B$ of width $w$ and length $n$, and estimates $\{\tp_{j,i}\}_{j\in [w],i\in [n-1]}$, either outputs a value or rejects. Moreover:
    \begin{enumerate}
        \item If $|p_{v_{j,i}\ra}-\tp_{j,i}|\leq \eps$ for every $j,i$, where $v_{j,i}$ is as defined in \Cref{def:equiv_class}, then the algorithm outputs a value.
        \item If the algorithm outputs $\delta$, $|\E[B]-\delta|\leq 6\eps n$.
    \end{enumerate}
\end{corollary}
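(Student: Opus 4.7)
The plan is to reduce \Cref{cor:smalltester} to \Cref{thm:bb_lc} by ``expanding'' the compact family of $w(n-1)$ estimates into the full $2^s \cdot n$ family required there, and generating the expanded estimates on the fly. Concretely, for each $(x, i) \in \zo^s \times [n]$ define a virtual estimate $\tp'_{x,i}$ as follows: for $i < n$, set $\tp'_{x,i} := \tp_{j(x,i), i}$, where $j(x,i) \in [w]$ is the index of the equivalence class of $v_i(x)$ under $\sim$ (cf.\ \Cref{def:equiv_class}); and for $i = n$, set $\tp'_{x,n} := B(H(x))$. Then invoke the tester of \Cref{thm:bb_lc} on $\{\tp'_{x,i}\}$ and return whatever it returns.

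The main thing to verify is that $\tp'_{x,i}$ can be computed from $(x,i)$ in space $O(s + \log n)$, so that composing this reduction with the $O(s + \log n)$-space tester of \Cref{thm:bb_lc} stays within budget. Given $(x,i)$, we locate the lex-first representative $y_{j,i}$ of the class of $v_i(x)$ by enumerating $x' \in \zo^s$ in lex order and, for each candidate, testing indistinguishability: enumerate every suffix seed $y \in \zo^s$ and check
\[
B\bigl(H(x)_{1..i}\, H(y)_{1..n-i}\bigr) \;=\; B\bigl(H(x')_{1..i}\, H(y)_{1..n-i}\bigr)
\]
using oracle access to $B$. A $\log w$-bit counter then converts $y_{j,i}$ into the class index $j$ by enumerating the lex-first representatives of the classes below it. Each of these steps fits in space $O(s + \log n)$, and is essentially the same indistinguishability manipulation already present in the proof of \Cref{thm:bb_lc}, so no new machinery is required.

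For completeness, assume $|p_{v_{j,i}\ra} - \tp_{j,i}| \leq \eps$ for every $j, i$. Fix any seed $x$ and layer $i < n$, and let $j := j(x,i)$; by construction $v_i(x) \sim v_{j,i}$, so \Cref{clm:consis} gives $|p_{v_i(x)\ra} - p_{v_{j,i}\ra}| \leq \eps$, and hence
\[
|\tp'_{x,i} - p_{v_i(x)\ra}| \;\leq\; |\tp_{j,i} - p_{v_{j,i}\ra}| + |p_{v_{j,i}\ra} - p_{v_i(x)\ra}| \;\leq\; 2\eps.
\]
Combined with $\tp'_{x,n} = B(H(x))$ at the final layer, this is precisely the completeness hypothesis of \Cref{thm:bb_lc}, so the underlying tester outputs a value.

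Soundness is inherited directly: whenever the composed algorithm outputs $\delta$, \Cref{thm:bb_lc} guarantees $|\E[B] - \delta| \leq 6\eps n$. The only real obstacle here is the space-efficient on-the-fly computation of the class indices $j(x,i)$, which the enumeration above handles; beyond that the proof is a book-keeping reduction to the theorem already established.
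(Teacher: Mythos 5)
Your proposal is correct and is essentially identical to the paper's proof: both expand the $w\times(n-1)$ estimates by copying each class estimate to every seed in that equivalence class (computing $B(H(x))$ exactly at the final layer), invoke the tester of \Cref{thm:bb_lc}, and use \Cref{clm:consis} to get the $2\eps$ completeness bound. Your explicit treatment of the on-the-fly computation of the class index in space $O(s+\log n)$ is a detail the paper delegates to the remark in \Cref{def:equiv_class}, but the argument is the same.
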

\begin{proof}
    The tester simply takes in the estimate $\tp_{j,i}$ for $p_{v_{j,i}\ra}$, copies it to be the estimate for every seed in the $j$th equivalence class for layer $i$, perfectly computes $p_{x,n}:=B(H(x))$ for every $x\in \zo^s$, and runs \Cref{thm:bb_lc}. Clearly if the tester outputs a value it is within $6\eps n$ of $\delta$, as we simply restrict the inputs to \Cref{thm:bb_lc}. Furthermore, note that for an arbitrary $v:=v_i(x)$ in equivalence class $C_{j,i}$, we have by \Cref{clm:consis}:
    \[
    |p_{v\ra}-\tp_{j,i}|\leq \eps+|p_{v_{j,i}\ra}-\tp_{j,i}|\leq 2\eps
    \]
    and hence if $|p_{v_{j,i}\ra}-\tp_{j,i}|\leq \eps$ is satisfied for all $i$ and $j$ we satisfy the completeness condition of \Cref{thm:bb_lc}, and so the tester will return a value.
\end{proof}

We now argue that there is a hitting set string that can be used to produce good estimates for $p_{v_{j,i}\ra}$, where $v_{j,i}$ is as defined in \Cref{def:equiv_class}.
The argument that such an output exists is a straightforward modification of the proof in Cheng and Hoza~\cite{CH20} that there exists an HSG output inducing estimates that satisfy their local consistency test.
\newcommand{\EST}{\mathtt{EST}}
\begin{lemma}[\cite{CH20}]\label{lem:goodest}
    For every OBP $B$ of length $n$ and width $w$ and $H:\zo^s\ra\zo^n$ and $\eps>0$, there exists $t=O(\log(nw)/\eps^2)$ and an OBP $\EST:\zo^{n\times w\times tn}\ra \zo$ of length and width $\poly(nw/\eps)$ defined as follows:
    \[
    \EST(z_{1,1},\ldots,z_{w,n}) = \bigwedge_{i\in [n],j\in [w]}\EST_{j,i}(z_{j,i})
    \]
    where $\EST_{j,i}(z_{j,i})$ computes as follows. It interprets $z_{j,i}$ as $t$ samples of length $n$, and computes
    \[
    \EST_{j,i}(s_1,\ldots,s_t) = \mathbb{I}\left[\Pr_{k\in [t]}[B[v_{j,i},s_k]=\va] \in [p_{v_{j,i}\ra}-\eps,p_{v_{j,i}\ra}+\eps]\right].
    \]
    where $v_{j,i}$ is as defined in \Cref{def:equiv_class} in terms of $H$).
    Then $\E[\EST]>1/2$, and for every $z$ such that $\EST(z)=1$, for every $j,i$, using the samples in block $z_{j,i}$ of $z$ to estimate the acceptance probability from $v_{j,i}$ produces an estimate with at most $\eps$ additive error.
\end{lemma}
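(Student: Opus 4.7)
The plan is to prove this via two essentially independent pieces: a Hoeffding-and-union-bound argument giving $\E[\EST]>1/2$, and an explicit construction verifying that $\EST$ can be realized as an OBP of the stated polynomial size. The soundness clause --- that every $z$ with $\EST(z)=1$ yields good per-block estimates --- is immediate from the definition of $\EST_{j,i}$ and requires no separate argument.

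First I would handle the probability bound. Fix $(j,i)$ and view block $z_{j,i}$ as $t$ i.i.d.\ uniform length-$n$ strings $s_1,\ldots,s_t$. Then the indicators $\mathbb{I}[B[v_{j,i},s_k]=\va]$ are i.i.d.\ Bernoulli of mean exactly $p_{v_{j,i}\ra}$ (since $v_{j,i}$ is a fixed state, independent of $z_{j,i}$). Hoeffding's inequality bounds the probability that the empirical mean deviates from $p_{v_{j,i}\ra}$ by more than $\eps$ by $2e^{-2\eps^2 t}$. Choosing $t=\Theta(\log(nw)/\eps^2)$ makes this at most $1/(2nw)$, and a union bound over the $nw$ pairs $(j,i)$ yields $\Pr[\EST(z)=0]\le 1/2$, so $\E[\EST]>1/2$ as required.

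Next I would realize $\EST$ as an OBP. Since the lemma only requires existence (the program is not used uniformly --- it exists so that a hitting set hits it), I may hardcode into the transition function the representative state $v_{j,i}$ and an $O(1/t)$-precise approximation of $p_{v_{j,i}\ra}$ for every $(j,i)$. Block $\EST_{j,i}$ reads the $tn$ bits of $z_{j,i}$ in order: for each length-$n$ sample it simulates $B$ starting from $v_{j,i}$ (needing $O(\log w)$ bits of state) and on acceptance increments an integer counter (needing $O(\log t)$ bits). At the end of the block it compares the counter against the hardcoded threshold $t\cdot p_{v_{j,i}\ra}$ and ANDs the result into a single running-conjunction bit that is threaded through the whole program. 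Each block thus has length $tn$ and width $\poly(w,t)$, so concatenating the $nw$ blocks gives an OBP of length $tn^2w$ and width $\poly(w,t)$, both $\poly(nw/\eps)$.

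There is no serious obstacle here; the probabilistic content is a textbook concentration-plus-union-bound. The one point requiring modest care is ensuring that the intra-block comparison at the end of each $\EST_{j,i}$ fits into the OBP without inflating the width, which is routine because the threshold is a single pre-computed integer and only the running conjunction bit needs to persist across blocks --- the inner simulation state and counter can be discarded between blocks.
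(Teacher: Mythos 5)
Your proposal is correct and follows essentially the same route as the paper's (quite terse) proof sketch: hardcode $v_{j,i}$ and the true acceptance probabilities non-uniformly, realize each $\EST_{j,i}$ by simulating $B$ from $v_{j,i}$ with an $O(\log t)$-bit counter and threading a single conjunction bit, and establish $\E[\EST]>1/2$ by a Chernoff/Hoeffding bound plus a union bound over the $nw$ blocks. You simply fill in the concentration calculation that the paper leaves implicit, so there is nothing further to add.
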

\begin{proof}[Proof Sketch]
    It is clear that $\EST_{j,i}$ can be computed by an ordered branching program of the claimed length and width, by duplicating the subprogram of $B$ starting from $v_{j,i}$ and counting the number of satisfied trials using an additional $O(\log(t))$ bits of memory, and accepting if the final count is within the specified range. Thus the conjunction $\EST$ can be computed in the claimed space bound. We then choose $t$ sufficiently large such that a random input satisfies all these checks with overwhelming probability. We note that $\EST$ is defined in terms of the exact probabilities of acceptance, which the tester does not have, but we only need that the program exists, not that we can construct it.
\end{proof}

Then the proof of \Cref{thm:equiv_formal} follows.
\begin{proof}[Proof of \Cref{thm:equiv_formal}]
    By a standard reduction (see e.g.~\cite{CH20}), $\cH$ implies an explicit family of $\eps$-hitting sets for length $n$, width $w$ OBPs with seed length $s(\poly(nw/\eps))=O(s(nw/\eps))$ (where the final equality follows as for any $s(n)=\Omega(\log^2 n)$ the theorem is trivial by the fact that the Nisan PRG exists, so we may assume this is not the case).
    
    Let $H$ be a $\eps/6n$-HSG for length $n$ and width $w^2$ OBPs with seed length $O(s(nw/\eps))$. Let $H_2$ be a $1/3$-HSG for length $n^2wt=\poly(nw/\eps)$, width $(nw/\eps)^c$ OBPs, where $t$ is as in \Cref{lem:goodest} with $\eps:=\eps/6n$. 
    By choice of parameters, $H_2$ has seed length $s_2:=O(s(nw/\eps))$. The sampler enumerates over every $z\in \zo^{s_2}$. For every such $z$, the sampler calls the tester of \Cref{cor:smalltester}, and when an estimate $\tp_{j,i}$ for $p_{v_{j,i}\ra}=\E_{r\sim_R\zo^{n-i}}[B[v_{j,i},r]]$ is required by the tester, we use the $j,i$ block of $H_2(z)$ (as done by $\EST_{j,i}$ in \Cref{lem:goodest}) to compute the estimate. Note that we can find $y_{j,i}$, the lexicographically first seed in equivalence class $j$ in layer $i$, in logspace, and by definition $v_{j,i}=v_i(y_{j,i})$. Thus, we can compute $\tp_{j,i}$ by enumerating over the samples $s_1,\ldots,s_t$ in block $j,i$ in $H_2(z)$ and returning 
    \[\E_{k\in [t]}[B(H(y_{j,i})_{1..i}s_k)] = \E_{k\in [t]}[B[v_{j,i},s_k]].
    \]
    If the tester accepts, return the value that the tester outputs, and otherwise increment $z$. The space complexity is $O(\log(n)+s_2)$ by composition of space-bounded algorithms.

    Now suppose the sampler returns a value. By Item 2 of \Cref{thm:bb_lc}, the returned estimate is within $(\eps/6n)6n=\eps$ of the true expectation. 
    To show the sampler returns a value, note that by Item 1 of \Cref{thm:bb_lc} it suffices to argue that we give the tester a series of inputs $\{\tp_{j,i}\}$ such that $|p_{v_{j,i}\ra}-\tp_{j,i}|\leq (\eps/6n)$ for every $i,j$. But these are precisely the estimates generated by a string $x$ such that $\EST(x)=1$, and $H_2$ hits this program by choice of parameters, so we conclude.
\end{proof}

\bibliographystyle{alpha}
\bibliography{ref}

\end{document}